\documentclass[11pt]{article}
\usepackage[letterpaper, left=1in, top=1in, right=1in, bottom=1in,nohead,includefoot]{geometry}
\usepackage{color,charter,graphicx,here,setspace,multirow,amsmath,amssymb,amsfonts,enumitem,placeins,amsthm,cancel,mathrsfs,bm,pdflscape,subcaption}
\usepackage[utf8]{inputenc}
\usepackage[authoryear]{natbib}
\usepackage{adjustbox}
\usepackage{indentfirst}
\usepackage{mathrsfs}
\usepackage{amssymb}
\usepackage{amsmath}
\usepackage{bbm}
\usepackage{mathtools}
\usepackage{ascmac}
\usepackage{amsthm}
\usepackage{cmll}
\usepackage{stmaryrd}
\usepackage{caption}
\usepackage{lscape}
\usepackage{subcaption}
\usepackage{natbib}
\usepackage{setspace}
\usepackage{bm}
\usepackage{url}
\usepackage{booktabs}
\usepackage{color}
\usepackage{enumitem}

\usepackage{times}

\usepackage[dvipsnames]{xcolor}
	\usepackage[pdftex]{hyperref}
	\hypersetup{colorlinks,
	citecolor=NavyBlue  
	}

\usepackage{subfiles}

\usepackage[colorinlistoftodos,prependcaption,textsize=tiny]{todonotes}

\usepackage{mathrsfs}
\usepackage{amssymb}
\usepackage{amsmath}
\usepackage{mathtools}
\usepackage{ascmac}
\usepackage{amsthm}
\usepackage{multirow}
\usepackage{proba}
\usepackage{subcaption}
\usepackage{adjustbox}

\usepackage{graphicx}
\usepackage{natbib}
\usepackage{setspace}
\usepackage{bm}
\usepackage{url}

\usepackage{color}

\usepackage{times}

\theoremstyle{definition}
\newtheorem{definition}{Definition}[section]
\newtheorem{assumption}{Assumption}[section]

\newtheorem{lem}{Lemma}[section]
\newtheorem{thm}{Theorem}[section]

\def\C{\mbox{\boldmath$C$}}

\def\zero{\mbox{\boldmath$0$}}

\def\x{{\bm{x}}}

\def\btheta{{\bm{\theta}}}

\def\X{\mbox{\boldmath$X$}}
\def\W{\mbox{\boldmath$W$}}

\def\m{\mbox{\boldmath$m$}}

\def\eqno#1{eqn.~(\ref{eq:#1})}
\def\eqsno#1{eqns.~(\ref{eq:#1})}
\def\Eqno#1{Eqn.~(\ref{eq:#1})}

\def\treat{t_c}

\usepackage{subfiles}

\usepackage{tikz}
\usetikzlibrary{arrows.meta,positioning,shapes,calc}

\begin{document}

\title{Dynamic causal inference with time series data}
\author{
Tanique Schaffe-Odeleye\thanks{Department of Statistical Science, Fox School of Business, Temple University, Philadelphia, PA 19122. {\scriptsize Email: \texttt{tanique.hudson@temple.edu}}},\,
K\={o}saku Takanashi\thanks{Riken AIP, Tokyo, Japan. {\scriptsize Email: \texttt{kosaku.takanashi@riken.jp}}},\\
Vishesh Karwa, Edoardo M.\ Airoldi, \& Kenichiro McAlinn\thanks{Department of Statistical Science, Fox School of Business, Temple University, Philadelphia, PA 19122. {\scriptsize Emails: \texttt{\{vishesh, airoldi, kenichiro.mcalinn\}@temple.edu}}}
}

\maketitle
\thispagestyle{empty}\setcounter{page}{0}
\begin{abstract}
We generalize the potential outcome framework to time series with an intervention by defining causal effects on stochastic processes. Interventions in dynamic systems alter not only outcome levels but also evolutionary dynamics-- changing persistence, volatility, and transition laws. Our framework treats potential outcomes as entire trajectories, enabling causal estimands, identification conditions, and estimators to be formulated directly on path space. The resulting Dynamic Average Treatment Effect (DATE) characterizes how causal effects evolve through time and reduces to the classical average treatment effect under one period of time. For observational data, we derive a dynamic inverse-probability weighting estimator that is unbiased under dynamic ignorability and positivity. When treated units are scarce, we show that conditional mean trajectories underlying the DATE admit a linear state-space representation, yielding a dynamic linear model implementation. Simulations demonstrate that modeling time as intrinsic to the causal mechanism exposes dynamic effects that static methods systematically misestimate. An empirical study of COVID-19 lockdowns illustrates the framework's practical value for estimating and decomposing treatment effects.

\bigskip{}
{\it Keywords}: Causal inference, Time series, Stochastic process, State-space models
\end{abstract}
\newpage{}

\section{Introduction}
Traditional potential outcome causal inference \citep{rubin1974estimating,rubin2005causal}
treats the treated and control outcomes as random variables realized at (or immediately after)
treatment.
In dynamic systems, interventions alter not only outcome levels but also change the evolutionary dynamics, where interventions may fundamentally alter how a series progresses.
Standard static frameworks, therefore, can fail to capture how causal effects unfold through time, and may be biased when interventions alter dynamics.
Given the rise of interest in causal estimation of time series interventions in macroeconomics, epidemiology, and environmental sciences, a causal framework that directly addresses these issues is critical.

We generalize the potential outcome framework by defining causal effects as properties of entire trajectories rather than single-period observations.  
We consider interventions that alter the dynamics of evolution itself, including changing persistence and transition laws, not merely shifting levels or slopes.
Each unit's outcome is treated as a stochastic process, and potential outcomes are defined over full paths under treatment and control.  
The resulting \emph{Dynamic Average Treatment Effect} (DATE) describes how the expected difference between treated and control processes evolves through time, reducing to the classic ATE under a single period.

Conceptually, our framework is guided by an ideal dynamic randomized experiment that randomizes the intervention assignment at the time of intervention, conditional on the pre-treatment history (here, we focus on a single, persistent intervention).
Each unit’s evolution under treatment and control, post-treatment, could then be observed repeatedly, and the contrast between these two stochastic processes would reveal the true causal process.  
Randomization at the intervention time  would ensure assignment is independent of the post-intervention potential outcome processes, conditional on the pre-intervention history, providing identification by design.  
Observational studies violate this ideal because policies and behaviors adapt to evolving outcomes; treatment decisions depend on evolving histories, policies adapt dynamically, and outcomes influence subsequent exposure.  
Our framework formalizes what the causal estimand, DATE, represents under perfect dynamic randomization and derives identification and estimation strategies for the observational case, paralleling the logic of the classical potential outcome framework.

Time plays a dual role: it indexes dependence among observations and acts as a potential confounder, as each unit's history influences both treatment assignment and outcomes.
Addressing this dual role requires conditioning on the entire information history rather than treating time as an exogenous covariate.  
This perspective clarifies why methods, such as difference-in-differences, synthetic control, or interrupted time series regression, can yield biased estimates when interventions modify the evolution dynamics and not just the mean level or slope.
When interventions alter the dynamics, these methods conflate causal and temporal dependence.

Building on this foundation, we derive estimation methods suited to different empirical settings.  
When multiple comparable units are available, we extend inverse-probability weighting to dynamic settings by defining treatment propensities relative to each unit’s evolving history.  
The resulting \emph{Dynamic Inverse-Probability Weighting} (DIPW) estimator is unbiased and consistent under time-dependent confounding.  
When few, or even a single, treated units are available, a \emph{Dynamic Linear Model} (DLM) admits a state-space representation for the conditional mean dynamics of the DATE, linking the theoretical framework to practical state-space modeling.

This paper makes three contributions:
\begin{itemize}
\itemsep-0.5em
    \item[\textbf{(i)}] \textbf{Foundation:} extend the Rubin causal model to stochastic processes, establishing a coherent basis for causal inference in dynamic systems;
    \item[\textbf{(ii)}] \textbf{Identification:} define conditions for dynamic identification and derive the DIPW estimator, valid under time-varying confounding;
    \item[\textbf{(iii)}] \textbf{Representation:} show that the DLM admits the state-space representation for the conditional mean dynamics underlying the DATE when data are limited.
\end{itemize}

Section~\ref{sec:prelim} contrasts independent-sample and time series experiments and shows how uncertainty arises from both unit selection and temporal evolution.  
Section~\ref{sec:date} formalizes stochastic potential outcomes and the DATE.  
Section~\ref{sec:dipw} develops the DIPW estimator for observational data, and Section~\ref{sec:dlm} shows that the DLM admits a state-space representation for the conditional mean dynamics underlying the DATE.  
Sections~\ref{sec:sim}--\ref{sec:emp} present simulations and an empirical study of the economic effects of the COVID-19 lockdown, and Section~\ref{sec:summ} concludes with implications for dynamic causal discovery and continuous-time extensions.

\paragraph{Related Work.}
Methods for estimating treatment effects in time-dependent settings fall into several lines.  
Difference-in-differences and event-study designs \citep{card1993minimum,angrist2009mostly,lechner2011estimation} exploit repeated cross-sections but assume parallel trends and time-invariant dynamics.  
Synthetic-control approaches \citep{abadie2010synthetic,abadie2015comparative,abadie2021using} construct counterfactual trajectories as weighted combinations (typically treated as fixed once estimated) of controls.  
Interrupted time series analyses \citep{wagner2002segmented,bernal2017interrupted,mcdowall2019interrupted,ewusie2020methods} model pre- and post-intervention levels or slopes, providing descriptive contrasts rather than process-level causal estimands.  
These frameworks typically do not treat time as the carrier of the causal mechanism, in the sense of defining estimands and identification on stochastic path space; instead they impose structure, like parallel trends or deterministic counterfactual trajectories.
Many applied estimators target the average treatment of the treated (ATT); operationally they contrast observed treated outcomes with an estimated control trajectory.

Structural time series methods, including the widely-used CausalImpact framework \citep{brodersen2015inferring}, construct counterfactual predictions by fitting state-space models to pre-intervention data and projecting forward. While computationally convenient and practically influential, these approaches define causal effects implicitly through model-based counterfactual forecasts rather than through explicit potential outcome processes, and often requires control series to produce stable counterfactual forecasts. Our framework provides a formal foundation that clarifies when such model-based approaches target well-defined causal estimands. It also distinguishes two inferential modes that are often conflated in practice: forward-looking counterfactual prediction based only on pre-treatment data, and retrospective causal attribution under an intervention-augmented state-space model fit to the full series. The latter, which is what we propose and apply in our analyses, uses the full dataset to sharpen the decomposition between baseline dynamics (e.g., trend/persistence/seasonality) and the intervention component, yielding DATE directly from the posterior of the intervention effect rather than from an extrapolated pre-period forecast. Both modes are accommodated within our potential-outcomes formulation, though the focus of this paper is the latter.

Recent contributions have begun to recast these ideas in explicitly causal terms.
\cite{bojinov2019time,bojinov2021panel} formalize sequential potential outcomes for repeated experiments (treating autoregression as network interference, and treatment being randomly assigned per time period). \cite{tierney2023multivariate,west2024dynamic,KevinLiEtAlCausalMVTS2024} explore Bayesian, multivariate state-space modeling for counterfactual predictions.
Our contribution provides a unifying probabilistic foundation that defines potential outcomes, identification, and estimation directly on stochastic processes, encompassing existing methods as special cases.

\paragraph{One framework, two regimes.}
We distinguish two empirical regimes that share the same estimand (the DATE) but rely on different identifying assumptions.
\emph{(i) Multi-unit/design-based regime:} when multiple comparable units are available, identification is based on
Dynamic Ignorability (Assumption~\ref{ass:dynamicignore}), i.e., independence conditional on the \emph{pre-intervention} history,
and estimation proceeds via the DIPW reweighting approach in Section~\ref{sec:dipw}.
\emph{(ii) Scarce-treated/model-based regime:} when few (or a single) treated units are available,
DATE estimation is necessarily model-based, and we posit a state-space model for the conditional mean and infer the intervention
component from the full series under the maintained DLM structure (Section~\ref{sec:dlm} onward).


\section{Conceptual and Theoretical Preliminaries}
\label{sec:prelim}

Classical causal inference is formulated for cross-sectional or repeated-sample data, where each unit realizes a single outcome under treatment or control.  
In time series, uncertainty originates both from unit assignment and from the stochastic evolution of each trajectory.
In time series, outcomes evolve, and treatment itself may depend on past realizations.  
The corresponding causal analysis must, therefore, simultaneously take into account two sources of randomness: variation across units and stochastic evolution within each unit.

Throughout this paper, the intervention is a single assignment at time $t_c$,
represented by $Z\in\{0,1\}$, with assignment modeled conditional on the
pre-intervention information $\mathscr{F}^{X}_{t_c}$. The ``dynamic'' aspect lies
in the outcome: post-intervention potential outcomes are trajectories whose
evolution may differ under $Z=1$ versus $Z=0$, rather than a sequence of treatment decisions.

\subsection{Uncertainty in time series}
In the static, i.i.d. setting, sampling is defined as selecting a unit, $\omega_i$ for $i=1,...,N$, from some population $\Omega$.
The randomness comes from possibly selecting another unit instead of the one chosen.
In the time series setting, however, sampling, and thus randomness, do not pertain to just the choice of unit, but also each unit will have its own uncertainty around its evolution.
Thus, from a population, $\Omega$, a unit is selected, $\Omega_i$ for $i=1,...,N$, from which each observation path, $\omega_i$, is drawn.
This uncertainty occurs because a time series is not simply one outcome, but a path of outcomes that affect each other in the direction of time.
This is illustrated in Figure~\ref{fig:sample}.

\begin{figure}[t!]
\centering
\includegraphics[width=0.75\textwidth,clip]{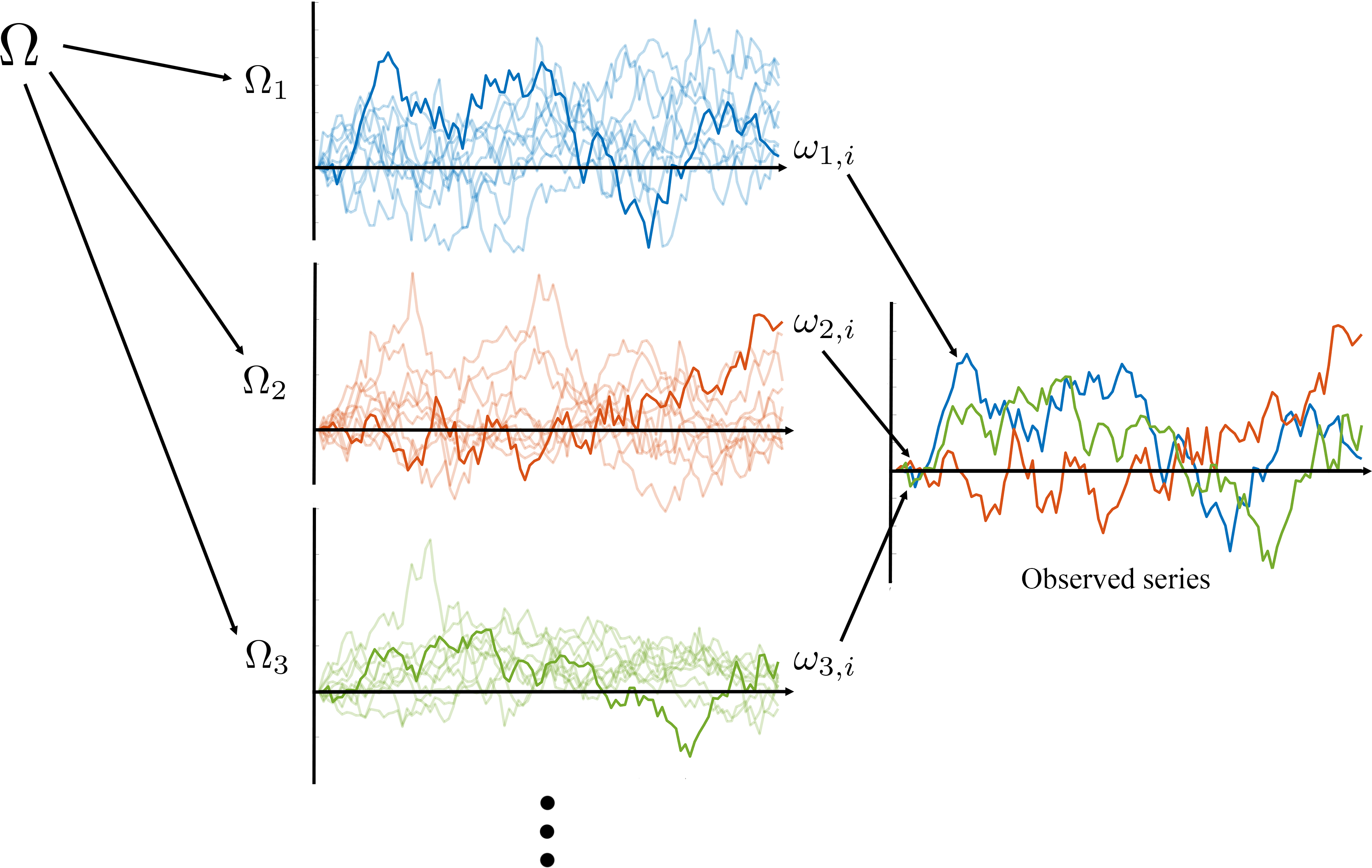}
\caption{Illustration of the sampling mechanism in time series. Unlike the i.i.d. case, even after each unit is selected, each unit will have numerous paths it can take, from which one path is sampled or observed. The lighter color paths are unobserved potential paths, whereas the bold color paths are the observed paths. The collection of the latter is what is observed.
\label{fig:sample}
}
\end{figure}

\subsection{Time as covariate and confounder}

It is often considered that time series data are characterized by trends, autocorrelation, heteroskedastic error, and seasonality, but these characteristics are secondary to the defining property that time itself is a covariate, meaning that the dynamical changes are a function of time itself.
Parameters and variances evolve as functions of time or unobserved latent states, making static models inadequate. 
Any statistical model that has fixed parameters will be unlikely to capture evolving temporal dynamics, even if it incorporates, e.g., autocorrelation terms.

For time series causal inference, time not only indexes dependence among successive observations but also carries information that influences both treatment assignment and outcomes: time is a confounder.  
At time $t$, treatment is rarely assigned at random; its probability depends on the accumulated history of the process.  
Causal inference should, therefore, condition on the evolving information set rather than on static covariates.  
In stochastic process language, this information structure is represented by a \emph{filtration} $\{\mathscr{F}_t\}$, where each $\mathscr{F}_t$ contains all data observed up to time $t$.  
Conditioning on $\mathscr{F}_t$ defines the temporal analogue of unconfoundedness-- \emph{dynamic ignorability} (defined in Assumption~\ref{ass:dynamicignore})-- under which treatment at time $\treat$ is independent of future potential outcomes given the past.  
This filtration captures the progressive revelation of information that drives process-level uncertainty.
If a causal estimation model cannot capture this dual role of time, it can cause bias.
Thus, a coherent framework should treat time as part of the causal mechanism.

Our formulation differs fundamentally from the sequential potential outcome framework of
\citet{bojinov2019time, bojinov2021panel}, where temporal dependence is treated as network interference and causal effects are defined by contrasts of treatment paths rather than stochastic processes. Their estimands do not characterize how interventions alter the probabilistic law of the outcome process (persistence, dynamics, etc.). In contrast, our framework treats each potential outcome as a full stochastic trajectory, allowing interventions to modify the entire dynamic evolution. A detailed comparison is in Supplementary Material~\ref{sm:comp}.

\subsection{Dynamic uncertainty and implications}

Even under perfect randomization, outcomes remain uncertain because future realizations of a stochastic process are inherently unpredictable.  
This recognition clarifies why static regression adjustments or trend corrections can be unsuccessful; they treat realized paths as deterministic counterfactuals, ignoring intrinsic processes.  
Defining potential outcomes as stochastic processes reframes causal inference as the study of how interventions alter transition laws rather than outcome levels.  
Identification depends on conditioning with respect to histories, and estimation necessitates respecting the temporal order of cause and effect under our estimand. 

\subsection{Illustrating example}

Consider a simple, concrete example to illuminate the above discussion. 
Suppose we run two different experiments. 
In experiment 1, we take 1,000 coins and toss them at the same time (presumably by 1,000 different people) and record the outcome of each coin. 
This is an experiment where we have 1,000 different realizations at the same time, $t=0$. 
The average here corresponds to the average at time $t=0$. 
On the other hand, consider picking one coin and tossing this particular coin 1,000 \emph{times}. 
The average of these outcomes is the time-varying average, at a fixed path (which corresponds to the fixed coin). 
This illustrates that averaging across individuals at one point in time differs fundamentally from averaging across time for one individual, underscoring why a dynamic estimand is required.

More generally, consider $N$ people $i = 1, \ldots, N$ each with their own coin, tossing it $T$ times $t = 1, \ldots, T$.  
$\bar{Y}_{i} = \frac{1}{T}\sum_{t=1}^T Y_{t,i}$ denotes the time-averaged proportion of heads of person $i$'s coin. $\bar{Y}_{t} = \frac{1}{N} \sum_{i=1}^N Y_{t,i}$ denotes the person-averaged proportion of heads at time $t$. Now consider an intervention where at time $\treat$, we give a drug to each person that is supposed to make them drowsy. 
The goal is to measure the effect of the drug on the outcome of the coin toss, over time. 
From time $\treat$, each unit now has two potential outcomes $Y_{t,i}(Z_i=1)$ and $Y_{t,i}(Z_i=0)$. 
The causal effect of $Z$ for person $i$ at time $t$ is defined as $\tau_{t,i} = Y_{t,i}(Z_i=1) - Y_{t,i}(Z_i=0)$. 
The time-averaged causal effect of person $i$ is $\bar{\tau}_{i} = \frac{1}{T} \sum_t\tau_{t,i}$. 
The person-averaged causal effect at time $t$ is $\bar{\tau}_{t} = \frac{1}{N} \sum_i\tau_{t,i}$. The overall average causal effect is ${\tau}_{\cdot \cdot} = \frac{1}{TN} \sum_{t,i} \tau_{t,i}$. 

In general, $\bar{\tau}_{i}$ is not equal to $\bar{\tau}_{t}$. 
Moreover, the overall causal effect $\tau$ does not give a complete picture of the time-varying causal effect of the drug on the ability of a person $i$ to toss a coin.

These distinctions motivate the need for a stochastic process definition of potential outcomes, developed in the next section, where each unit's outcome path replaces a single realization in the i.i.d.\ framework.

\section{Potential outcomes and causal estimand in time series}
\label{sec:date}

This section defines causal effects for time-evolving outcomes by extending the potential outcome framework to stochastic processes.
Each unit’s outcome is now a random path indexed by time and treatment path.
We begin with clear definitions of potential outcomes, ignorability, and the dynamic average treatment effect (DATE), followed by an intuitive explanation after each formal statement.

To extend the potential outcomes framework to time series, we view each outcome process as a stochastic trajectory that evolves through time. For every unit, there exist two potential outcome paths: one under treatment, $\left\{ Y_{t}(Z=1)\right\} _{t\in\left[0,T\right]}$ and one under control, $\left\{ Y_{t}(Z=0)\right\} _{t\in\left[0,T\right]}$. 
Randomness, here, arises both across sampled units and within each unit’s stochastic trajectory through time.
Before treatment, these paths coincide, though  they may diverge after treatment, as the intervention may alter the process’s evolution. Our goal is to characterize and estimate the dynamics of this divergence.
This section formalizes that idea by defining time series potential outcomes, the corresponding causal estimand, and the assumptions necessary for identification.

We first state the stable unit treatment value assumption (SUTVA) and consistency, which we maintain throughout.
\begin{assumption}[SUTVA and Consistency]\label{ass:sutva}
(i) \emph{No interference}: For each unit $i$, the potential outcome process depends only on its own treatment assignment: $Y_{t,i}(Z_1, \ldots, Z_N) = Y_{t,i}(Z_i)$ for all $t \in [0,T]$. 
(ii) \emph{No hidden versions of treatment}: There is a single version of each treatment level.
(iii) \emph{Consistency}: The observed outcome equals the potential outcome under the received treatment: $Y_{t,i} = Y_{t,i}(Z_i)$ for all $t \in [0,T]$.
\end{assumption}
Throughout, we maintain Assumption~\ref{ass:sutva}, positivity, and the no-anticipation assumption (i.e., the intervention cannot be anticipated).
Technical details about measurability, cylinder measures, and filtration are provided in the Supplementary Material~\ref{sm:comcepts}.

Let the outcome be a measurable stochastic process, $\left\{ Y_{t}\right\} _{t\in\left[0,T\right]}$.
For the stochastic process, $\left\{ Y_{t}\right\} _{t\in\left[0,T\right]}$, we call the function, $t\mapsto Y_{t}\left(\omega\right)$, when $\omega$ is fixed a {\it sample path} of $\omega$.
The sample path of the observed outcome for each subject is given as $t\mapsto Y_{t}\left(\omega\right)$.

We can define the Rubin causality model as follows.
\begin{definition}[Dynamic Potential Outcomes]
{\it Denote the intervention time by $t_c$ with $0<t_c<T$. The potential outcome processes
$\{Y_t(Z_i=1,\omega)\}_{t\in[0,T]}$ and $\{Y_t(Z_i=0,\omega)\}_{t\in[0,T]}$ satisfy
\[
Y_t(Z_i=1,\omega)=Y_t(Z_i=0,\omega),
\quad \forall\,\omega\in\Omega_i,\ \forall\, t<t_c .
\]}
\end{definition}

The treatment effect of a stochastic process is defined by different statistics regarding the cylinder measure of the potential outcome processes, $\left\{ Y_{t}(Z=1)\right\} ,\left\{ Y_{t}(Z=0)\right\} $; i.e., causal quantity of interest is the expected divergence between the treated and untreated outcome processes at each time $t$ after intervention.
Let $y_0$ denote a unit's initial value. Under observational assignment, units with different initial values may have different probabilities of treatment, so we write $y_0^1$  and $y_0^0$  to denote initial values for treated and control units, respectively. Under randomization, these share a common marginal $p(y_0)$.
The dynamic average treatment effect (DATE) process is a function of $t$ that is defined as a temporal function of the difference in expectations of paths for treated and control.
\begin{definition}[Dynamic Average Treatment Effect]\label{def:date}
{\it The DATE at time $t$ is defined as}
\begin{align}\label{eq:date}
\textrm{DATE}\left(t\right)\triangleq\int\left(\mathbb{E}_{y_{0}^{1}}\left[Y_{t}(Z=1)\right]-\mathbb{E}_{y_{0}^{0}}\left[Y_{t}(Z=0)\right]\right)p\left(y_{0}\right)dy_{0},
 \end{align}
\end{definition}

Here, $\mathbb{E}_{y_{0}^{1}}\left[Y_{t}(Z=1)\right]$ denotes the expected trajectory under treatment $z$ for a unit starting at $y_0$. Under randomization (and sampling over the population), initial values are balanced across treatment groups, so we integrate over the common marginal distribution $p(y_0)$.
Thus, the DATE compares the treated and control processes starting from initial states $y_0$.
The integral averages this contrast over the distribution of possible starting states, so the estimand reflects the population-level causal effect of the intervention on future trajectories.

If all units share the same baseline value, the DATE reduces to the simple difference 
$\mathbb{E}\left[Y_{t}(Z=1)\right]-\mathbb{E}\left[Y_{t}(Z=0)\right]$.
When outcomes are i.i.d. at a single time point, the DATE collapses to the classical average treatment
effect, $\mathbb{E}\left[Y(Z=1)\right]-\mathbb{E}\left[Y(Z=0)\right]$.
Thus, our framework nests the classical Rubin causal model as a special case.

The expectation $\mathbb{E}_{y_{0}^{Z}}\left[\cdot\right]$ averages over the stochastic evolution of paths conditional on initial value $y_{0}$ and treatment status $Z$.
Here, $y_{0}^{1}$ and $y_{0}^{0}$ denote the initial values for treated and control units, respectively, which may differ in observational settings where treatment assignment depends on baseline characteristics; under randomization, these distributions coincide.
The DATE conditions on the initial state $y_0$ and then averages over its marginal distribution, targeting a population-level causal effect. This formulation does not require a Markov assumption: the post-treatment evolution $Y_t(z)$ may depend on the entire pre-treatment path, with those dependencies absorbed into the identification conditions in Section~\ref{sec:dipw}.

Our proposed DATE is illustrated in Figure~\ref{fig:manymany}.
Each light path (for control and treated) is the path a stochastic process can take.
The expectations of the paths are the bolder colored paths.
The DATE is defined as the difference, per $t$, over the entire post-treatment period.

\begin{figure}[t!]
\centering
\includegraphics[width=0.75\textwidth,clip]{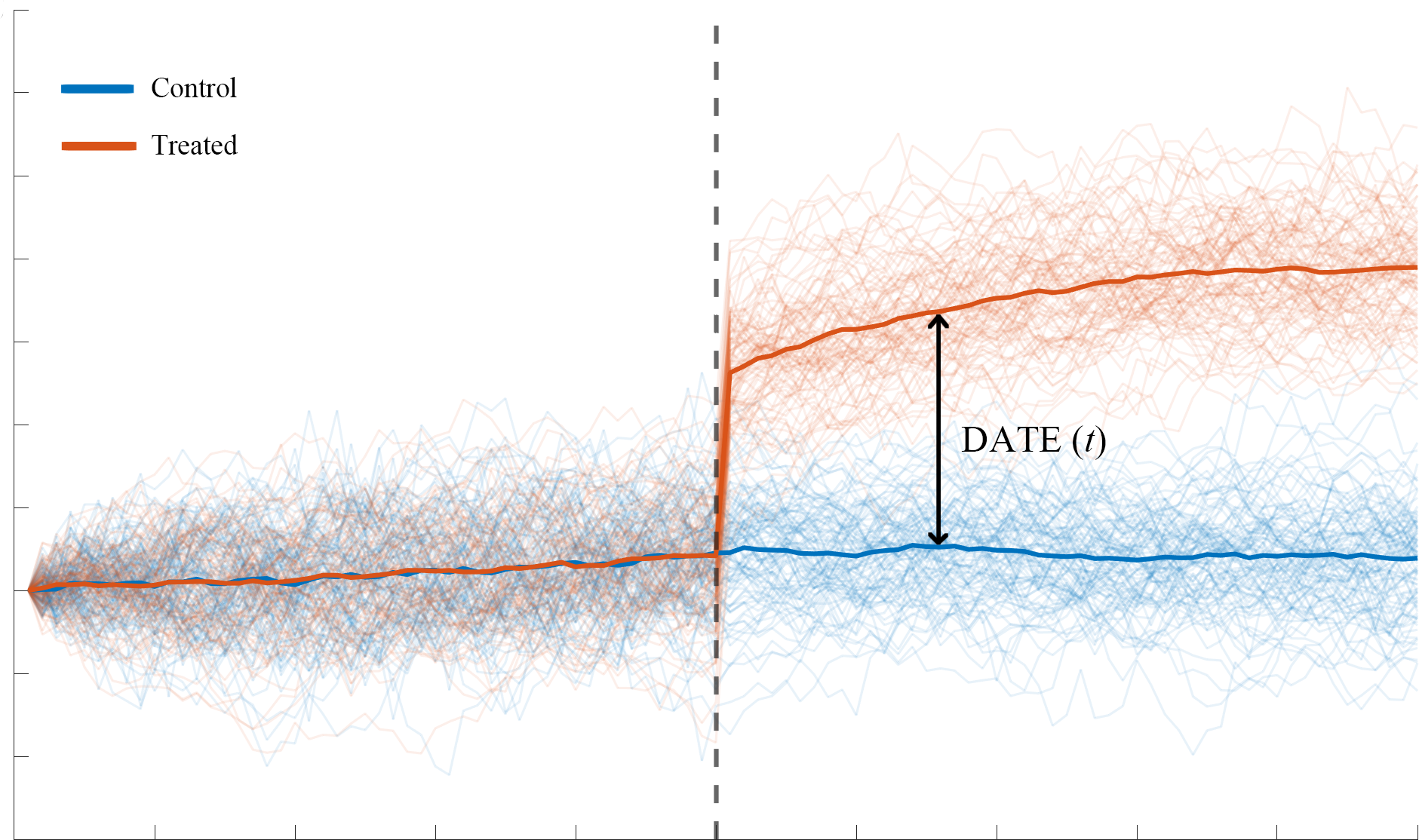}
\caption{Illustration of the proposed DATE. The light paths are the paths the stochastic process can take, and the bold color paths are their expectations. The DATE is defined as the difference between these expectations, per $t$, over the entire post-treatment period.
\label{fig:manymany}
}
\end{figure}

Let $i$ index each subject, and correspond the sample space, $\Omega$, to its subspace, $\Omega_{i}$.
The potential outcome for $i$ is the realization of the sample path, $t\mapsto\left[Y_{t}\left(Z=1,\omega_{i}\right),Y_{t}\left(Z=0,\omega_{i}\right)\right]$, of the measurable stochastic processes, $\left\{ Y_{t}(Z=1),Y_{t}(Z=0)\right\} _{t\in\left[0,T\right]}$, of $\omega_{i}$ drawn from $\Omega_{i}$.
If we let $Z$ be the binary random variable indicating treatment assignment, for each $i$, we have $Z\left(\omega_{i}\right)$.
If $Z\left(\omega_{i}\right)=1$, then $i$ is treated, and if $Z\left(\omega_{i}\right)=0$, then $i$ is not treated (control).
The observed process is 
\[
\left\{ Y_{t}\left(\omega_{i}\right)\right\} =\left\{ Z\left(\omega_{i}\right)Y_{t}\left(Z=1,\omega_{i}\right)+\left(1-Z\left(\omega_{i}\right)\right)Y_{t}\left(Z=0,\omega_{i}\right)\right\} ,
\]
therefore,
\[
Y_{t}\left(\omega_{i}\right)=\begin{cases}
Y_{t}\left(Z=1,\omega_{i}\right)=Y_{t}\left(Z=0,\omega_{i}\right) & \left(t<\treat\right),\\
Z\left(\omega_{i}\right)Y_{t}\left(Z=1,\omega_{i}\right)+\left(1-Z\left(\omega_{i}\right)\right)Y_{t}\left(Z=0,\omega_{i}\right) & \left(t\geqq \treat\right).
\end{cases}
\]

To interpret DATE causally, we require that the treatment assignment process $Z$ be independent of the potential outcome processes:
\begin{equation}
\left\{Y_{t}(Z=1),Y_{t}(Z=0)\right\}_{t\in\left[0,T\right]}\Perp Z.\label{eq:RCT}
 \end{equation}
The assignment, $\{0,1\}$, is determined by which element is drawn from $\Omega_{i}$, and there are as many sample paths regarding the outcome as the number of elements in $\Omega_{i}$.

This ``dynamic randomization'' assumption is the time series analogue of the classical unconfoundedness condition. 
This condition is closely related to the sequential ignorability assumption of, e.g., \cite{bojinov2019time}, but differs in interpretation.
In those frameworks, potential outcomes are deterministic functions of treatment histories,
and all randomness arises from treatment assignment.
Here, $\{Y_t(z)\}_{t\in[0,T]}$ are themselves stochastic processes,
so independence in \eqno{RCT} refers to the joint law of the potential outcome paths and the treatment variable.
This distinction allows us to model process uncertainty and to define causal estimands over entire trajectories rather than fixed sequences.
Further, this condition ensures that, at each time $t$, differences between treated and untreated trajectories can be attributed to the intervention rather than to pre-existing differences in their stochastic evolution.

Our proposed potential outcomes and estimand generalizes the Rubin causality framework by including dynamics of the pre- and post-treatment potential outcomes.
As a generalization, the i.i.d. potential outcomes setting can be interpreted as selecting a single moment in time of the stochastic process.
Thus, this stochastic process foundation allows weighting and modeling procedures (Sections~\ref{sec:dipw}-\ref{sec:dlm}) to operate on evolving paths rather than single-period outcomes.

\section{Covariates and the dynamic IPW estimator}\label{sec:dipw}

In observational time series, treatment assignment is rarely random. The probability of intervention at some time may depend on the unit’s history, trends, or covariates. Such time-dependent confounding violates the independence assumption of the previous section. To adjust for this, we generalize inverse probability weighting \citep{rosenbaum1983central} to dynamic settings, constructing weights that re-balance treatment and control trajectories conditional on the observed confounder processes.
This section formalizes that idea and derives a dynamic weighting estimator that is unbiased and consistent under mild conditions.

Definition~\ref{def:date} defines the DATE as a baseline-standardized mean contrast obtained by averaging
$\mathbb{E}[Y_t(1)| Y_0=y_0]-\mathbb{E}[Y_t(0)| Y_0=y_0]$ over a reference baseline distribution $p(y_0)$.
Throughout Section~4 we target this baseline-standardized estimand.
In the common-baseline case (e.g., by design or after standardization), this reduces to the simpler notation
$\tau_t := \mathbb{E}[Y_t(1)]-\mathbb{E}[Y_t(0)]$.

Denote the confounders as a $J$-dimensional stochastic process, $\left\{ \left(X_{1,t},\cdots,X_{J,t}\right)\right\} _{t\in\left[0,T\right]}$, or in ($J$-dimensional) vector notation, $\boldsymbol{X}_{t}=\left(X_{1,t},\cdots,X_{J,t}\right)^{\top}$.
Assume $\boldsymbol{X}_{t}$ is $\left\{ \mathscr{F}_{t}\right\} $-adapted.
Let the initial value, $\boldsymbol{X}_{0}$, be a random variable determined at $t=0$.
If the confounding process, $\left\{ \boldsymbol{X}_{t}\right\} _{t\in\left[0,T\right]}$, is determined up to even $\treat\leqq T$, the subset of the sample space, $\Omega$, can be determined; $\left\{ \omega\left|\left\{ \boldsymbol{X}_{t}\right\} _{t\leqq \treat}=\left\{ \boldsymbol{x}_{t}\right\} _{t\leqq \treat},\omega\in\Omega\right.\right\} $.
Denote this as $\Omega\left(\boldsymbol{x}_{t};t\leqq \treat\right)$.
This can be interpreted as a set of units, $\omega_{i}$, with $\left\{ \boldsymbol{x}_{t}\right\} _{t\leqq \treat}$ as its confounders.
If we let $\treat\uparrow T$, the set of units will be smaller: $\Omega\left(\boldsymbol{x}_{t};t\leqq \treat\right)\subseteqq\Omega\left(\boldsymbol{x}_{t};t\leqq u\right)$, if $\treat<u$.
These sets formalize the notion of conditioning on histories: all units sharing the same observed confounder path up to $\treat$ belong to the same conditioning set. 
This structure allows the weighting function to be defined over histories rather than discrete covariate values.

Assume the treatment assignment is determined at $\treat$.
Then, the assignment covariate, $Z$, is a $\mathscr{F}_{\treat}$-measurable random variable.
Here, we assume that the treatment assignment depends on the observed pre-treatment information set (e.g., $\mathscr{F}_{\treat}^{X}$, and potentially other observed pre-treatment variables); i.e., strongly ignorable treatment assignment.
For stochastic processes, we propose the following definition for dynamic strong ignorability:
If we condition the outcome with the filtration, $\mathscr{F}_{\treat}^{X}=\sigma\left\{ X_{\treat};t\leqq\treat\right\} $, at $\treat$, the potential outcome processes, $\left\{ Y_{t}(Z=1),Y_{t}(Z=0)\right\} _{t\in\left[\treat,T\right]}$, and treatment assignment, $Z$, after $\treat$, are independent:
\begin{assumption}[Dynamic Ignorability]\label{ass:dynamicignore}
    \begin{equation}
\left\{Y_{t}(Z=1),Y_{t}(Z=0)\right\}_{t\in\left[\treat,T\right]}\Perp \left.Z\right|{\mathscr{F}_{\treat}^{X}}.
 \end{equation}
\end{assumption}
The filtration notation $\mathscr{F}_{\treat}^{X}$ emphasizes that the assignment depends on the information set generated by the confounder process up to time $\treat$, encompassing both stochastic and deterministic components. This generalizes the standard conditional ignorability to continuous-time settings.
In other words, the conditional expectation of $Z$, given $\mathscr{F}_{\treat}^{X}$, is the propensity score:
\begin{equation}
\mathbb{E}_{P}\left[Z\left|\mathscr{F}_{\treat}^{X}\right.\right]=p\left(\left.Z=1\right|\mathscr{F}_{\treat}^{X}\right).\label{eq:Propensity}
 \end{equation}
Note that $\mathbb{E}\left[\cdot\right]$ the expectation operator regarding the probability measure, $P$ (of the probability space, $\left(\Omega,\mathscr{F},P\right)$).
Further discussion of this assumption is given in Supplementary Material~\ref{sm:assumption}.

\begin{assumption}[Regularity conditions for DIPW]\label{ass:regularity}
(i) \emph{Positivity}: $0 < p(Z_i = 1 | \mathscr{F}_{t_c}^X) < 1$ almost surely.
(ii) \emph{Propensity score}: Either $p(Z_i = 1 | \mathscr{F}_{t_c}^X)$ is known, or a consistent estimator $\hat{p}(Z_i = 1 | \mathscr{F}_{t_c}^X)$ is available.
(iii) \emph{Moments}: $\sup_{t \in [0,T]} \mathbb{E}[Y_t^2] < \infty$ and $\mathbb{E}[1/p(Z=1|\mathscr{F}_{t_c}^X)^2] < \infty$.
\end{assumption}

The dynamic inverse probability weighting (DIPW) estimator reweights observed trajectories by the inverse of their
treatment assignment probabilities conditional on the observed pre-treatment history.
Let $\mathscr F^{X}_{t_c}$ denote the $\sigma$-field generated by all observed pre-treatment covariate information used for assignment,
which may include lagged outcomes and any measurable summaries of past observed data.
Define the propensity score
\begin{equation}\label{eq:Propensity}
p_i \equiv p\!\left(Z_i=1 | \mathscr F^{X}_{t_c}\right),
\end{equation}
and assume positivity $0< p_i <1$ a.s.

For each time $t$, define the (unnormalized) DIPW estimators of the mean potential outcomes
\begin{align}
\widehat{\mu}^{\,1}_{t,\mathrm{DIPW}}
    &= \frac{1}{n}\sum_{i=1}^n \frac{Z_i}{p_i}\,Y_{t,i}, \label{eq:DIPW}\\
\widehat{\mu}^{\,0}_{t,\mathrm{DIPW}}
    &= \frac{1}{n}\sum_{i=1}^n \frac{1-Z_i}{1-p_i}\,Y_{t,i}. \nonumber
\end{align}
Under Assumptions~\ref{ass:sutva}--\ref{ass:regularity}, we identify the dynamic average treatment effect (DATE) at time $t$ by
\[
\tau_t \equiv \mathbb E\!\left[Y_t(Z=1)\right]-\mathbb E\!\left[Y_t(Z=0)\right],
\qquad
\widehat{\tau}_{t,\mathrm{DIPW}} \equiv \widehat{\mu}^{\,1}_{t,\mathrm{DIPW}}-\widehat{\mu}^{\,0}_{t,\mathrm{DIPW}}.
\]

\begin{thm}\label{thm:dipw}
{\it Under Assumptions~\ref{ass:sutva}--\ref{ass:regularity} and positivity, the DIPW estimators in \eqref{eq:DIPW} are unbiased:
\[
\mathbb E\!\left[\widehat{\mu}^{\,1}_{t,\mathrm{DIPW}}\right] = \mathbb E\!\left[Y_t(Z=1)\right],
\qquad
\mathbb E\!\left[\widehat{\mu}^{\,0}_{t,\mathrm{DIPW}}\right] = \mathbb E\!\left[Y_t(Z=0)\right],
\]
and hence $\mathbb E[\widehat{\tau}_{t,\mathrm{DIPW}}]=\tau_t$.
Moreover, if $\mathbb E\!\left[\left|\frac{Z}{p}Y_t\right|\right]<\infty$ and $\mathbb E\!\left[\left|\frac{1-Z}{1-p}Y_t\right|\right]<\infty$,
then for each fixed $t$, $\widehat{\mu}^{\,z}_{t,\mathrm{DIPW}} \to \mathbb E[Y_t(z)]$ almost surely as $n\to\infty$ for $z\in\{0,1\}$.
If time is discrete with $t\in\{0,1,\dots,T\}$ and $T<\infty$, then the convergence is uniform:
\[
\max_{0\le t\le T}\left|\widehat{\mu}^{\,z}_{t,\mathrm{DIPW}}-\mathbb E\!\left[Y_t(z)\right]\right|\to 0 \quad\text{a.s.}, \qquad z\in\{0,1\}.
\]
Consequently, $\max_{0\le t\le T}|\widehat{\tau}_{t,\mathrm{DIPW}}-\tau_t|\to 0$ almost surely.}
\end{thm}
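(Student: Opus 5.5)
The plan is a single-unit computation by iterated expectations, followed by a strong law of large numbers applied to i.i.d.\ summands. I will treat the units $i=1,\dots,n$ as i.i.d.\ draws of the tuple $(\{Y_t(1),Y_t(0)\}_{t},\,Z,\,\mathscr F^X_{t_c})$ and take the propensity $p_i$ as known, as in Assumption~\ref{ass:regularity}(ii). By linearity,
\[
\mathbb E\bigl[\widehat\mu^{\,1}_{t,\mathrm{DIPW}}\bigr]=\mathbb E\Bigl[\tfrac{Z}{p}Y_t\Bigr],
\]
so it suffices to handle one unit.

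\textbf{Unbiasedness.} For $t<t_c$ the two potential outcomes coincide by the definition of dynamic potential outcomes, so the argument for $t\ge t_c$ covers this case too. By consistency (Assumption~\ref{ass:sutva}(iii)), $ZY_t=ZY_t(1)$. Conditioning on $\mathscr F^X_{t_c}$, and using that $p$ is $\mathscr F^X_{t_c}$-measurable and strictly positive (positivity), gives
\[
\mathbb E\Bigl[\tfrac{Z}{p}Y_t\Bigr]=\mathbb E\Bigl[\tfrac1p\,\mathbb E\bigl[ZY_t(1)\mid\mathscr F^X_{t_c}\bigr]\Bigr].
\]
Dynamic ignorability (Assumption~\ref{ass:dynamicignore}) factorizes the inner term as $\mathbb E[Z\mid\mathscr F^X_{t_c}]\,\mathbb E[Y_t(1)\mid\mathscr F^X_{t_c}]=p\,\mathbb E[Y_t(1)\mid\mathscr F^X_{t_c}]$. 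The factor $p$ cancels, and the tower property yields $\mathbb E[Y_t(1)]$.

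The control arm is identical, with $1-Z$ and $1-p$ in place of $Z$ and $p$. Linearity then gives $\mathbb E[\widehat\tau_{t,\mathrm{DIPW}}]=\tau_t$.

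\textbf{Integrability.} All conditional expectations above need to be well defined, and the cancellation needs justification. I will obtain integrability from Assumption~\ref{ass:regularity}(iii) via Cauchy--Schwarz:
\[
\mathbb E\Bigl|\tfrac{Z}{p}Y_t\Bigr|\le \bigl(\mathbb E[p^{-2}]\bigr)^{1/2}\bigl(\mathbb E[Y_t^2]\bigr)^{1/2}<\infty,
\]
and similarly for the control arm. The paper's moment condition covers only $1/p$. For the control arm I will therefore either assume the analogous bound on $1/(1-p)$ or rely on the theorem's explicit hypothesis $\mathbb E|\tfrac{1-Z}{1-p}Y_t|<\infty$. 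Once the summand is integrable, the cancellation is legitimate because, given $\mathscr F^X_{t_c}$, $p$ behaves as a positive constant.

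\textbf{Consistency.} For fixed $t$, the summands $\tfrac{Z_i}{p_i}Y_{t,i}$ are i.i.d.\ and integrable with mean $\mathbb E[Y_t(1)]$. Kolmogorov's strong law therefore gives almost sure convergence, and the same holds for $z=0$.

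For discrete time with $T<\infty$ there are only finitely many indices $t$. Each has a probability-one event of convergence, and the intersection of these $T+1$ events still has probability one. On that event, the maximum over $t$ of the errors tends to zero.

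The statement for $\widehat\tau$ follows from the triangle inequality:
\[
\max_t\bigl|\widehat\tau_{t,\mathrm{DIPW}}-\tau_t\bigr|\le\max_t\bigl|\widehat\mu^{\,1}_{t,\mathrm{DIPW}}-\mathbb E[Y_t(1)]\bigr|+\max_t\bigl|\widehat\mu^{\,0}_{t,\mathrm{DIPW}}-\mathbb E[Y_t(0)]\bigr|.
\]

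\textbf{Main obstacle.} The probabilistic content is routine. The step that needs care is fixing the sampling model: units must be i.i.d., and $p_i$ must be the true propensity. With an estimated $\hat p$, exact unbiasedness fails in general, so I would state the theorem for known $p$. In that case only a consistency argument would extend to $\hat p$, namely a continuous-mapping step under uniform convergence of $\hat p$ with $p$ bounded away from zero.
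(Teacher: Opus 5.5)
Your unbiasedness argument is the paper's. Both reduce $ZY_t$ to $ZY_t(1)$; the paper does this by expanding $Z^2Y_t(1)+Z(1-Z)Y_t(0)$. Both then condition on $\mathscr F^X_{t_c}$, factorize using dynamic ignorability, and cancel $p$. You add the integrability bookkeeping that the paper omits.

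The consistency part takes a different route. The paper does not intersect finitely many almost-sure events. Instead it runs a Glivenko--Cantelli-style bracketing argument: it picks partitions of $[0,t_c]$ and $[t_c,T]$ with $|\mathbb E[Y_{t_i}-Y_{t_{i-1}}]|<\varepsilon$ and sandwiches the error at intermediate $t$ between the errors at partition points, aiming at uniformity over continuous time. Your pointwise strong law plus intersection of $T+1$ probability-one events is more elementary, and it is exactly what the stated discrete, finite-horizon claim requires. The paper's route would buy uniformity over a continuum of $t$. However, its sandwich step implicitly needs monotonicity in $t$ of the empirical averages (as for empirical CDFs) or some equicontinuity of paths, and neither is assumed. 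For $t<t_c$ it also works with unweighted averages $\frac1n\sum_i Y_{t,i}$ rather than with $\widehat\mu^{\,z}_{t,\mathrm{DIPW}}$.

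One step of yours fails as written. For $t<t_c$ you say the post-treatment argument applies because $Y_t(1)=Y_t(0)$. But Assumption~\ref{ass:dynamicignore} constrains only $\{Y_t(1),Y_t(0)\}_{t\in[t_c,T]}$, and equal potential outcomes do not make $Y_t$ conditionally independent of $Z$. For a counterexample, take $\mathscr F^X_{t_c}$ trivial, let $Z=1$ exactly when a pre-treatment $Y_s\sim N(0,1)$ is positive, and make post-treatment outcomes independent of $Y_s$. Every assumption holds with $p=1/2$, yet $\mathbb E[\tfrac{Z}{p}Y_s]=\mathbb E[Y_s\mid Y_s>0]>0=\mathbb E[Y_s]$.

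The natural fix is to treat pre-treatment outcomes as observed history. If they are $\mathscr F^X_{t_c}$-measurable, as the paper allows, then $\mathbb E[\tfrac{Z}{p}Y_t]=\mathbb E[\tfrac{Y_t}{p}\mathbb E[Z\mid\mathscr F^X_{t_c}]]=\mathbb E[Y_t]$ by measurability alone. Otherwise you must assume ignorability over the whole path, as in \eqref{eq:RCT}. The same fix is needed for the pre-treatment indices in your uniform-consistency step.

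Separately, you need neither Cauchy--Schwarz nor a bound on $1/(1-p)$. Run your conditioning argument on $|Y_t(z)|$, which is legitimate for nonnegative variables without prior integrability. This gives $\mathbb E|\tfrac{Z}{p}Y_t|=\mathbb E|Y_t(1)|$ and $\mathbb E|\tfrac{1-Z}{1-p}Y_t|=\mathbb E|Y_t(0)|$. So integrability of the potential outcomes alone suffices for both unbiasedness and the strong law.
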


The proof (Supplementary Material~\ref{sm:dipw}) adapts standard IPW arguments:
unbiasedness follows by iterated expectation under ignorability given $\mathscr F^{X}_{t_c}$,
and consistency follows from the strong law of large numbers.

While \eqno{DIPW} is unbiased, its variance can be large when propensities are near 0 or 1.
A common variance-reduction modification is the stabilized (self-normalized) version
\[
\widehat{\mu}^{\,1}_{t,\mathrm{stab}}
=\frac{\sum_{i=1}^n \frac{Z_i}{p_i}Y_{t,i}}{\sum_{i=1}^n \frac{Z_i}{p_i}},
\qquad
\widehat{\mu}^{\,0}_{t,\mathrm{stab}}
=\frac{\sum_{i=1}^n \frac{1-Z_i}{1-p_i}Y_{t,i}}{\sum_{i=1}^n \frac{1-Z_i}{1-p_i}},
\]
with $\widehat{\tau}_{t,\mathrm{stab}}=\widehat{\mu}^{\,1}_{t,\mathrm{stab}}-\widehat{\mu}^{\,0}_{t,\mathrm{stab}}$.
These stabilized estimators are generally not exactly unbiased in finite samples, but are consistent under the same conditions,
and often yield improved finite-sample behavior.

In practice, $p_i$ is estimated using summary features or basis expansions of the pre-treatment history.
For example, one may fit a logistic regression using lagged outcomes, cumulative sums, or principal components of $\{Y_{s,i}\}_{s<t_c}$
(and other pre-treatment covariates) as predictors; model selection can proceed via cross-validation or information criteria.

When treated units are scarce (or when no comparable untreated units are observed), DATE cannot be identified solely from
cross-sectional reweighting arguments. In this regime, causal estimation necessarily relies on a model for the
untreated counterfactual trajectory of the treated unit. Our state-space specification, therefore, plays the role of a
{structural counterfactual model}: it imputes the latent evolution
and DATE is identified conditional on the maintained state-space assumptions. We emphasize this is not a limitation of our approach
but a feature of the data regime. With one treated trajectory, some form of structural restriction is unavoidable.

\section{Estimating the dynamic average treatment effect}\label{sec:dlm}
Under random assignment, the DATE can be estimated nonparametrically by averaging treated and control paths at each time point.
However, when only one or few treated units are observed, the single observed path does not suffice-- we must estimate the mean process.
Figure~\ref{fig:oneone} illustrates that the DATE is the difference between estimated expectations, not the observed series themselves.

\begin{figure}[t!]
\centering
\includegraphics[width=0.7\textwidth,clip]{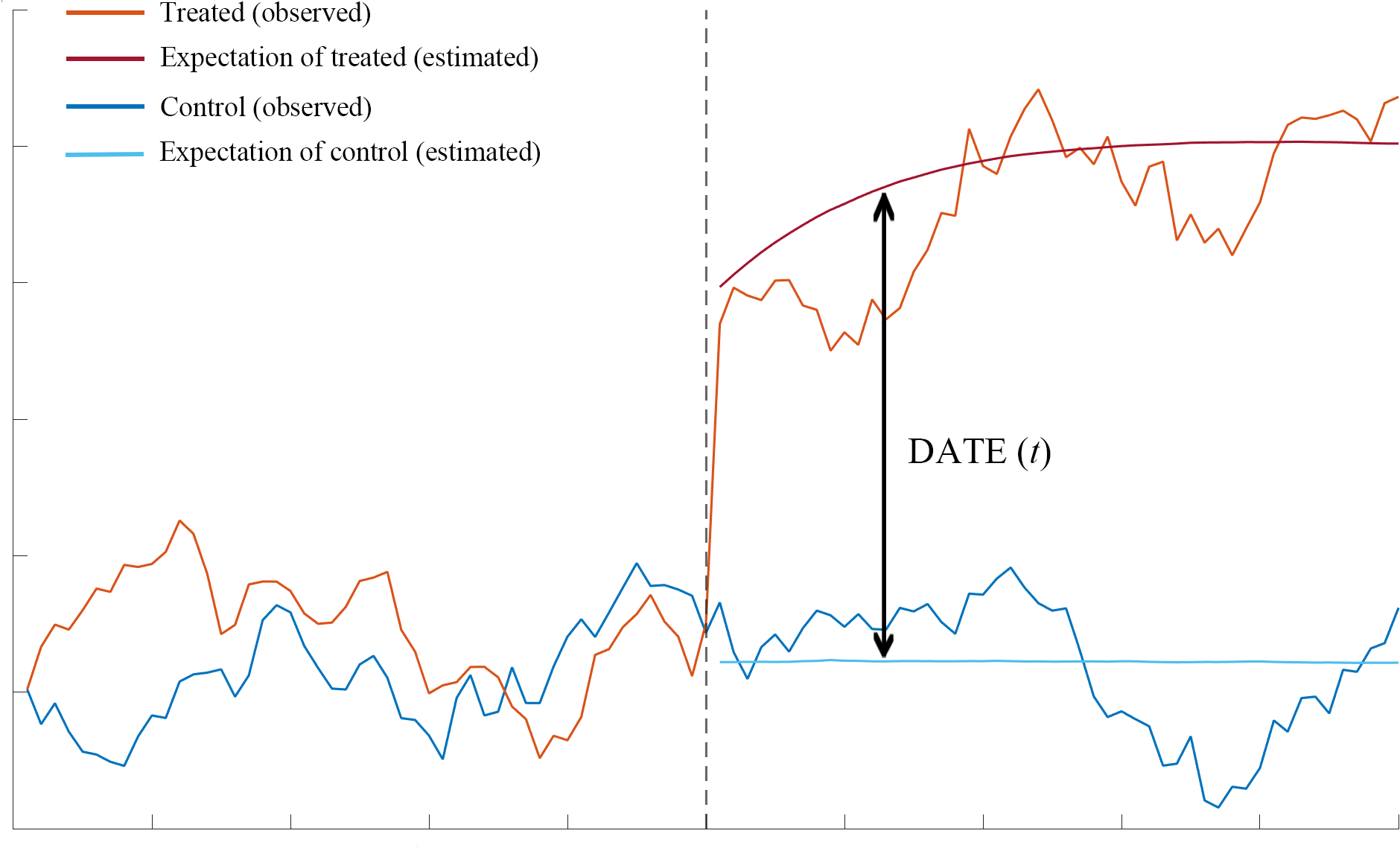}
\caption{Illustration of the DATE when there is only one treated and one control series. Although both series are observed, since there is only one path, the expectations of the paths are estimated. The difference between the two estimated expectations of paths is the DATE.
\label{fig:oneone}
}
\end{figure}

In the following, we show that, under some assumptions, the true DATE can be estimated using a state-space representation, of which the DLM is a special case.

\paragraph{Assumptions for representation.}
The estimation of the causal effect can be defined using covariates, similar to the linear regression in i.i.d. settings.
However, in contrast to i.i.d. regressions, the confounders are assumed to be stochastic processes, $\left\{ \boldsymbol{X}_{t}\right\} $, which makes conditioning on $X$ require special attention.
If strong ignorability holds, then conditional on $\mathscr{F}_{\treat}^{X}$, we have
\[
\mathbb{E}\left[\left.Y_{t}(Z=1)-Y_{t}(Z=0)\right|\mathscr{F}_{\treat}^{X}\right]=\mathbb{E}\left[\left.Y_{t}\right|Z=1,\mathscr{F}_{\treat}^{X}\right]-\mathbb{E}\left[\left.Y_{t}\right|Z=0,\mathscr{F}_{\treat}^{X}\right],
\]
thus, the causal effect can be expressed using a regression model using only the observed variables, i.e., the projection of the stochastic process, $\left.Y_{t}(Z=1)\right|_{\Omega_{i}^{1}},\left.Y_{t}(Z=0)\right|_{\Omega_{i}^{0}}$, to $\mathscr{F}_{\treat}^{X}$.

The strongly ignorable treatment
assignment assumption is not directly applicable because it has to be stated relative to filtration, and static conditioning can condition on post-treatment information.
We therefore introduce a new assumption we call intertemporal strongly ignorable treatment assignment, which is a time series extension of strong ignorability:
\begin{assumption}[Intertemporal Strong Ignorability]
\begin{equation}
\left.\left\{ Y_{t}(Z=1),Y_{t}(Z=0)\right\} _{t\in\left[\treat,T\right]}\Perp Z\right|\mathscr{F}_{t}^{X},\ \left(\mathscr{F}_{\treat}^{X}\subseteqq\mathscr{F}_{t}^{X}\right).\label{eq:TSC-USttong}
 \end{equation}
\end{assumption}

The sufficient (but not necessary) condition for \eqno{TSC-USttong} to hold is a condition we call ``no feedback to confounders;" 
\begin{equation}
Z\Perp\left\{ \mathscr{F}_{t}^{X}\right\} _{t>\treat}.\label{eq:TSC-nonFeedBack}
 \end{equation}
More specifically, for $Z\Perp\left\{ \mathscr{F}_{t}^{X}\right\} _{t>\treat}$, we have,
\[
P\left(Z=z,A\right)=P\left(Z=z\right)P\left(A\right),\textrm{for all }A\in\mathscr{F}_{t}^{X}.
\]
This condition states that the treatment assignment at $\treat$ does not affect the confounders, $\left\{ X_{t}\right\} _{t\geqq\treat}$, after $\treat$.
In other words, a mediator variable-- a variable that is affected by $Z$ and affects $Y$-- either does not exist as a confounder or is removed after $\treat$.
For example, if the interest is in estimating the effect of a mask mandate on infectious cases, weather can be a covariate that affects the outcome and timing of intervention, but is not affected by the mandate itself.
In contrast, mobility data will be affected by such a mandate, and thus receives feedback from the treatment, and thus does not satisfy this condition.
Such feedback can be handled through explicit structural modeling.
We note that \eqno{TSC-nonFeedBack} is sufficient but not necessary for intertemporal strong ignorability. Characterizing the full set of necessary conditions remains an open problem; in practice, researchers should assess whether feedback from treatment to confounders is plausible in their specific application and consider sensitivity analyses when this assumption is questionable.
Further discussion of this assumption is given in Supplementary Material~\ref{sm:assumption}.

The no feedback assumption rules out treatment-induced changes in confounders that would then affect outcomes.
If such feedback is plausible, post-treatment covariates should be treated as \emph{mediators} rather than confounders and should not be
conditioned on for identification. In practice, one can (i) define $\mathscr F^{X}_{t_c}$ using only pre-treatment covariates and history,
(ii) report sensitivity analyses in which candidate feedback pathways are perturbed (e.g., bounded shifts to post-treatment covariates or
propensities), and/or (iii) move to an explicit joint structural model for $(X_t,Y_t)$ in which treatment enters both equations.
Our empirical analyses adopt (i) by construction and use placebo checks as diagnostic evidence against major misspecification.

\subsection{Representation theorem for DATE}
When few or one treated unit is observed,  the mean process must be modeled, since there are not enough paths to approximate well by simply taking the mean.
The choice is not whether to make structural assumptions, but which assumptions to make. Common alternatives are linear models (assume flat effects) and ARIMAX (static parameters). 
We show that, under mild regularity, the conditional mean trajectories admit a state-space (DLM) form, which motivates DLM-based estimation when treated units are scarce.

\paragraph{State-space models as the representation of DATE.}
The fact that the outcomes, $\left\{ Y_{t}(Z=1),Y_{t}(Z=0)\right\} $, and confounders, $\left\{ \boldsymbol{X}_{t}\right\} $, follow a general stochastic process, is not sufficient to derive a formulation for the expected outcomes, $\mathbb{E}\left[\left.Y_{t}(Z=1)\right|\mathscr{F}_{t}^{X}\right]$, $\mathbb{E}\left[\left.Y_{t}(Z=0)\right|\mathscr{F}_{t}^{X}\right]$, and propensity score, $\mathbb{E}_{P}\left[Z\left|\mathscr{F}_{\tau}^{X}\right.\right]=p\left(\left.Z=1\right|\mathscr{F}_{\tau}^{X}\right)$.
A statistical model, as a function of $\left\{ X_{t}\right\} $, is thus required to estimate the above with data.

We propose a model that captures both the temporal dynamics and history-dependent causal responses. 
Our proposed model can be formulated from continuous semimartingales, with mild assumptions on the observed stochastic processes.
Specifically, these are i) $\{Y_t(z)\}$ are square-integrable semimartingales adapted to $\{\mathcal F_t\}$; and ii) covariate processes $\{X_t\}$ have finite variation and are $\mathcal F_t$-predictable.
Here `predictable' just means the coefficient at time $t$ is a function of information available up to time $t-1$; i.e., a time-varying regression coefficient updated from past history.
We show that this model is a state-space model, of which a dynamic linear model is a special case, and show that this yields a state-space characterization of the conditional mean components entering the DATE; the DLM is a convenient special case for estimation.

The introduction of continuous semimartingales to represent $\left\{ Y_{t}(Z=1),Y_{t}(Z=0)\right\} $, $\left\{ \boldsymbol{X}_{t}\right\} $ is done for several benefits.
The first is that the statistical model for the outcomes and confounders can be specified using semimartingales without strong assumptions.
The second is that using continuous semimartingales allows us to formulate the required representation, even if the correct model is inestimable.

The formal statement below provides one sufficient set of conditions under which the DATE can be expressed as a state-space model for the discrete case:
\begin{thm}\label{thm:approx}
{\it Let $\left\{ Y_{t}\left(=ZY_{t}(Z=1)+\left(1-Z\right)Y_{t}(Z=0)\right)\right\} _{t\in\left[0,T\right]}$ and $\left\{ \boldsymbol{X}_{t}\right\} _{t\in\left[0,T\right]}$ be square integrable discrete time stochastic processes.
Then, $\mathbb{E}\left[\left.Y_{t}\right|Z=1,\mathscr{F}_{t}^{X}\right]$ and $\mathbb{E}\left[\left.Y_{t}\right|Z=0,\mathscr{F}_{t}^{X}\right]$ can be expressed as
\begin{equation}
\begin{aligned}\mathbb{E}\left[\left.Y_{t}\right|Z=1,\mathscr{F}_{t}^{X}\right] & =\mathbb{E}\left[\left.Y_{t-1}\right|Z=1,\mathscr{F}_{t-1}^{X}\right]+\sum_{j=1}^{d}\beta_{j,t}^{1}\left(X_{j,t}-X_{j,t-1}\right),\ \left(\treat\leqq t\leqq T\right),\\
\textrm{with initial value } & \mathbb{E}\left[\left.Y_{\treat}\right|Z=1,\mathscr{F}_{\treat}^{X}\right]=\mathbb{E}\left[\left.Y_{\treat}\right|Z=0,\mathscr{F}_{\treat}^{X}\right]+\alpha_{\treat},\\
\mathbb{E}\left[\left.Y_{t}\right|Z=0,\mathscr{F}_{t}^{X}\right] & =\mathbb{E}\left[\left.Y_{t-1}\right|Z=0,\mathscr{F}_{t-1}^{X}\right]+\sum_{j=1}^{d}\beta_{j,t}^{0}\left(X_{j,t}-X_{j,t-1}\right).\ \left(0\leqq t\leqq T\right),\\
\textrm{with initial value } & \mathbb{E}\left[\left.Y_{0}\right|Z=0,\mathscr{F}_{0}^{X}\right]=Y_{0}.
\end{aligned}
\label{eq:TSC-VCM-1}
 \end{equation}
Here, $\left\{ \beta_{j,t}^{z}\right\} _{1\leqq j\leqq d}^{z=0,1}$ is a $\mathscr{F}_{t-1}$-measurable stochastic process.}
{\it Assume $\left\{ \boldsymbol{\beta}_{t}^{z}\right\} $ is $\mathscr{F}_{t}$-predictable, square-integrable, and continuous path.
The time series model for discrete observations of the continuous time stochastic process, $\left\{ \boldsymbol{\beta}_{t}^{z}\right\} $, is given as
\begin{equation}
\boldsymbol{\beta}_{t}^{z}=G_{t}\boldsymbol{\beta}_{t-1}^{z}+\boldsymbol{\omega}_{t},\ \boldsymbol{\omega}_{t}\sim N\left(0,\sigma_{t}^{2}\boldsymbol{W}_{t}\right),\ \ \ \left(t=1,2,\cdots,T\right),\left(z=0,1\right).\label{eq:TimeSeriesCausal_DLMtransition}
 \end{equation}}
\end{thm}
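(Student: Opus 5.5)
The plan is to treat Theorem~\ref{thm:approx} as a representation result: the conditional means are telescoped into a sum of increments, and each increment is rewritten as a predictable linear combination of the covariate increments.

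\textbf{Step 1: the conditional mean processes.} For $z\in\{0,1\}$ define
\[
m^z_t:=\mathbb{E}\left[\left.Y_t\right|Z=z,\mathscr{F}^X_t\right].
\]
By square integrability each $m^z_t$ exists in $L^2$ and is $\mathscr{F}^X_t$-measurable; here I condition on the event $\{Z=z\}$, which has positive probability by positivity. For the control arm, and for $t\geq t_c$ in the treated arm, I write
\[
m^z_t=m^z_{t-1}+\Delta m^z_t,\qquad \Delta m^z_t:=m^z_t-m^z_{t-1}.
\]
The problem therefore reduces to showing that $\Delta m^z_t=\sum_j\beta^z_{j,t}\,\Delta X_{j,t}$ for some $\mathscr{F}_{t-1}$-measurable coefficients.

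\textbf{Step 2: initial conditions.} For $t<t_c$ the two potential outcome paths coincide by the definition of dynamic potential outcomes, so $m^1_t=m^0_t$ there. At $t=t_c$ I simply define $\alpha_{t_c}:=m^1_{t_c}-m^0_{t_c}$, which is $\mathscr{F}^X_{t_c}$-measurable. The initial value $m^0_0=Y_0$ holds provided $Y_0$ is $\mathscr{F}^X_0$-measurable (the baseline is observed). I would state this as a convention, since the theorem implicitly assumes it.

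\textbf{Step 3: increments as predictable linear combinations (the main obstacle).} Since $m^z_t$ is $\mathscr{F}^X_t$-measurable and $\mathscr{F}^X_t=\mathscr{F}^X_{t-1}\vee\sigma(\Delta X_t)$, the Doob--Dynkin lemma gives
\[
\Delta m^z_t=g_t(\text{past},\Delta X_t)
\]
for some measurable $g_t$. Writing this as a linear combination with predictable coefficients is not automatic in general. My first attempt is to set the coefficients equal to the $\mathscr{F}_{t-1}$-conditional $L^2$ projection:
\[
\beta_t^z:=\left(\mathbb{E}\left[\Delta X_t\Delta X_t^\top\,\middle|\,\mathscr{F}_{t-1}\right]\right)^{-1}\mathbb{E}\left[\Delta X_t\,\Delta m^z_t\,\middle|\,\mathscr{F}_{t-1}\right].
\]
This is predictable and square integrable, but it leaves a residual orthogonal to $\Delta X_t$. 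There are two ways to remove that residual.

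\begin{itemize}
\item \emph{Semimartingale route (the paper's stated motivation).} In continuous time with continuous semimartingales, a martingale representation or Kunita--Watanabe decomposition relative to the finite-variation covariates gives $dm^z_t=\beta^z_t\,dX_t$ with $\beta^z$ predictable. Euler discretization then yields the displayed recursion.
\item \emph{Exact discrete-time route.} Allow $\beta^z_{j,t}$ to be chosen pathwise, for example as $\Delta m^z_t/(d\,\Delta X_{j,t})$ on $\{\Delta X_{j,t}\neq0\}$. This is exact but sacrifices predictability.
\end{itemize}
I would present the projection and continuous-time argument, and absorb the residual into the observation noise of the DLM.

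\textbf{Step 4: the transition equation.} Given that $\beta^z$ is predictable, square integrable and path-continuous, sampling it at integer times and taking a first-order (Gaussian, locally linear) approximation of its evolution gives
\[
\beta^z_t=G_t\beta^z_{t-1}+\omega_t,\qquad \omega_t\sim N\!\left(0,\sigma_t^2W_t\right).
\]
Here $G_t$ is chosen as the conditional regression of $\beta^z_t$ on $\beta^z_{t-1}$, and $W_t$ is the corresponding conditional covariance. Gaussianity is a modelling specification, which I would justify as the local approximation of a continuous diffusion for $\beta^z$. Combining Steps 1--4 yields the state-space form~\eqref{eq:TimeSeriesCausal_DLMtransition}, with the DLM as the special case in which $G_t$ and $W_t$ are deterministic.
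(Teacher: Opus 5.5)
\paragraph{The gap is in Step~3.} The theorem claims the exact identity $\Delta m^z_t=\sum_j\beta^z_{j,t}\Delta X_{j,t}$ with $\mathscr{F}_{t-1}$-measurable coefficients. None of your routes yields it:
\begin{itemize}
\item the conditional projection leaves a residual that is only conditionally orthogonal to $\Delta X_t$;
\item the pathwise ratio is not predictable;
\item a Kunita--Watanabe or F\"{o}llmer--Schweizer decomposition also leaves an orthogonal martingale term. For finite-variation covariates, as you assume, the martingale part of $X$ is zero, so the decomposition is degenerate and the whole martingale part of $m^z$ lands in the residual;
\item an Euler discretization is only approximate.
\end{itemize}
Absorbing the residual into the observation noise proves a weaker statement (projection plus error), not the displayed recursion.

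\paragraph{The gap cannot be closed from square integrability alone.} Take $d=1$ and $X_t=\sum_{s\le t}\xi_s$ with $\xi_s$ i.i.d.\ $N(0,1)$. Set $Y_t(1)=Y_t(0)=\xi_t^2$ for $t\ge1$, $Y_0=0$, let $Z$ be independent of $\xi$, and let $\mathscr{F}_t=\sigma(Z,\xi_1,\dots,\xi_t)$. Then $m^z_t=\xi_t^2$, so you would need $\xi_t^2-\xi_{t-1}^2=\beta_t\xi_t$ with $\beta_t$ $\mathscr{F}_{t-1}$-measurable. Since $\xi_t$ is Gaussian and independent of $\mathscr{F}_{t-1}$, this would make a nonzero quadratic in $\xi_t$ vanish almost surely, which is impossible. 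Adding a deterministic covariate such as $t$ does not help.

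\paragraph{How the paper handles this step.} The paper takes the F\"{o}llmer--Schweizer route (the proof of Theorem~\ref{thm:ito}, reused for Theorem~\ref{thm:smapprox}). It differences $Y_t$ itself and invokes no-spillover/no-feedback to replace $\mathbb{E}[Y_{t-1}\mid Z,\mathscr{F}^X_t]$ by $\mathbb{E}[Y_{t-1}\mid Z,\mathscr{F}^X_{t-1}]$. It then removes $\Delta L_t$ by asserting $\mathbb{E}[\Delta L_t\mid Z=z,\mathscr{F}_t]=0$ and applying the tower property. That assertion is your missing step in disguise. Because $\Delta L_t$ is $\mathscr{F}_t$-measurable, the martingale property only gives $\mathbb{E}[\Delta L_t\mid\mathscr{F}_{t-1}]=0$. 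The paper's own preliminaries concede that orthogonality does not give a vanishing conditional mean without the extra condition \eqref{eq:CausalTimeSeries_BrownRep}, which is not a hypothesis of Theorem~\ref{thm:approx}.

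\paragraph{What your setup needs.} Your telescoping is cleaner: $m^z_{t-1}$ appears by definition, so no no-feedback step is needed. It also isolates what must be assumed, namely $m^z_t=m^z_{t-1}+\beta^{z\top}_t\Delta X_t$ under $P(\cdot\mid Z=z)$ for some $\mathscr{F}^X_{t-1}$-measurable $\beta^z_t$. That is essentially the conclusion, so the result is a modelling assumption, not a consequence of square integrability. Granting it, and given invertibility of the conditional second-moment matrix of $\Delta X_t$, your projection formula recovers $\beta^z_t$ exactly. You should condition on $\mathscr{F}^X_{t-1}$ rather than the full $\mathscr{F}_{t-1}$, since otherwise $\beta^z_t\Delta X_t$ need not be $\mathscr{F}^X_t$-measurable. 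The paper glosses the same point when it pulls $\beta$ out of $\mathbb{E}[\cdot\mid Z,\mathscr{F}^X_t]$.

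\paragraph{Steps 4 and 2.} Step~4 agrees in substance with the paper: the Doob--Meyer drift is absorbed into $G_t$ and the martingale increment is $\omega_t$. Use that compensator rather than a regression on $\beta_{t-1}$ alone if you want $\omega_t$ to be a martingale difference. The Brownian representation gives exactly Gaussian increments only when $\sigma_s$ and $W_s$ are deterministic over the step, so reading Gaussianity as a modelling specification is correct. In Step~2, $m^1_t=m^0_t$ for $t<t_c$ does not follow, because conditioning on $\{Z=z\}$ can shift the mean when $Z$ depends on information beyond $\mathscr{F}^X_t$. You do not need it, since $\alpha_{t_c}$ is defined as a difference.
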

The proof (as well as its continuous time counterpart) is in the Supplementary Material~\ref{sm:approx}.
The above theorem shows that any estimator targeting the conditional mean process via the state/observation \eqsno{TSC-VCM-1}-\ref{eq:TimeSeriesCausal_DLMtransition} are aligned with the implied conditional-mean dynamics.
These two formulations together form a DLM, thus showing that the DLM form arises as the projection of the outcome process onto the space spanned by the covariate process. 
In practice, this result justifies using DLMs for causal modeling when data are limited.

Theorem~\ref{thm:approx} is an existence result: under mild regularity, the conditional means
$\mathbb{E}[Y_t| Z=z,\mathscr{F}_t^X]$ admit a state-space representation with predictable coefficients on covariate increments.
To operationalize this in scarce-treated settings, we adopt a linear-Gaussian DLM with a structured intervention design
matrix. This design matrix is a {modeling choice} that provides an interpretable basis (e.g., spot/persistent/trend)
for the post-intervention mean response; alternative bases can be substituted within the same state-space framework.
Accordingly, component-level conclusions are model-based, conditional on the maintained DLM specification.

Thus, we have derived the following DLM representation with specific modeling choice:
\begin{subequations}
\label{eq:DLM} 
\begin{align}
y_{t}^{z} & =y_{t-1}^{z}+\boldsymbol{F}_{t}^{\top}\boldsymbol{\theta}_{t}+\varepsilon_{t},\quad\varepsilon_{t}\sim N(0,\sigma_{t}^{2}),\label{eq:DLMa}\\
\boldsymbol{\theta}_{t} & =\boldsymbol{G}_{t}\boldsymbol{\theta}_{t-1}+\boldsymbol{\omega}_{t},\quad\boldsymbol{\omega}_{t}\sim N\left(\left[\begin{array}{c}
0\\
\boldsymbol{0}
\end{array}\right],\left[\begin{array}{cc}
\sigma_{t}^{2} & \boldsymbol{0}^{\top}\\
\boldsymbol{0} & \sigma_{t}^{2}\W_{t}
\end{array}\right]\right),\label{eq:DLMb}\\
\boldsymbol{F}_{t} & =\left[\begin{array}{c}
-1\\
\boldsymbol{x}_{t}
\end{array}\right],\ \boldsymbol{\theta}_{t}=\left[\begin{array}{c}
\beta_{0,t}^{z}\\
\boldsymbol{\beta}_{t}^{z}
\end{array}\right]\label{eq:DLMc}\\
\X & =[\x_{1},...,\x_{\treat},...,\x_{T}]\label{eq:DLMd}\\
 & =\begin{bmatrix}y_{0} & ... & y_{\treat-1} & y_{\treat} & y_{\treat+1} & ... & y_{T-1}\\
0 & ... & 0 & \mathbbm{1}_{Z_{i}=1} & 0 & ... & 0\\
0 & ... & 0 & \mathbbm{1}_{Z_{i}=1} & \mathbbm{1}_{Z_{i}=1} & ... & \mathbbm{1}_{Z_{i}=1}\\
0 & ... & 0 & \mathbbm{1}_{Z_{i}=1} & 2\mathbbm{1}_{Z_{i}=1} & ... & (T-{\treat}+1)\mathbbm{1}_{Z_{i}=1}
\end{bmatrix},\label{eq:DLMe}
 \end{align}
\end{subequations}
where the functional form of \eqsno{DLMa} and (\ref{eq:DLMb}), corresponds to the above theoretical result.

Building on the representation result in Theorem~\ref{thm:approx}, the following regularity conditions ensure the DLM estimator is unbiased and consistent.
First, the underlying mean causal process must evolve smoothly over time 
so that a first-order representation captures its local dynamics. 
Second, the observation and state equations must be correctly specified up to 
additive, mean-zero Gaussian innovations $(\nu_t, \omega_t)$. 
Third, the state variance $W_t$ should remain bounded, ensuring that 
the filtering and smoothing distributions concentrate around the true latent paths as $T\to\infty$.
Under these conditions, the DLM posterior mean of $Y_t(1)-Y_t(0)$ 
converges to the true DATE trajectory in mean-square loss.

The other \eqsno{DLMc}-(\ref{eq:DLMe}) will depend on the specifications and assumptions of the data and effects of treatment.
The design matrix in \eqno{DLMe} represents one specification that decomposes the treatment effect into spot (immediate impact), persistent (level shift), and trend (slope change) components. Alternative specifications-- such as seasonal indicators, additional lags, or nonlinear basis expansions-- are equally compatible with Theorem~\ref{thm:approx}.

For the treatment indicator vectors, we include three different types of vectors to decompose and capture the treatment effect.
The first, where it is one at, and only at, the point of treatment measures the ``spot'' effect-- the immediate impact as a reaction to the intervention, which may or may not have long-term effects.
The second is the persistent effect, representing a long-term level shift (sometimes called a regime change).
The third is the change in trends, or slope, in the outcomes.
For example, consider a drug with an immediate positive impact that dissipates over time but has a long-term negative effect.
The immediate impact is measured by the spot effect, while the dissipation and ultimate negative impact are captured by the trend and persistent effects, respectively.

Mixed cases (e.g., one treated unit and many controls) can combine strategies: the treated mean path is estimated via DLM while the control mean path uses the sample average, or vice versa.
Details are in Supplementary Material~\ref{sm:estimation}.

When only one series receives treatment, the counterfactual control is itself, pre-treatment.
The theory shows that the counterfactual mean path is the DLM expectation with treatment indicators set to zero; the DATE is the difference between treated and counterfactual paths.
Additional discussion of retrospective and forward-looking causality is in Supplementary Material~\ref{sm: date}.

Operationally, estimation proceeds as follows: (i) when many comparable units are observed, estimate propensities from pre-treatment
history and compute the (unnormalized) DIPW mean paths for treated and control units, then difference to obtain $\widehat{\mathrm{DATE}}(t)$;
(ii) when treated units are scarce, fit the maintained state-space model on the full data and propagate a posterior predictive
factual and counterfactual trajectory for $t\ge t_c$, by switching on/off the treatment indicators, then compute DATE as the factual-minus-counterfactual difference with uncertainty.

\section{Simulation Study}\label{sec:sim}

We now consider simulation studies to assess our proposed framework against existing strategies using four different scenarios with varying numbers of units.
The first, and canonical, scenario is the time series RCT case, where there are multiple units for treated and control (Many-Many).
The other three scenarios are a variation of the RCT case where either the treated, or both have only one unit, and one case where there is only one treated and no control.
We denote this as One-Many, for one treated, many control (a scenario where SCMs have been used), One-One, for one treated, one control (a scenario where DiD has been used), and One-None, for only one treated and no control (an ITS scenario).
Finally, the One-None case is the ITS scenario, where only one treated series is observed.
The objective of this simulation study is to highlight the difference in estimands and analyze how the different estimation strategies produce varying performances with regard to estimating the true treatment effect.

\paragraph{Data generating process.}
For the DGP, we consider the following process:
\begin{align*}
    y_t&=
  \begin{cases}
    \theta y_{t-1}+b_1+\epsilon_t, \quad \epsilon_t\sim N(0,\sigma^2_{t}), & \text{if $t<\treat$,} \\
    \mathbbm{1}_{Z_i=1}(y_{t-1}+b_2+b_3)+(1-\mathbbm{1}_{Z_i=1})(\theta y_{t-1}+b_1)+\epsilon_t, \quad \epsilon_t\sim N(0,\sigma^2_{t}),                & \text{if $t=\treat$,} \\
    \theta y_{t-1}+\mathbbm{1}_{Z_i=1}b_3+(1-\mathbbm{1}_{Z_i=1})b_1+\epsilon_t, \quad \epsilon_t\sim N(0,\sigma^2_{t}),        & \text{if $t>\treat$.}
  \end{cases}\\
  \sigma^2_{t}&=\sigma^2_{t-1}(\beta/\eta_t), \quad \eta_t\sim Be(\frac{\beta (t+T/2)}{2},\frac{(1-\beta) (t+T/2)}{2}),
 \end{align*}
for $t=1:T$.
Here, $\theta$ is the AR(1) coefficient, $b_1$ is the initial intercept, $b_2$ is the spot effect, and $b_3$ is the persistent, long-term effect.
If the unit is in control, $y_t=\theta y_{t-1}+b_1+\epsilon_t$ is the DGP throughout.
The initial value of $y_0$ is set to 0.05, the stationary mean.
The time-varying variance follows a Beta-Gamma random walk with some discount factor, $\beta$, i.e., discount stochastic volatility.
Because of the AR(1) effect, the goal of the simulation is not to estimate each parameter, since they compound over time, but to estimate the time-varying difference between the mean treated and control series.

For the simulation study, we consider three lengths of time, $T=72,120,240$, to emulate six, ten, and twenty years of monthly data.
The parameter values for our DGP are set as
$b_1=0.01$, $b_2=0.5$, $b_3=-0.03$, $\beta=0.95$.
For $\theta$, we vary it from $0.75,0.8,0.9$, depending on the length of the series.
Since $\theta$ controls how the effect carries over, we set $\theta$ to be larger for longer series so the effect is observed over a longer period.
In the many cases, we generate 100 units.
For each scenario, we generate 1,000 paths, applying the treatment in the middle of each time frame $\treat = 37,61,121$.
The true DATE is plotted in Figure~\ref{fig:ATE}.
As shown, the treatment has an initial positive impact, which decreases over time, until it steadies into a persistent negative effect.

\begin{figure}[t!]
\centering
\includegraphics[width=0.7\textwidth,clip]{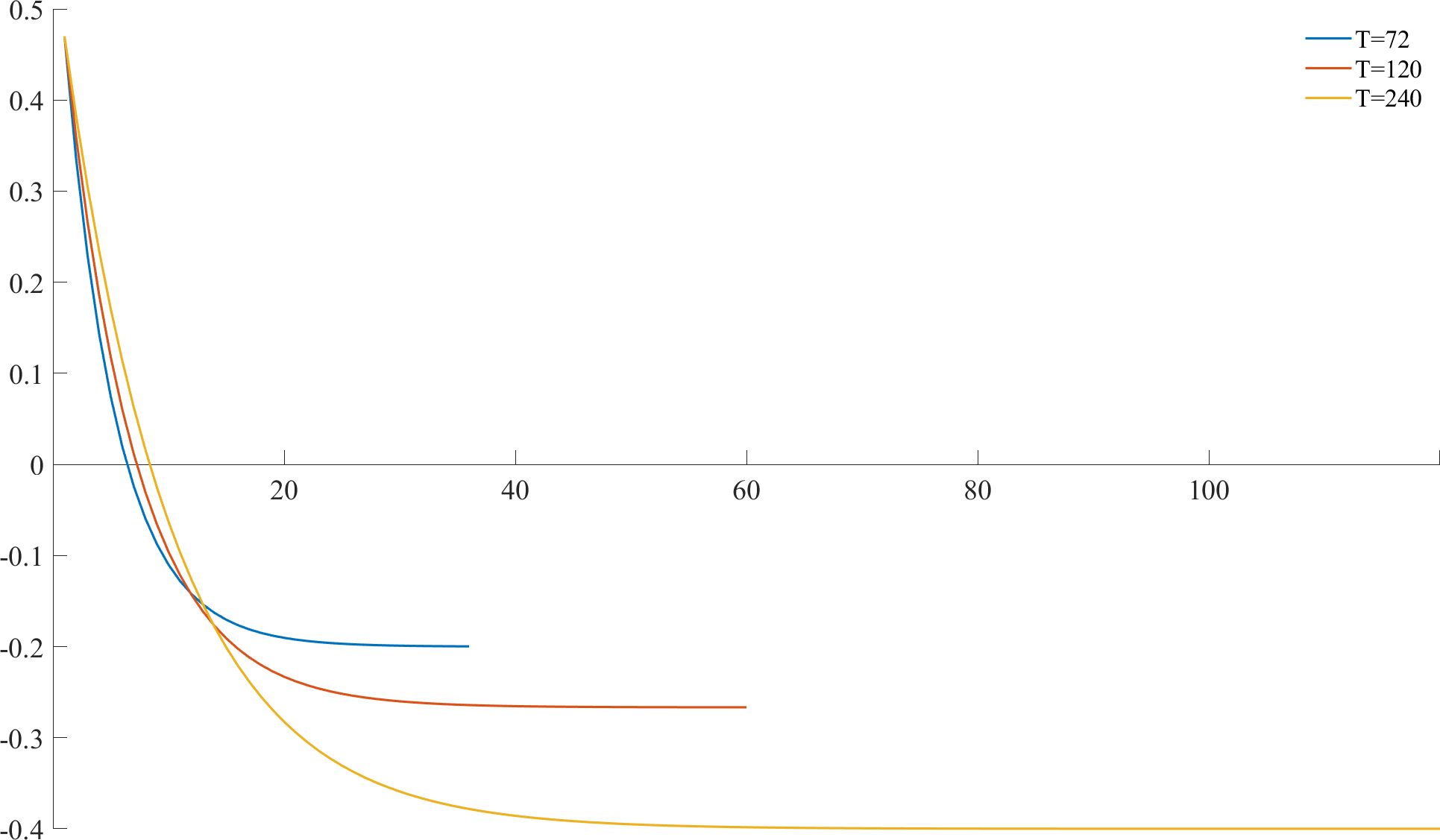}
\caption{True DATE for $T=72,120,240$.
\label{fig:ATE}
}
\end{figure}

\paragraph{Estimators.}
We compare DLM, linear model (LM), LM with AR(1) term, ARIMAX, difference in observed series (Y), and for specific scenarios, synthetic control (SCM), difference-in-differences (DiD), and causal ARIMA (C-ARIMA).
DLM priors and estimation details are in Supplementary Material~\ref{sm: model desc}.

\paragraph{Evaluation metrics.}
We compare mean squared error (MSE), standardized to the Many-Many $T=72$ case, and coverage probability (CP) at nominal 95\%.

\paragraph{Results.}

The results from the simulation study are collated in Table~\ref{table:msecp}. 
The results for MSE are standardized to the Many-Many case with $T=72$, which is our ideal RCT case, to assess how well each method approximates the ideal condition.

For the Many-Many case, the MSE increases slightly as $T$ increases, reflecting both the small baseline errors (order of $10^{-6}$) and the difficulty of estimating DATE with longer horizons when AR(1) decay is slower.
Across all other scenarios, the DLM achieves the smallest MSE-- often half that of the second-best model (LM)-- demonstrating its ability to capture temporal dynamics.
Methods that use the observed series directly (Y, SCM, DiD, C-ARIMA) perform poorly because variability in the observed series directly affects estimation accuracy.

The coverage probability results mirror the MSE findings: DLM achieves accurate 95\% intervals, while LM is overly conservative (CP near 100\%) and other methods undercover.
Figure~\ref{fig:CPi} shows quantile coverage across levels from 5\% to 95\%; the DLM remains well-calibrated throughout, while other methods show varying accuracy across quantiles, highlighting that uncertainty evaluation cannot rely on a single quantile.

Figure~\ref{fig:ON120} displays performance over time for the One-None case with $T=120$.
The DLM accurately tracks the true DATE (black line), while other models fail to capture the time-varying effect: LM overfits to the long tail, AR(1)-LM and ARIMAX estimate a flat average, and methods using observed series (Y, C-ARIMA) show high variability.
Both MSE and CP over time confirm that DLM maintains superior performance throughout the post-treatment period.


\begin{figure}[t!]
\centering
\includegraphics[width=0.7\textwidth,clip]{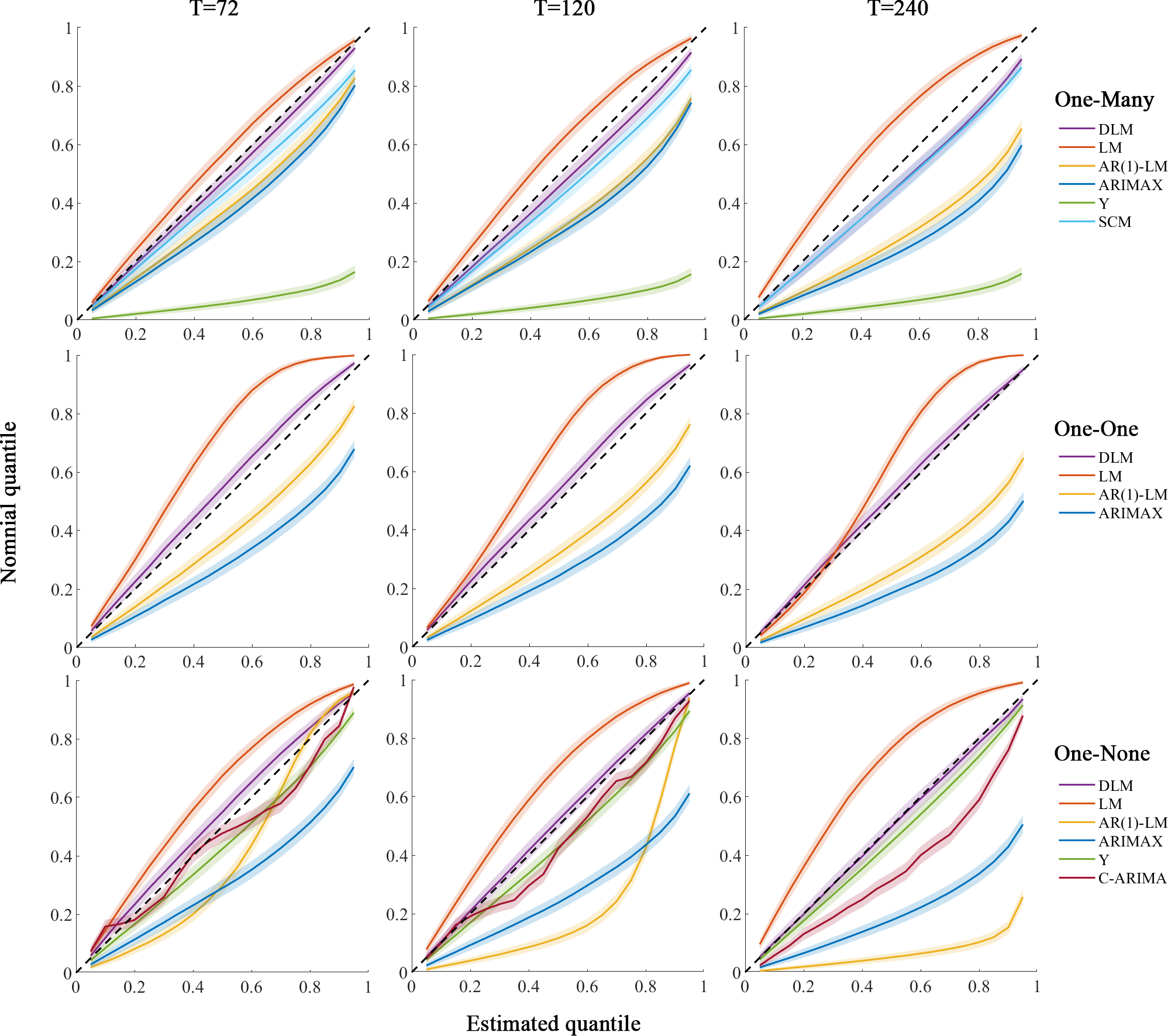}
\caption{Coverage probability for each quantile from 5\% to 95\%. The x-axis is the estimated quantile, while the y-axis is the nominal quantile. The best case is the 45-degree dotted line. Shaded areas represent 95\% intervals of the quantile estimates.
\label{fig:CPi}
}
\end{figure}

\begin{figure}[t!]
\centering
\includegraphics[width=0.9\textwidth,clip]{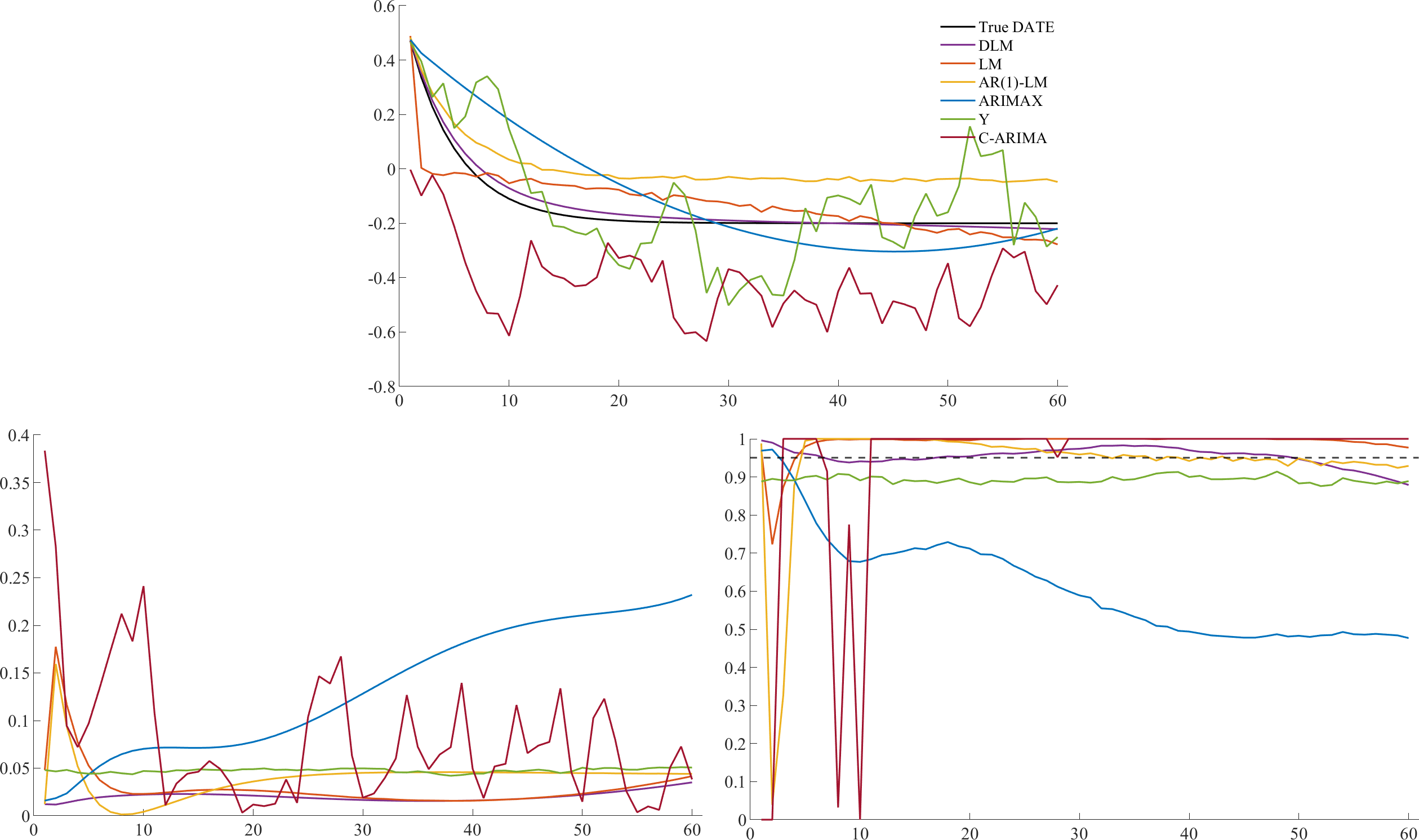}
\caption{Comparison of performance for One-None case with $T=120$.
Top panel: The estimated DATE for all models for one sample (black solid line is the true DATE).
Bottom left panel: Mean squared error for each $t=\treat :T$, averaged over 1,000 experiments.
Bottom right panel: Coverage probability (nominal 95\%) for each $t=\treat :T$, averaged over 1,000 experiments.
\label{fig:ON120}
}
\end{figure}

\begin{table}[t!]
\caption{Mean squared error (MSE) of point estimates and coverage probability (\%) for $T=72,120,240$,  averaged over 1,000 Monte Carlo replications.
The MSE values are standardized to the Many-Many case when $T=72$. \label{table:msecp}
}
\resizebox{\textwidth}{!}{\begin{tabular}{lrrrrrrrrrrrr}
\multicolumn{1}{l|}{Many-Many} & \multicolumn{1}{l|}{MSE}  & \multicolumn{1}{l}{CP (\%)} & \multicolumn{1}{l}{}      & \multicolumn{1}{l}{}       & \multicolumn{1}{l}{}  & \multicolumn{1}{l}{}         & \multicolumn{1}{l}{}    & \multicolumn{1}{l}{}   & \multicolumn{1}{l}{}      & \multicolumn{1}{l}{}       & \multicolumn{1}{l}{}  & \multicolumn{1}{l}{}        \\ \cline{1-3}
\multicolumn{1}{l|}{T=72}      & \multicolumn{1}{l|}{1.00} & \multicolumn{1}{l}{94.43}   & \multicolumn{1}{l}{}      & \multicolumn{1}{l}{}       & \multicolumn{1}{l}{}  & \multicolumn{1}{l}{}         & \multicolumn{1}{l}{}    & \multicolumn{1}{l}{}   & \multicolumn{1}{l}{}      & \multicolumn{1}{l}{}       & \multicolumn{1}{l}{}  & \multicolumn{1}{l}{}        \\
\multicolumn{1}{l|}{120}       & \multicolumn{1}{l|}{1.32} & \multicolumn{1}{l}{94.51}   & \multicolumn{1}{l}{}      & \multicolumn{1}{l}{}       & \multicolumn{1}{l}{}  & \multicolumn{1}{l}{}         & \multicolumn{1}{l}{}    & \multicolumn{1}{l}{}   & \multicolumn{1}{l}{}      & \multicolumn{1}{l}{}       & \multicolumn{1}{l}{}  & \multicolumn{1}{l}{}        \\
\multicolumn{1}{l|}{240}       & \multicolumn{1}{l|}{1.93} & \multicolumn{1}{l}{94.95}   & \multicolumn{1}{l}{}      & \multicolumn{1}{l}{}       & \multicolumn{1}{l}{}  & \multicolumn{1}{l}{}         & \multicolumn{1}{l}{}    & \multicolumn{1}{l}{}   & \multicolumn{1}{l}{}      & \multicolumn{1}{l}{}       & \multicolumn{1}{l}{}  & \multicolumn{1}{l}{}        \\
                               & \multicolumn{1}{l}{}      & \multicolumn{1}{l}{}        & \multicolumn{1}{l}{}      & \multicolumn{1}{l}{}       & \multicolumn{1}{l}{}  & \multicolumn{1}{l}{}         & \multicolumn{1}{l}{}    & \multicolumn{1}{l}{}   & \multicolumn{1}{l}{}      & \multicolumn{1}{l}{}       & \multicolumn{1}{l}{}  & \multicolumn{1}{l}{}        \\
\multicolumn{1}{l|}{}          & \multicolumn{6}{c|}{MSE}                                                                                                                                                & \multicolumn{6}{c}{Coverage Probability (\%)}                                                                                                                   \\
\multicolumn{1}{l|}{One-Many}  & \multicolumn{1}{c}{DLM}   & \multicolumn{1}{c}{LM}      & \multicolumn{1}{c}{AR(1)} & \multicolumn{1}{c}{ARIMAX} & \multicolumn{1}{c}{Y} & \multicolumn{1}{c|}{SCM}     & \multicolumn{1}{c}{DLM} & \multicolumn{1}{c}{LM} & \multicolumn{1}{c}{AR(1)} & \multicolumn{1}{c}{ARIMAX} & \multicolumn{1}{c}{Y} & \multicolumn{1}{c}{SCM}     \\ \hline
\multicolumn{1}{l|}{72}        & 20.01                     & 31.51                       & 36.27                     & 43.98                     & 49.22                 & \multicolumn{1}{r|}{56.39}   & 93.76                   & 95.43                  & 82.12                     & 80.23                      & 16.19                 & 84.89                       \\
\multicolumn{1}{l|}{120}       & 16.19                     & 27.40                       & 36.85                     & 41.20                     & 47.34                 & \multicolumn{1}{r|}{57.10}   & 92.81                   & 96.45                  & 75.91                     & 74.34                    & 15.90                 & 84.76                       \\
\multicolumn{1}{l|}{240}       & 13.32                     & 24.54                       & 41.11                     & 49.70                      & 49.13                 & \multicolumn{1}{r|}{55.27}   & 90.55                   & 97.28                  & 65.32                     & 59.75                      & 15.81                 & 86.11                       \\
                               & \multicolumn{6}{c}{}                                                                                                                                                    & \multicolumn{6}{c}{}                                                                                                                                            \\
\multicolumn{1}{l|}{One-One}   & \multicolumn{1}{c}{DLM}   & \multicolumn{1}{c}{LM}      & \multicolumn{1}{c}{AR(1)} & \multicolumn{1}{c}{ARIMAX}  & \multicolumn{1}{c}{Y} & \multicolumn{1}{c|}{DiD}     & \multicolumn{1}{c}{DLM} & \multicolumn{1}{c}{LM} & \multicolumn{1}{c}{AR(1)} & \multicolumn{1}{c}{ARIMAX}  & \multicolumn{1}{c}{Y} & \multicolumn{1}{c}{DiD}     \\ \hline
\multicolumn{1}{l|}{72}        & 26.25                     & 29.93                       & 65.66                     & 152.81                     & 95.01                 & \multicolumn{1}{r|}{113.58}  & 96.96                   & 99.81                  & 84.00                     & 67.90                      & -                     & -                           \\
\multicolumn{1}{l|}{120}       & 21.41                     & 33.35                       & 74.77                     & 148.72                     & 98.97                 & \multicolumn{1}{r|}{83.04}   & 96.47                   & 99.91                  & 75.12                     & 62.20                     & -                     & -                           \\
\multicolumn{1}{l|}{240}       & 17.52                     & 42.38                       & 79.96                     & 201.43                    & 96.95                 & \multicolumn{1}{r|}{53.78}   & 95.01                   & 99.96                  & 65.17                     & 50.16                      & -                     & -                           \\
                               & \multicolumn{6}{c}{}                                                                                                                                                    & \multicolumn{6}{c}{}                                                                                                                                            \\
\multicolumn{1}{l|}{One-None}  & \multicolumn{1}{c}{DLM}   & \multicolumn{1}{c}{LM}      & \multicolumn{1}{c}{AR(1)} & \multicolumn{1}{c}{ARIMAX}  & \multicolumn{1}{c}{Y} & \multicolumn{1}{c|}{C-ARIMA} & \multicolumn{1}{c}{DLM} & \multicolumn{1}{c}{LM} & \multicolumn{1}{c}{AR(1)} & \multicolumn{1}{c}{ARIMAX}  & \multicolumn{1}{c}{Y} & \multicolumn{1}{c}{C-ARIMA} \\ \hline
\multicolumn{1}{l|}{72}        & 27.52                     & 37.93                       & 34.59                     & 138.63                      & 57.01                 & \multicolumn{1}{r|}{114.68}  & 96.75                   & 98.51                  & 96.19                     & 70.38                      & 88.92                 & 97.73                       \\
\multicolumn{1}{l|}{120}       & 23.33                     & 34.20                       & 46.47                     & 158.26                      & 54.76                 & \multicolumn{1}{r|}{95.60}   & 95.62                   & 98.93                  & 93.88                     & 61.18                     & 89.87                 & 92.79                       \\
\multicolumn{1}{l|}{240}       & 18.58                     & 29.65                       & 83.82                     & 165.11                    & 53.56                 & \multicolumn{1}{r|}{60.29}   & 94.04                   & 99.16                  & 25.94                     & 50.56                      & 91.50                 & 87.75                      
\end{tabular}}
\end{table}

\section{Application: Effect of COVID-19 lockdown on unemployment in the UK}\label{sec:emp}
We illustrate the framework using UK unemployment data around the 2020 COVID-19 lockdown.
The UK government implemented a stay-at-home order on 3/23/2020; we analyze its time-varying effect on unemployment.
This is an interrupted time series (One-None case) with monthly data from 1/2015 to 12/2021 (pre-treatment: 1/2015--2/2020; post-treatment: 3/2020--12/2021).
In the application, we treat post-intervention covariates as unavailable and estimate the counterfactual via the maintained
state-space model; thus causal conclusions are conditional on correct dynamic specification rather than on reweighting-based identification.
We compare five models: DLM, LM, AR(1)-LM, ARIMAX, and C-ARIMA, with specifications identical to the simulation study.

Figure~\ref{fig:app:datecomp} shows how the methods produce different estimates.
LM estimates a flat effect because it cannot capture the arch-shaped trajectory of post-lockdown unemployment.
AR(1)-LM, ARIMAX, and C-ARIMA capture the arch but lag in adapting to the evolving effect due to their static-parameter dynamics.

\begin{figure}[t!]
\centering
\includegraphics[width=0.75\textwidth]{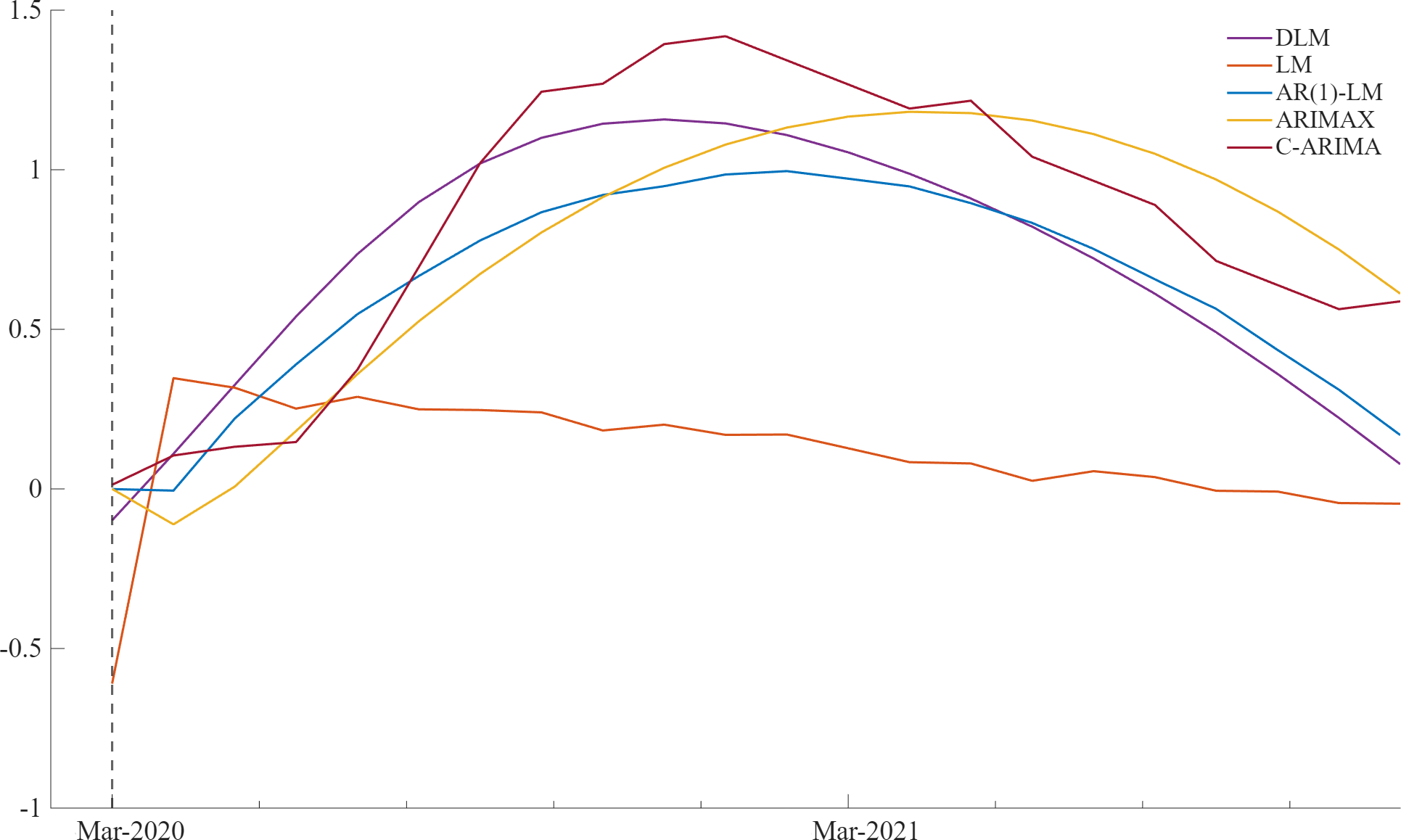}
\caption{Effect of lockdown on unemployment: The estimated mean DATE for the five models. \label{fig:app:datecomp}}
\end{figure}

Focusing on the DLM estimation of the DATE, Figure~\ref{fig:DATEDLM} shows a) top panel: estimation of the mean path of the treated and control, b) middle panel: estimated DATE with DLM over the treatment period, and c) bottom panel: cumulative DATE with DLM over the treatment period.
All results include 95\% credible intervals (shaded areas).
The estimated mean treated path fits the data well, capturing the sudden increase, and slow decrease over time.
The estimated DATE (middle panel) has a curved, arch-like trajectory, showing that the lockdown caused an initial, sudden increase after the lockdown, which took nearly a year to recover.
The cumulative DATE (bottom panel) echoes the above results, where the impact of the lockdown is front-heavy and tapering off at the end of the sampling period.

\begin{figure}[t!]
\centering
\includegraphics[width=0.8\textwidth]{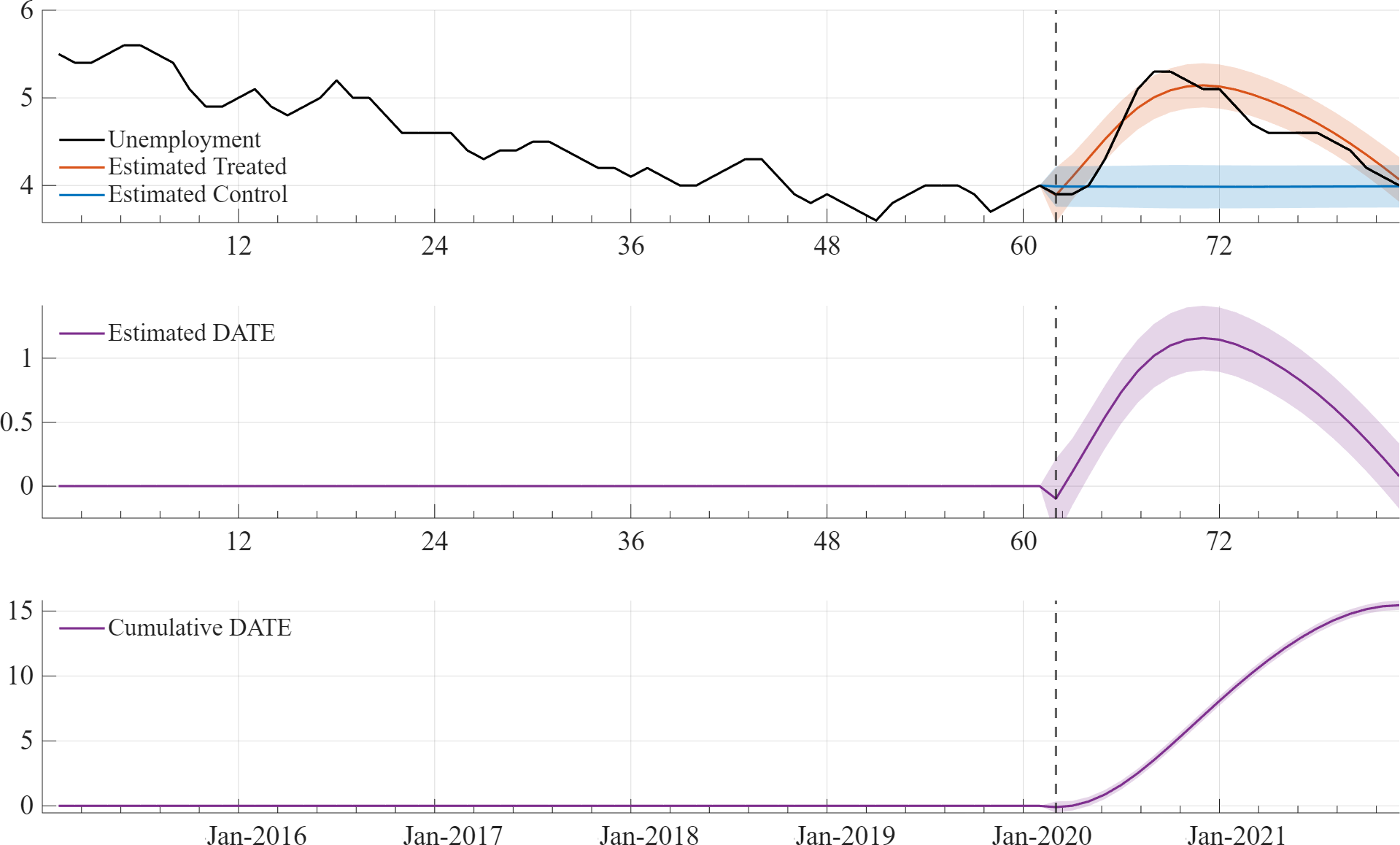}
\caption{Top Figure: The estimated mean path of the treated vs the estimated mean control path of the One-None case for the UK unemployment rate during the COVID-19 lockdown. Middle Figure: DATE of the lockdown intervention. Bottom Figure: The cumulative difference between estimated treated vs control paths.}
\label{fig:DATEDLM} 
\end{figure}

To assess whether the estimation procedure induces artificial dynamic effects, we conduct a placebo
intervention test in Supplementary Material~\ref{sm: robust}. When treatment is reassigned to an earlier, incorrect
date, the estimated treated/control trajectories remain overlapping and the DATE remains near zero,
indicating no spurious detection.

\subsection{Decomposition of the Dynamic Treatment Effect}

The framework permits decomposition of the DATE into interpretable components by activating one indicator at a time (Figure~\ref{fig:date-decomp}).
The spot effect is near zero, indicating no instantaneous jump.
The persistent component shows a pronounced positive effect in the early post-treatment window-- the dominant force driving the initially positive DATE.
The trend component is initially neutral but becomes increasingly negative, signifying deterioration in the post-intervention growth rate.
The cumulative trend impact eventually counterbalances the persistent level shift.
This decomposition distinguishes an absence of immediate impact (spot), a substantial but temporary level lift (persistent), and a structural trajectory change (trend)-- opposing forces that would be obscured in a single aggregated DATE curve.

The spot/persistent/trend breakdown is an interpretable basis for the intervention response within the model specification, not a uniquely
identified structural decomposition. The overall DATE trajectory is the sum of these components, but the allocation of that
trajectory across components depends on the chosen basis (e.g., delayed or nonlinear onsets may be absorbed differently).
We therefore interpret component estimates as descriptive summaries of the fitted intervention profile.

\begin{figure}[t!]
\centering
\includegraphics[width=0.75\textwidth]{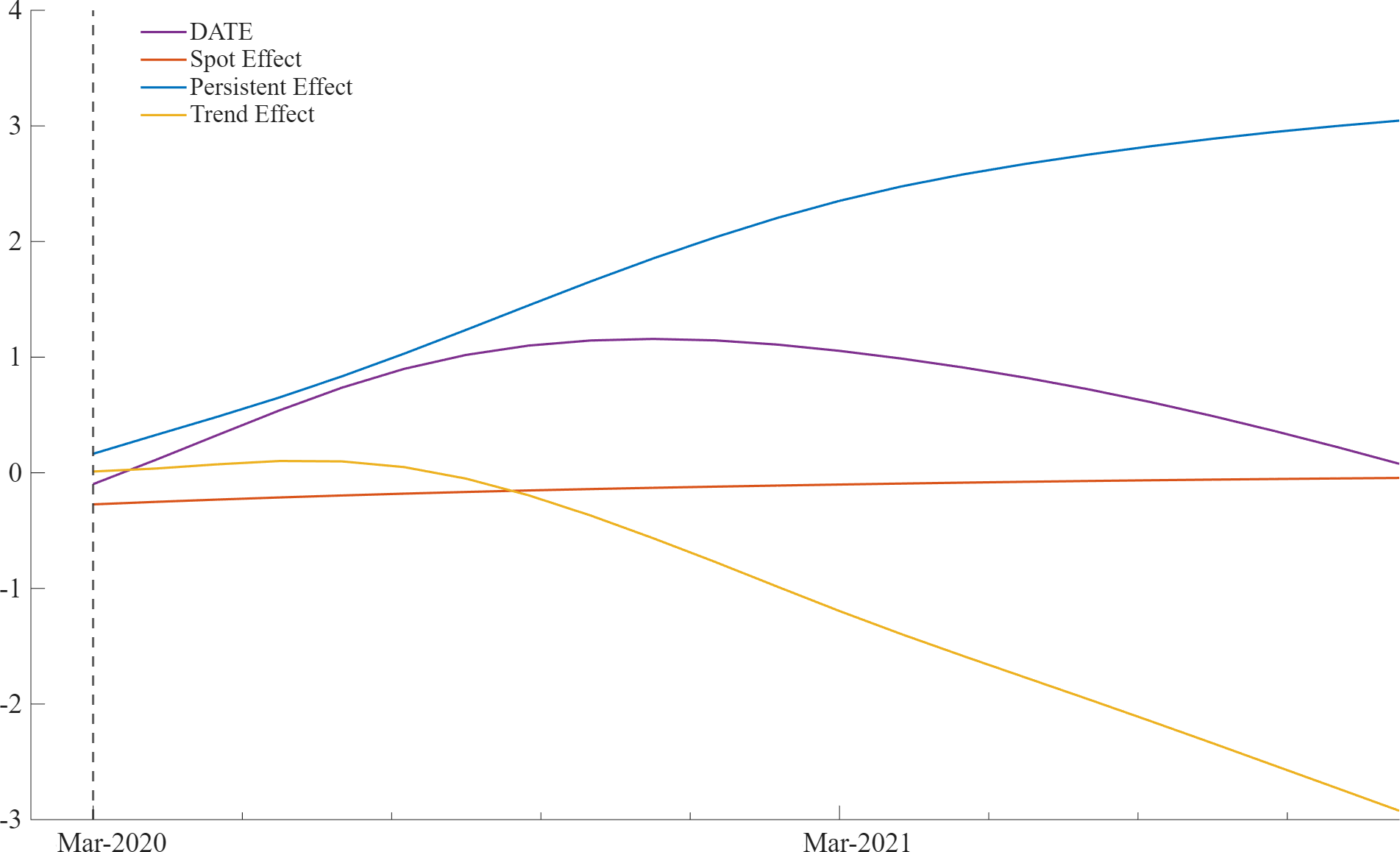}
\caption{Decomposition of the dynamic average treatment effect into the spot, persistent, and trend effect. Each effect corresponds to one row of indicators in the design matrix.}
\label{fig:date-decomp} 
\end{figure}

\section{Summary Comments \label{sec:summ}}

This paper develops a unified framework for causal inference with time series data by extending the i.i.d. potential outcomes model to stochastic processes. We define the dynamic estimand, dynamic average treatment effect (DATE), derive the dynamic inverse probability weighting (DIPW) estimator for observational settings, and show that a dynamic linear model (DLM) admits a state-space representation for the conditional mean dynamics underlying the DATE, yielding a practical DLM implementation when data are limited. Together, these components establish both the theoretical foundation and a practical toolkit for dynamic causal analysis.

Several limitations should be acknowledged.  
First, the framework relies on dynamic ignorability and the \emph{no feedback to
confounders} assumption.  
Violations of this condition, such as adaptive behavioral responses, would require
explicit modeling of the feedback mechanism.  
Second, estimation of dynamic propensity scores in continuous time remains
computationally demanding and sensitive to misspecification.  
These limitations suggest caution in applying the framework to systems with
strong endogeneity or rapid structural breaks.

\singlespacing
\bibliographystyle{chicago}
\bibliography{reference}

\newpage
\doublespacing
\setcounter{equation}{0}
\setcounter{section}{0}
\setcounter{table}{0}
\setcounter{page}{1}
\renewcommand{\thesection}{S\arabic{section}}
\renewcommand{\theequation}{S\arabic{equation}}
\renewcommand{\thetable}{S\arabic{table}}

\vspace{1cm}
\begin{center}
{\LARGE
{\bf Supplementary Material for ``Dynamic causal inference with time series data"}
}
\end{center}

This Supplementary Material provides additional discussions and comparisons, context and definitions of some important concepts in time series, the proofs of the theorems presented in the paper, and a robustness analysis through a placebo test. 

  \section{Comparison with other sequential potential-outcome frameworks}\label{sm:comp}

Several recent papers, including \cite{bojinov2019time,bojinov2021panel}, define trajectories by embedding time into a network structure:
past instances of a unit are treated as ``neighbors,'' and lagged treatments are
viewed as forms of temporal interference. Their formalism can be represented
schematically as
\begin{equation}
Y_t(z_{1:t})
= \Phi_t\!\left( U_{1:t},\, z_{1:t} \right),
\label{eq:schematic-bs-network}
\end{equation}
where $U_t$ denotes an innovation term and $z_{1:t}$ is a treatment path.
This is the standard potential outcome device of holding innovations fixed; unconditional on $U_{1:t}$, the process remains stochastic. Interference across
time arises because the potential outcome at time $t$ depends on the full
history of treatments $z_{1:t}$, analogous to interference across neighbors in a
network.

This network-style formulation treats dynamics as arising from functional
dependence on past shocks and past treatments; causal effects compare
\emph{shock-held-fixed} trajectories under different treatment sequences.

\paragraph{Our stochastic-process formulation.}
In contrast, we do not interpret time as a network nor treat temporal
dependence as interference. Instead, potential outcomes are full stochastic
processes governed by transition laws,
\[
\{Y_t(1)\}_{t>t_c} \sim \mathbb{P}_1, \qquad
\{Y_t(0)\}_{t>t_c} \sim \mathbb{P}_0,
\]
and causal effects compare these two probability laws. Because the transition
laws themselves may change under intervention, this framework allows treatment
to modify persistence, dependence structure, stability, and broader
distributional features; not only the deterministic mapping from shocks to
outcomes.

\paragraph{Conceptual distinctions.}
\begin{itemize}
\item \textit{Role of time.}  
  In shock-based/network formulations, time is encoded as a graph and treatment
  at one time interferes with outcomes at later times through deterministic
  functional dependence. In our formulation, time indexes the evolution of a
  stochastic process; no network reinterpretation is used.

\item \textit{Source of dynamics.}  
  In \eqref{eq:schematic-bs-network}, dynamics arise mechanically from
  recurrence through $\Phi_t$. In our approach, dynamics are properties of the
  underlying probability laws, which may differ under treatment.

\item \textit{Causal contrasts.}  
  Shock-based approaches define contrasts by fixing $U_{1:t}$ and varying the
  treatment sequence. Our DATE compares expectations under two full stochastic
  laws, integrating over population heterogeneity.

\item \textit{Intervention effects.}  
  Network-interference models assume the innovation process is invariant to
  treatment. Our approach allows interventions to change the innovation
  distribution, transition kernel, or other structural components.

\end{itemize}

\medskip
Thus, shock-based and network-interference dynamic potential-outcome frameworks
address causal questions defined over deterministic functions of shocks and
treatment paths, whereas our stochastic-process formulation targets causal
effects on probability laws governing dynamic systems whose transition behavior
may change under intervention.

\subsection*{Illustrating the difference as DAGs}

Figures~\ref{dag:boj} and \ref{dag_dlm} provide graphical illustrations of the dynamic potential
outcome framework in this paper. These diagrams are purely expository and are
not used for identification, which in our development is based on filtrations and
conditional independence in stochastic processes rather than structural
assumptions encoded by a DAG. Figure~\ref{dag:boj} depicts the sequential potential
outcome representation of \cite{bojinov2019time}: at each time $t$, the outcome process branches into
potential trajectories indexed by the treatment history, with the realized
trajectory shown by solid lines and counterfactual trajectories shown by dashed
lines. Figure~\ref{dag_dlm} illustrates the corresponding latent-state formulation of our proposed stochastic process potential outcomes framework, where the latent process $\theta_t$ evolves over time and, upon
intervention, splits into treated and control latent paths. Observations $Y_t$
are generated conditionally on the latent state and inherit its branching
structure. These diagrams complement the formal definitions in the main text by
visualizing how dynamic treatment regimes give rise to full process-level
potential outcomes and how the DATE compares the laws of the treated and control
trajectories.

\begin{figure}[t!]
\centering
\begin{tikzpicture}[
    obsY/.style={inner sep=2pt},
    potY/.style={inner sep=2pt},
    treat/.style={rectangle, draw, minimum size=6mm},
    >=Stealth,
    node distance=2.3cm
]

\node[obsY] (Y0) {$Y_0$};

\node[obsY, right=2.8cm of Y0, yshift=8mm] (Y11) {$Y_1(1)$};
\node[potY, right=2.8cm of Y0, yshift=-8mm] (Y10) {$Y_1(0)$};

\node[obsY, right=2.8cm of Y11, yshift=8mm] (Y211) {$Y_2(1,1)$};
\node[potY, right=2.8cm of Y11, yshift=-8mm] (Y210) {$Y_2(1,0)$};

\node[obsY, right=2.8cm of Y210, yshift=8mm] (Y3101) {$Y_3(1,0,1)$};
\node[potY, right=2.8cm of Y210, yshift=-8mm] (Y3100) {$Y_3(1,0,0)$};

\draw[->, thick] (Y0) -- (Y11);
\draw[->, dashed] (Y0) -- (Y10);

\draw[->, thick] (Y11) -- (Y210);
\draw[->, dashed] (Y11) -- (Y211);

\draw[->, thick] (Y210) -- (Y3101);
\draw[->, dashed] (Y210) -- (Y3100);

\node[above=1.8cm of Y0] {$t=0$};
\node[above=1cm of Y11] {$t=1$};
\node[above=0.2cm of Y211] {$t=2$};
\node[above=1cm of Y3101] {$t=3$};

\end{tikzpicture}
\caption{Sequential potential outcomes \citep{bojinov2019time} DAG for treatment history $Z=(1,0,1)$.}\label{dag:boj}
\end{figure}
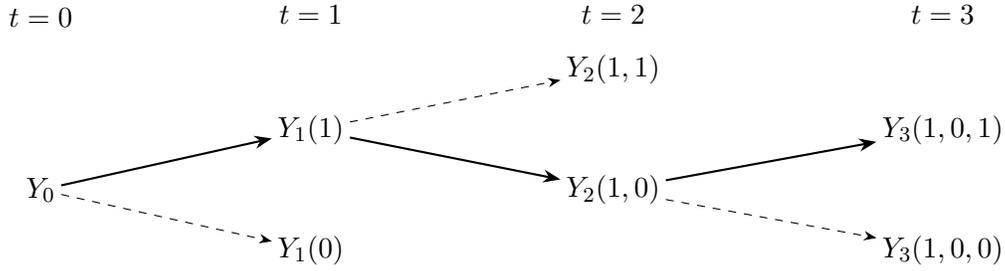

\begin{figure}[t!]
\centering
\begin{tikzpicture}[
    latentObs/.style={inner sep=2pt},
latentCF/.style={inner sep=2pt},
obsY/.style={inner sep=2pt},
potY/.style={inner sep=2pt},
    >=Stealth,
    node distance=2.1cm
]

\node[latentObs] (thm2) {$\theta_{t-2}$};
\node[latentObs, right=of thm2] (thm1) {$\theta_{t-1}$};

\node[latentObs, right=of thm1, yshift=7mm] (tht1) {$\theta_{t}(1)$};
\node[latentCF, right=of thm1, yshift=-7mm] (tht0) {$\theta_{t}(0)$};

\node[latentObs, right=of tht1] (thp1_1) {$\theta_{t+1}(1)$};
\node[latentCF, right=of tht0] (thp1_0) {$\theta_{t+1}(0)$};

\node[latentObs, right=of thp1_1] (thp2_1) {$\theta_{t+2}(1)$};
\node[latentCF, right=of thp1_0] (thp2_0) {$\theta_{t+2}(0)$};

\node[obsY, above=1.6cm of thm2] (ym2) {$Y_{t-2}$};
\node[obsY, above=1.6cm of thm1] (ym1) {$Y_{t-1}$};

\node[obsY, above=1.6cm of tht1] (yt1) {$Y_{t}(1)$};
\node[potY, below=1.6cm of tht0] (yt0) {$Y_{t}(0)$};

\node[obsY, above=1.6cm of thp1_1] (yp1_1) {$Y_{t+1}(1)$};
\node[potY, below=1.6cm of thp1_0] (yp1_0) {$Y_{t+1}(0)$};

\node[obsY, above=1.6cm of thp2_1] (yp2_1) {$Y_{t+2}(1)$};
\node[potY, below=1.6cm of thp2_0] (yp2_0) {$Y_{t+2}(0)$};

\draw[->, thick] (thm2) -- (thm1);

\draw[->, thick] (thm1) -- (tht1);
\draw[->, dashed] (thm1) -- (tht0);

\draw[->, thick] (tht1) -- (thp1_1);
\draw[->, dashed] (tht0) -- (thp1_0);

\draw[->, thick] (thp1_1) -- (thp2_1);
\draw[->, dashed] (thp1_0) -- (thp2_0);

\draw[->, thick] (thm2) -- (ym2);
\draw[->, thick] (thm1) -- (ym1);

\draw[->, thick] (tht1) -- (yt1);
\draw[->, dashed] (tht0) -- (yt0);

\draw[->, thick] (thp1_1) -- (yp1_1);
\draw[->, dashed] (thp1_0) -- (yp1_0);

\draw[->, thick] (thp2_1) -- (yp2_1);
\draw[->, dashed] (thp2_0) -- (yp2_0);

\node[above=0.9cm of ym2] {$t-2$};
\node[above=0.9cm of ym1] {$t-1$};
\node[above=0.2cm of yt1] {$t$};
\node[above=0.2cm of yp1_1] {$t+1$};
\node[above=0.2cm of yp2_1] {$t+2$};

\node[draw, below=1.4cm of yt1, left=0.8cm of yt1] (Zt) {$Z_t$};
\draw[->, thick] (Zt) -- (tht1);
\draw[->, dashed] (Zt) -- (tht0);

\draw[->, thick] ($(thm2)+(-1.4cm,0)$) -- (thm2);

\draw[->, thick] (thp2_1) -- ($(thp2_1)+(1.4cm,0)$);
\draw[->, dashed] (thp2_0) -- ($(thp2_0)+(1.4cm,0)$);

\end{tikzpicture}
\caption{Latent-state representation of dynamic potential outcomes for an
intervention at time $t$. The pre-treatment latent process $\theta_{t-2},
\theta_{t-1}$ is unique. At time $t$, the process branches into two potential
latent trajectories, treated $\theta_s(1)$ and control $\theta_s(0)$ for
$s \ge t$. Solid  edges denote the realized (treated) trajectory;
dashed edges denote the unobserved counterfactual path. Each latent
state realizes an observation $Y_s$ via a vertical link, illustrating how the DATE
compares the laws of the treated and control latent processes and their
associated observations.}\label{dag_dlm}
\end{figure}
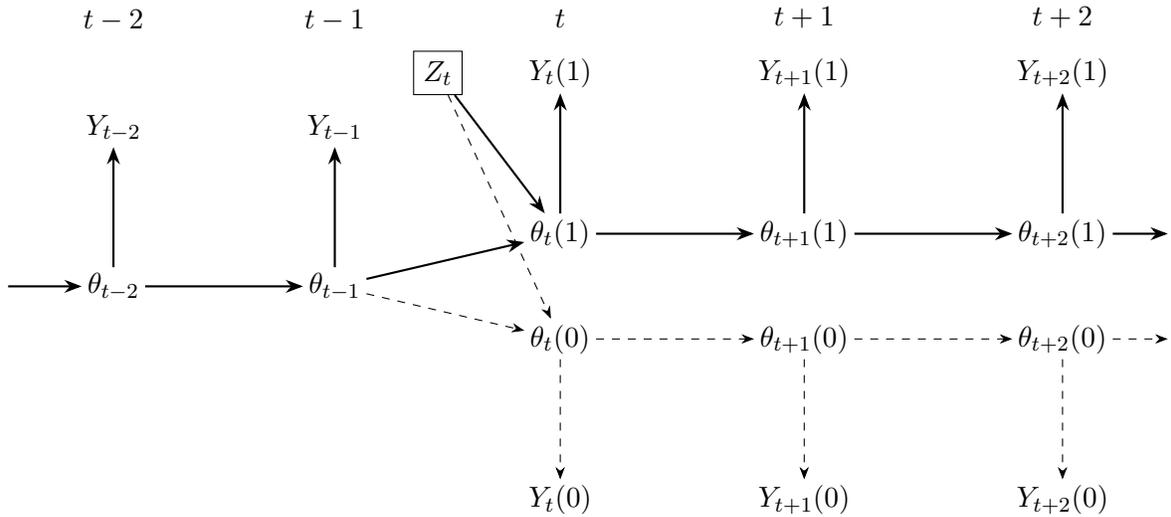

\subsection*{Relation to structural time series and CausalImpact}

Methods such as structural time series models and CausalImpact estimate the
counterfactual mean trajectory under modeling assumptions on trend, seasonality,
and regression structure. These approaches do not define process-level potential
outcomes or causal estimands in the sense of the DATE, and intervention effects
are interpreted through model-based counterfactual predictions rather than
contrasts of stochastic laws. Our framework can be combined with state-space
models when data are limited, but the interpretation of DATE
remains distinct from model-based forecasts.

\section{Additional background: time series probability concepts}\label{sm:comcepts}
\paragraph{Cylinder measure.}
To define and evaluate probabilities in time series, we must first introduce the concept of ``measure'' for stochastic processes.
Since uncertainty in stochastic processes is dependent on time (e.g., the range of outcomes is greater the further away from the current), the mathematical framework of measure must take this into account.
For this, we employ what is known as a ``cylinder measure."
The intuitive idea of a cylinder measure is this: consider the meandering path of a stochastic process over time, and imagine some hoops that the process must go through to succeed (or similarly imagine two pipes that an object has to go through, where the object fails if it touches either pipe).
As we vary the size of these hoops, the proportion of outcomes that succeed in going through changes.
Thus, by varying the size of these hoops, we can measure the probability of some process at a certain time.
If we consider multiple hoops, over time, and take the joint success/failure proportions over all hoops, then we successfully measure the stochastic process, and can define the probability of paths.
This is visualized in Figure~\ref{fig:cylinder}, where the ``T's" represent the upper and lower borders of the cylinder (or hoops in our analogy).

For a more rigorous definition, let $\left(E_{1},\cdots,E_{n}\right)$,
$E_{k}\in\mathscr{B}\left(\mathbb{R}\right),1\leqq k\leqq n$ be members of the Borel set, $\left(\mathbb{R}^{d}\right)^{n}$, and a subset of $\left(\mathbb{R}^{d}\right)^{T}$.
Then, the set, 
\[
A=\left\{ y\in\left(\mathbb{R}^{d}\right)^{T}\left|y\left(t_{1}\right)\in E_{1},\cdots,y\left(t_{n}\right)\in E_{n}\right.\right\} ,\ \ t_{1}<t_{2}<\cdots<t_{n},\ n=1,2,\cdots
\]
is called a {\it cylindrical set}.
Let $\mathscr{B}^{T}$ be the $\sigma$-algebra of the whole cylindrical set.
Fixing $t_{1},\cdots,t_{n}$, the whole cylindrical set obtained by changing $\left(E_{1},\cdots,E_{n}\right)$ is a sub-$\sigma$-field of $\mathscr{B}^{T}$.
This is denoted as $\mathscr{B}_{t_{1},t_{2},\cdots,t_{n}}$.
Given a probability measure, $P$, on the measurable space, $\mathscr{B}_{t_{1},t_{2},\cdots,t_{n}}$, and 
\[
\mu_{t_{1},t_{2},\cdots,t_{n}}\left(A\right)=P\left(\left\{ \omega\left|Y_{t_{1}}\left(\omega\right)\in E_{1},\cdots,Y_{t_{n}}\left(\omega\right)\in E_{n}\right.\right\} \right),
\]
then $\mu_{t_{1},t_{2},\cdots,t_{n}}$ is a probability measure on $\left(\left(\mathbb{R}^{d}\right)^{T},\mathscr{B}_{t_{1},t_{2},\cdots,t_{n}}\right)$.
Similarly for $t_{1},\cdots,t_{m},\left(n<m\right)$, if we construct 
$\mathscr{B}_{t_{1},t_{2},\cdots,t_{m}}$ and 
$\mu_{t_{1},t_{2},\cdots,t_{m}}$, then 
$\mu_{t_{1},t_{2},\cdots,t_{n}}$ and $\mu_{t_{1},t_{2},\cdots,t_{m}}$ satisfies the consistency condition:
\begin{equation}
\mu_{t_{1},t_{2},\cdots,t_{n}}\left(A\right)=\mu_{t_{1},t_{2},\cdots,t_{m}}\left(A\times\mathbb{R}^{m-n}\right).\label{eq:KolmogorovCC}
 \end{equation}
Conversely, if a family of finite dimension probability measure, $\left\{ \mu_{t_{1},t_{2},\cdots,t_{n}}\right\} _{n=1,2,\cdots}$, satisfies the above consistency condition, then there exists one probability measure, $\mu^{*}$, on $\left(\left(\mathbb{R}^{d}\right)^{T},\mathscr{B}^{T}\right)$, such that the restriction on $\mathscr{B}_{t_{1},t_{2},\cdots,t_{n}}$ matches that of $\mu_{t_{1},t_{2},\cdots,t_{n}}$ (Kolmogorov's extension theorem).
The probability measure, $\mu^{*}$, is called the {\it cylinder measure} of the stochastic process (or its distribution), $Y_{t}$, denoted $\mathbb{P}_{Y}$. 

\begin{figure}[t!]
\centering
\includegraphics[width=0.75\textwidth,clip]{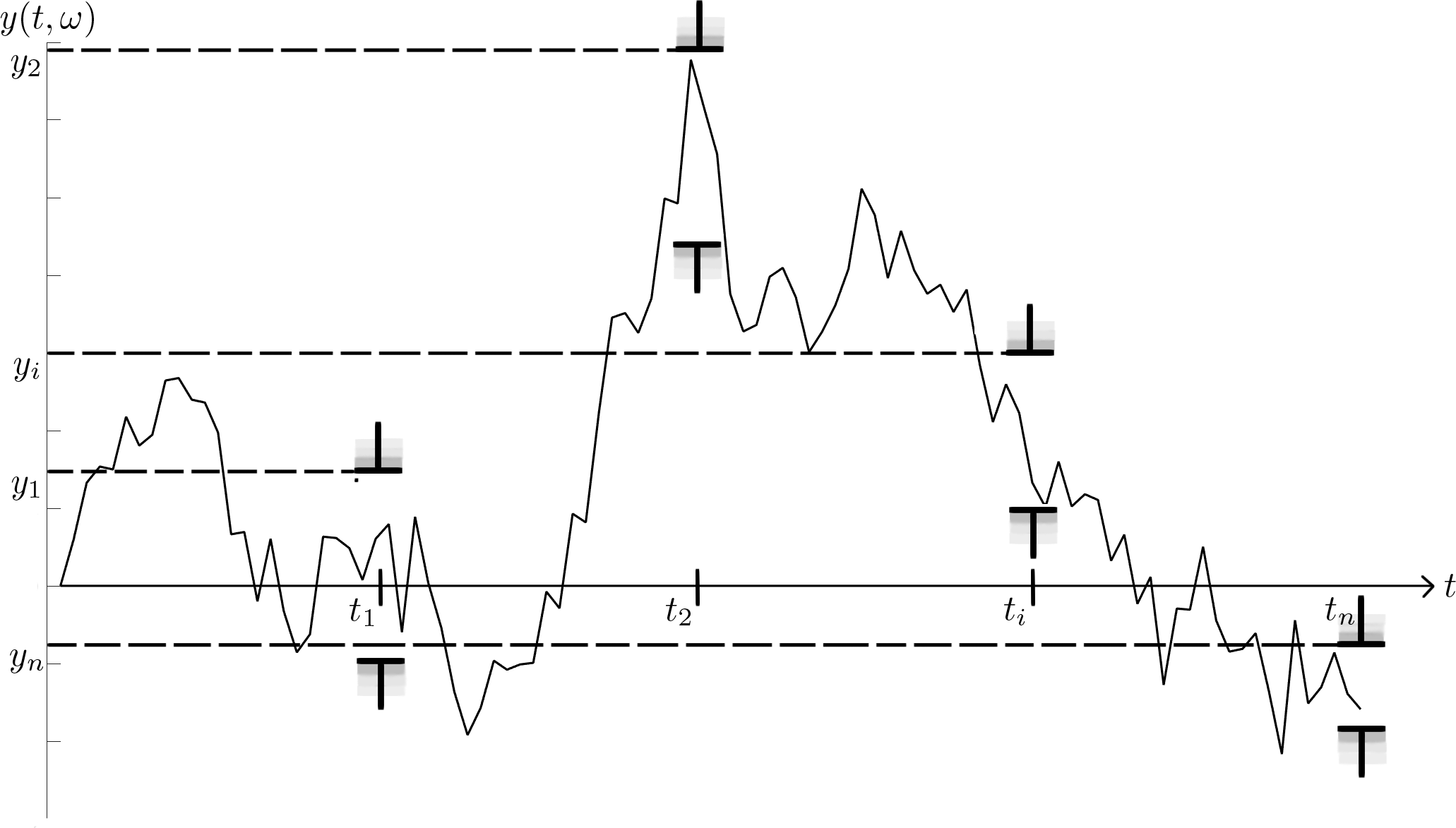}
\caption{Illustration of the cylinder measure. The ``T's" represent the upper and lower borders of the cylinder. The probability is measured as the success/failure proportion of passing within the cylinder, when the cylinder width is varied. The probability of a path is defined as the joint probability over cylinders for each time, $t$.
\label{fig:cylinder}
}
\end{figure}

\paragraph{Filtration.}
An important concept in stochastic process theory is filtration.
Given the framework of sampling in time series, each unit will have multiple paths that it can take.
Before observing anything, there is no information about the outcomes/paths it can take.
However, as time accrues, we are able to learn more about the characteristics of the process and possible paths it can take.
For example, if a process, halfway through, is a large positive number, it is very unlikely that the path will have negative outcomes, even if at the beginning this would be unknown.
The mathematical tools we employ, must take into account this feature of time series and stochastic processes.
Filtration does this by ``filtering out" some paths as time, and information, accrue.
This not only means that observations that have been realized will have filtered out all other paths in the past, but also filters out future paths based on the path taken.

More specifically, in the above discussion, we showed that, when $\omega$ is chosen, the function, $t\mapsto Y_{t}$, is uniquely determined for the stochastic process, $\left\{ Y_{t}\left(\omega\right)\right\} _{t\in\left[0,T\right]}$.
At $t=0$, $\omega$ is completely unknown, and is only fully determined when $t=T$.
For $s\left(<T\right)$, however, we can partially determine $\omega$.
Thus, if $A$ is the cylindrical set, we can partially determine the family of sets of $A$ where $Y_{\cdot}\left(\omega\right)\in A$ is true, and as $s$ increases, that family of sets becomes finer.
The finer $A$ is, the finer the family of subsets of $\Omega$, $Y_{\cdot}^{-1}\left(A\right)$, will be.
As the family of subsets, $Y_{\cdot}^{-1}\left(A\right)$, progresses in time, $\mathscr{F}$ denotes the family of sub-$\sigma$-algebra, $\left\{ \mathscr{F}_{t}\right\} _{t\in T}$.

In general, when $\left\{ \mathscr{F}_{t}\right\} _{t\in T}$ satisfies $\mathscr{F}_{s}\subset\mathscr{F}_{t}$ and $s<t$, it is called the {\it filtration} on $\left(\varOmega,\mathscr{F},P\right)$ (or increasing family of $\sigma$-algebras).
For the stochastic process, $\left\{ Y_{t}\right\} _{t\in T}$, if we set $\mathscr{F}_{t}^{Y}=\sigma\left\{ Y_{s};s\leqq t\right\} $ and $t\in T$, then $\left\{ \mathscr{F}_{t}^{Y}\right\} _{t\in T}$ is a filtration, and called a {\it filtration generated by a stochastic process}.
Filtration is an important concept in stochastic process theory because it describes how information increases over time.
The pair of probability space, $\left(\varOmega,\mathscr{F},P\right)$, and its filtration, $\left\{ \mathscr{F}_{t}\right\} _{t\in T}$, is called a {\it filtered probability space}, and is written as $\left(\varOmega,\mathscr{F},P,\left\{ \mathscr{F}_{t}\right\} _{t\in T}\right)$.

\paragraph{Martingales and predictable processes.}

For a real-valued stochastic process, $\left\{ Y_{t}\right\} _{t\in T}$, on $\left(\Omega,\mathscr{F},\mathbb{P}\right)$ to be a $\left\{ \mathscr{F}_{t}\right\}$-{\it martingale}, $Y_{t}$ must be i) $\mathscr{F}_{t}$-measurable, for each $t\in T$, and ii) $\mathbb{E}\left[Y_{t}\left|\mathscr{F}_{s}\right.\right]=Y_{s}\ \textrm{a.s.}$, for $s<t$.
Martingales may also be written as $\left(\left\{ Y_{t}\right\} ,\left\{ \mathscr{F}_{t}^{X}\right\} \right)$.

For a stochastic process, $\left\{ Y_{t}\right\} $, to be {\it predictable}, a function, $\left(t,\omega\right)\mapsto Y_{t}\left(\omega\right)$, on $\left[0,\infty\right)\times\Omega$ is predictable regarding the predictable $\sigma$-algebra.
A predictable $\sigma$-algebra, $\mathscr{P}$, is the smallest $\sigma$-algebra on $\left[0,\infty\right)\times\Omega$ that makes the projection from $\left[0,\infty\right)\times\Omega$ to $\mathbb{R}$ measurable, for all stochastic processes, $\left\{ Y_{t}\right\} $, that are $\left\{ \mathscr{F}_{t}\right\} $-adaptive and left continuous.
A predictable process for discrete time is when, for time, $s=1,2,\cdots,$ $Y_{n}$ is $\mathscr{F}_{n-1}$-measurable.

\paragraph{Exchangeability.}
One important concept/assumption for causal inference is exchangeability.
While this assumption is relatively simple in the i.i.d. case, it is considerably more complicated in the time series context, as we expound below.

For something to be exchangeable, it must satisfy permutation invariance.
Let $\rho\left(\cdot\right)$ be the permutation group that defines,
\[
P\left(\bigcap_{i=1}^{n}\left(Y_{i}=y_{i},Z_{i}=z_{i},X_{i}=x_{i}\right)\right)=P\left(\bigcap_{i=1}^{n}\left(Y_{i}=y_{\rho\left(i\right)},Z_{i}=z_{\rho\left(i\right)},X_{i}=x_{\rho\left(i\right)}\right)\right).
\]
Assume the model,
\[
Y_{i}=f\left(Z_{i},X_{i},U_{i}\right).
\]
Then, we can derive random assignment from conditional exchangeability:
\[
P\left(\left.\bigcap_{i=1}^{n}\left(f\left(z_{i},X_{i},U_{i}\right)=y_{i}\right)\right|\bigcap_{i=1}^{n}\left(Z_{i}=z_{i}\right)\right)=P\left(\left.\bigcap_{i=1}^{n}\left(f\left(z_{i},X_{\rho\left(i\right)},U_{\rho\left(i\right)}\right)=y_{i}\right)\right|\bigcap_{i=1}^{n}\left(Z_{i}=z_{i}\right)\right).
\]
From conditional exchangeability, we have
\[
\left.f\left(1,X_{j},U_{j}\right)\right|Z_{j}=1\overset{\textrm{d}}{=}\left.f\left(1,X_{k},U_{k}\right)\right|Z_{k}=1,
\]
since $P\left(\bigcap_{i=1}^{n}\left(Z_{i}=z_{i}\right)\right)=P\left(Z_{1}=z_{1},\cdots,Z_{n}=z_{n}\right)$.
Adding the no
interference between units assumption,
\[
\left.f\left(1,X_{j},U_{j}\right)\right|Z_{j}=1\overset{\textrm{d}}{=}\left.f\left(1,X_{k},U_{k}\right)\right|Z_{k}=0,
\]
we have
\[
\left.f\left(1,X_{k},U_{k}\right)\right|Z_{k}=1\overset{\textrm{d}}{=}\left.f\left(1,X_{k},U_{k}\right)\right|Z_{k}=0
\]
and thus obtain, $f\left(1,X_{i},U_{i}\right)\Perp Z_{i}$.
The conditional
exchangeability given $X$ is
\[
P\left(\left.\bigcap_{i=1}^{n}\left(f\left(z_{i},x_{i},U_{i}\right)=y_{i}\right)\right|\bigcap_{i=1}^{n}\left(X_{i}=x_{i},Z_{i}=z_{i}\right)\right)=P\left(\left.\bigcap_{i=1}^{n}\left(f\left(z_{i},x_{i},U_{\rho\left(i\right)}\right)=y_{i}\right)\right|\bigcap_{i=1}^{n}\left(X_{i}=x_{i},Z_{i}=z_{i}\right)\right).
\]

The stochastic process version is given as follows.
Denote the joint cylinder measure of the $n$-unit stochastic process as $\mathbb{P}$.
The exchangeability of a stochastic process is defined as
\[
\mathbb{P}\left(\bigotimes_{i=1}^{n}\left\{ Y_{t,i},Z_{\treat,i},X_{t,i}\right\} \right)=P\left(\bigotimes_{i=1}^{n}\left\{ Y_{t,\rho\left(i\right)},Z_{\treat,\rho\left(i\right)},X_{t,\rho\left(i\right)}\right\} \right).
\]
Let the model be
\[
Y_{t,i}=f\left(Z_{\treat,i},\left\{ X_{s,i}\right\} _{s\leqq t},\left\{ W_{s,i}\right\} _{s\leqq t}\right).
\]
Under random assignment, exchangeability is
\begin{alignat*}{1}
 & \mathbb{P}\left(\left.\bigotimes_{i=1}^{n}\left\{ Y_{t,i},z_{i},X_{t,i},W_{t,i}\right\} \right|\bigcap_{i=1}^{n}\left(Z_{\treat,i}=z_{i}\right)\right)\\
 & =\mathbb{P}\left(\left.\bigotimes_{i=1}^{n}\left\{ Y_{t,i},z_{i},X_{t,\rho\left(i\right)},W_{t,\rho\left(i\right)}\right\} \right|\bigcap_{i=1}^{n}\left(Z_{\treat,i}=z_{i}\right)\right).
 \end{alignat*}
and, for observation data, it is
\begin{alignat*}{1}
 & \mathbb{P}\left(\left.\bigotimes_{i=1}^{n}\left\{ Y_{t,i},z_{i},x_{t,i},W_{t,i}\right\} \right|\bigcap_{i=1}^{n}\left(\left\{ X_{s,i}\right\} =\left\{ x_{s,i}\right\} ,Z_{\treat,i}=z_{i}\right),0\leqq s\leqq T\right)\\
 & =\mathbb{P}\left(\left.\bigotimes_{i=1}^{n}\left\{ Y_{t,i},z_{i},x_{t,i},W_{t,\rho\left(i\right)}\right\} \right|\bigcap_{i=1}^{n}\left(\left\{ X_{t,i}\right\} =\left\{ x_{t,i}\right\} ,Z_{\treat,i}=z_{i}\right),,0\leqq s\leqq T\right).
 \end{alignat*}
This, however, is exchangeability under known, future (post-treatment) $\left\{ X_{t,i}\right\} $, and is the weakest assumption for observational data.
A more realistic definition of exchangeability is assuming only the pre-treatment $\left\{ X_{t,i}\right\} $ is known:
\begin{alignat*}{1}
 & \mathbb{P}\left(\left.\bigotimes_{i=1}^{n}\left\{ Y_{t,i},z_{i},x_{t,i},W_{t,i}\right\} \right|\bigcap_{i=1}^{n}\left(\left\{ X_{s,i}\right\} =\left\{ x_{s,i}\right\} ,Z_{\treat,i}=z_{i}\right),s\leqq\treat\right)\\
 & =\mathbb{P}\left(\left.\bigotimes_{i=1}^{n}\left\{ Y_{t,i},z_{i},x_{t,i},W_{t,\rho\left(i\right)}\right\} \right|\bigcap_{i=1}^{n}\left(\left\{ X_{s,i}\right\} =\left\{ x_{s,i}\right\} ,Z_{\treat,i}=z_{i}\right),s\leqq\treat\right).
 \end{alignat*}
This is a stronger, but more realistic assumption.

\section{Proofs}
\subsection{Preliminary}

We first summarize martingale theory, as it pertains to this paper.
For detailed explanations and proofs, see \cite{Ikeda-Watanabe_89}, \cite{Revuz-Yor_99}.

Assume that the filtration, $\left\{ \mathscr{F}_{t}\right\} $, is given for a complete probability space, $\left(\varOmega,\mathscr{F},P\right)$.
Let $X_{t}^{*}=\sup_{0\leqq s\leqq t}\left|X_{t}\right|$.
Let $\mathscr{M}^{2}=\mathscr{M}^{2}\left(\mathscr{F}_{t}\right)$ be the entirety of {\it square integrable martingales},
\[
M=\left\{ M_{t}\right\} \text{:}\sup_{t\geqq0}\mathbb{E}\left[M_{t}^{2}\right]<\infty,
\]
regarding $\left\{ \mathscr{F}_{t}\right\} $.
Then, $M_{\infty}=\lim_{t\rightarrow\infty}M_{t}$ exists.
The inner product of $\mathscr{M}^{2}$, $\mathbb{E}\left[M_{\infty}N_{\infty}\right]$, is a Hilbert space, and we can introduce $\left\Vert M-N\right\Vert =\left\Vert M_{\infty}-N_{\infty}\right\Vert _{2}$ as a distance (note that $\left\Vert \cdot\right\Vert _{2}$ is a $L^{2}\left(P\right)$-norm).

Regarding $M\in\mathscr{M}^{2}$, we define the {\it quadratic variation process}, $\left[M,M\right]$, as follows.
\[
\text{Partition}\ \varDelta:0=t_{1}<t_{1}<\cdots<t_{n}=t\ \text{satisfies}\ \max_{1\leqq i\leqq n}\left\{ \left|t_{i}-t_{i-1}\right|\right\} \rightarrow0,
\]
$M_{0}^{2}+\sum_{i=1}^{n}\left(M_{t_{i}}-M_{t_{i-1}}\right)^{2}$ converges in probability.
Denote this limit as $\left\langle M,M\right\rangle _{t}$.
Then, $\left\{ \left\langle M,M\right\rangle _{t}\right\} $ is a quadratic variation process of $M$.
Regarding $M,N\in\mathscr{M}^{2}$, we define
\[
\left\langle M,N\right\rangle _{t}=\left(1/4\right)\left\{ \left\langle M+N,M+N\right\rangle _{t}-\left\langle M-N,M-N\right\rangle _{t}\right\},
\]
then $M_{t}N_{t}-\left\langle M,N\right\rangle _{t}$ is a martingale.
This $\left\langle M,M\right\rangle $ is called the {\it predictable quadratic variation process} of $M$.
When $M,N\in\mathscr{M}^{2}$ satisfies $\left\langle M,N\right\rangle =0$, then $M$ and $N$ is said to be {\it orthogonal}, and denoted $M\perp N$.
Since $M_{t}N_{t}-\left\langle M,N\right\rangle _{t}$ is a martingale, $M\perp N$ is equivalent to the product, $M_{t}N_{t}$, being a martingale.

Let the Hilbert space, $L^{2}\left(\left\langle M\right\rangle \right)$, be
\[
L^{2}\left(\left\langle M\right\rangle \right)=\left\{ f_{s}\left|\textrm{predictable and for each }t,\ \mathbb{E}\left[\int_{0}^{t}\left|f_{s}\right|^{2}d\left\langle M\right\rangle _{s}\right]<+\infty\right.\right\},
\]
and let the semi-norm of $f\in L^{2}\left(\left\langle M\right\rangle \right)$ be
\[
\left\Vert f\right\Vert _{t}\left(\left\langle M\right\rangle \right)=\mathbb{E}\left[\int_{0}^{t}\left|f_{s}\right|^{2}d\left\langle M\right\rangle _{s}\right]^{1/2},\ t\in\mathbb{R}_{+}.
\]
This is a complete metric space.
Let $\mathfrak{L}_{0}$ denote the subset of simple functions, $L^{2}\left(\left\langle M\right\rangle \right)$, represented by bounded $\mathscr{F}_{t}$-adapted processes, $f_{t}$.
Thus, for the partition, $0=t_{0}<t_{1}<t_{2}<\cdots<t_{n+1}=\infty$, of $\mathbb{R}_{+}$, there are $\mathscr{F}_{t_{k}}$-measurable functions, $f_{k}$, and $\mathscr{F}_{0}$-measurable function, $f_{-1}$, and
\[
f_{t}=\begin{cases}
f_{-1} & \left(t=0\right),\\
f_{k} & \left(t_{k}<t\leqq t_{k+1}\right).
\end{cases}
\]

\begin{lem}
\cite{Ikeda-Watanabe_89} Chapter2. Lemma 2.1. 
{\it $\mathfrak{L}_{0}$ is a dense subset of $L^{2}\left(\left\langle M\right\rangle \right)$.}
\end{lem}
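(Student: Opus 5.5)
The plan is the standard three-stage reduction: truncate, approximate left-continuous integrands, then extend to all predictable integrands by a monotone class argument. The seminorms $\|\cdot\|_t(\langle M\rangle)$ are nondecreasing in $t$. The topology of $L^2(\langle M\rangle)$ is therefore given by the metric $\sum_{m\ge1}2^{-m}\min\{1,\|f-g\|_m(\langle M\rangle)\}$. So it suffices to show that for each fixed horizon $m$ and each $\varepsilon>0$ there is $g\in\mathfrak{L}_0$ with $\|f-g\|_m(\langle M\rangle)<\varepsilon$. Approximation on $[0,m]$ can then be patched into a single element of $\mathfrak{L}_0$ by setting it to zero after $m$, since the partition is allowed to end with $t_{n+1}=\infty$.

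\emph{Step 1 (bounded integrands).} For $f\in L^2(\langle M\rangle)$, define $f^{(N)}=(f\wedge N)\vee(-N)$. This is again predictable, and $|f-f^{(N)}|^2\le|f|^2$ with $f-f^{(N)}\to0$ pointwise. Dominated convergence with respect to the measure $d\langle M\rangle_s\,dP$ on $[0,m]\times\Omega$ gives $\|f-f^{(N)}\|_m\to0$. Hence it is enough to approximate bounded predictable $f$.

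\emph{Step 2 (left-continuous adapted integrands).} Let $f$ be bounded, adapted and left-continuous. Take dyadic partitions $t_k=k2^{-n}$ and set $f^n_t=f_{t_k}$ for $t_k<t\le t_{k+1}$, with $f^n_0=f_0$. Each $f_{t_k}$ is bounded and $\mathscr{F}_{t_k}$-measurable, so $f^n\in\mathfrak{L}_0$ (after cutting off at $m$). Left-continuity gives $f^n_t\to f_t$ for every $t$ and $\omega$. Since everything is uniformly bounded and $\mathbb{E}[\langle M\rangle_m]=\mathbb{E}[M_m^2-M_0^2]<\infty$, dominated convergence yields $\|f-f^n\|_m\to0$. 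The left-endpoint evaluation is essential here. It is exactly what makes the construction work even if $\langle M\rangle$ has jumps, because the value at a jump time $t$ is recovered as the limit from the left.

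\emph{Step 3 (monotone class).} Let $\mathcal{H}$ be the set of bounded predictable $f$ that lie in the $\|\cdot\|_m$-closure of $\mathfrak{L}_0$. Then $\mathcal{H}$ is a vector space and contains the constants. It is closed under bounded monotone convergence, by dominated convergence and a diagonal choice of approximants. By Step 2 it contains the bounded left-continuous adapted processes. That class is closed under multiplication and, by definition, generates the predictable $\sigma$-algebra $\mathscr{P}$. The functional monotone class theorem then gives that $\mathcal{H}$ contains all bounded $\mathscr{P}$-measurable processes. Combined with Step 1, this proves density.

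The main obstacle I expect is Step 3, specifically checking the hypotheses of the monotone class theorem cleanly. Closure under bounded monotone limits requires approximants in $\mathfrak{L}_0$ for the limit, not just for each term. I would handle this with a $\varepsilon/2^k$ diagonal argument, using that $\|\cdot\|_m$ is an $L^2$ seminorm for a finite measure on $[0,m]\times\Omega$. An alternative I would try first is a Hilbert-space argument: show that any $f$ orthogonal to all indicators $\xi\mathbf{1}_{(s,u]}$, with $\xi$ bounded and $\mathscr{F}_s$-measurable, must vanish $d\langle M\rangle\,dP$-a.e. These indicators generate $\mathscr{P}$ as a $\pi$-system (together with $\mathscr{F}_0$-measurable sets at time $0$), so a Dynkin argument would show that $\int f\,\mathbf{1}_A\,d\langle M\rangle\,dP=0$ for all $A\in\mathscr{P}$.
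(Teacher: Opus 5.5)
Your proof is correct. Note that the paper gives no proof of its own: the lemma is quoted from Ikeda--Watanabe, and the argument there is essentially the Hilbert-space route you list as a fallback.

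\textbf{The cited argument.} Fix $T$ and work in the Hilbert space $L^2([0,T]\times\Omega,\mathscr{P},d\langle M\rangle\,dP)$. Take $\Psi$ orthogonal to $\mathfrak{L}_0$. Testing against $\xi\mathbf{1}_{(s,u]}$, with $\xi$ bounded and $\mathscr{F}_s$-measurable, shows that $X_t=\int_0^t\Psi_s\,d\langle M\rangle_s$ is a martingale. Since $X$ is also predictable and of integrable variation, it vanishes by uniqueness of the Doob--Meyer decomposition, so $\Psi=0$ a.e.

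\textbf{Comparison.} Your main route approximates directly: truncation, then dyadic left-endpoint sampling of bounded left-continuous adapted integrands, then the functional monotone class theorem. It gives a self-contained, essentially constructive proof. It needs only dominated convergence and the definition of $\mathscr{P}$ through left-continuous adapted processes. It does not use completeness of $L^2(\langle M\rangle)$ or Doob--Meyer uniqueness. The orthogonality proof is shorter once that machinery is available.

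Your Dynkin variant of the fallback sits in between. It replaces the Doob--Meyer step with the fact that predictable rectangles generate $\mathscr{P}$. That fact is usually proved by the same left-endpoint approximation as your Step 2.

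\textbf{Minor points.}
\begin{enumerate}
\item In Step 3 no diagonal argument is needed. The $\|\cdot\|_m$-closure of $\mathfrak{L}_0$ is already closed, and $\|h-h_j\|_m\to0$ by bounded convergence.
\item Under the paper's own definition, $\langle M\rangle_0=M_0^2$. So $d\langle M\rangle$ may have an atom at $t=0$; your $\mathscr{F}_0$-measurable value $f_0$ handles it correctly. Also $\mathbb{E}[\langle M\rangle_m]$ equals $\mathbb{E}[M_m^2]$ rather than $\mathbb{E}[M_m^2-M_0^2]$. This is harmless, since only finiteness is used.
\end{enumerate}
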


For the partition, $0=t_{0}<t_{1}<t_{2}<\cdots<t_{n+1}=\infty$, of $\mathbb{R}_{+}$ that determines $\mathfrak{L}_{0}$, the Stiltjes integral,
\[
X_{t}=f_{-1}M_{0}+\sum_{z=0}^{n}f_{t_{k}}\left(M_{t\wedge t_{k+1}}-M_{t\wedge t_{k}}\right)
\]
is called the probability integral of $f_{k}\in\mathfrak{L}_{0}$ by $M_{t}$ and denoted as $\int_{0}^{t}f_{s}dM_{s}$.
For general $f_{t}\in L^{2}\left(\left\langle M\right\rangle \right)$, the probability integral is defined as follows.
Denote the sequence of functions, $\left\{ f^{\left(n\right)}\right\} $, that belongs to $\mathfrak{L}_{0}$ that satisfy $\left\Vert f^{\left(n\right)}-f\right\Vert _{L^{2}\left(\left\langle M\right\rangle \right)}^{2}\rightarrow0$  as $X_{t}^{\left(n\right)}=\int_{0}^{t}f_{s}^{\left(n\right)}dM_{s}$.
There exists a subsequence of $X_{t}^{\left(n\right)}$ that uniform
convergence on compact sets almost surely to some limit $X_{t}$.
When $t$ is fixed, this convergence is also $L^{2}\left(\left\langle M\right\rangle \right)$-convergence.
This $X_{t}$ is called the probability integral of $f_{t}$ by $M$, and is denoted as $\int f_{s}dM_{s}$.
Let the entirety of the probability integral of $M$ be
\[
\mathfrak{L}\left(M\right)=\left\{ \int f_{s}dM_{s};\ f\in L^{2}\left(\left\langle M\right\rangle \right)\right\}.
\]
If $M\perp N$, then $\left\langle \int f\cdot dM,N\right\rangle =\int f\cdot d\left\langle M,N\right\rangle =0$, and $N$ is orthogonal to all elements of $\mathfrak{L}\left(M\right)$.
In general,
\begin{thm}
\cite{Ikeda-Watanabe_89} p.81. 
{\it For any square integral martingale, $M,N\in\mathscr{M}^{2}\left(\mathscr{F}_{t}\right)$, $N$ is expressed as the sum of elements in $\mathfrak{L}\left(M\right)$ and elements that are orthogonal to $\mathfrak{L}\left(M\right)$.}
\end{thm}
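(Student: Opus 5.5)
The plan is to realize $\mathfrak{L}(M)$ as a closed, stopping-stable subspace of the Hilbert space $\mathscr{M}^{2}$ and then to obtain the decomposition of $N$ by orthogonal projection. Two things must be checked: that projection yields \emph{weak} orthogonality ($\mathbb{E}[L_\infty Y_\infty]=0$), and that this can be upgraded to the \emph{strong} orthogonality $\langle L,Y\rangle=0$ in the sense defined above.

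\textbf{Closedness.} I first verify the isometry
\[
\Bigl\Vert \int f\,dM\Bigr\Vert^{2}=\mathbb{E}\Bigl[\int_{0}^{\infty}|f_s|^{2}\,d\langle M\rangle_s\Bigr],
\]
taking $f$ with the global norm, i.e.\ $t=\infty$. For simple $f\in\mathfrak{L}_0$ this follows from orthogonality of martingale increments together with the fact that $M^{2}-\langle M\rangle$ is a martingale. It then extends to all of $L^{2}(\langle M\rangle)$ using the density lemma (Ikeda--Watanabe, Ch.~2, Lemma~2.1) stated just above. Since $L^{2}(\langle M\rangle)$ is complete, the isometric image $\mathfrak{L}(M)$ is a closed linear subspace of $\mathscr{M}^{2}$ with inner product $\mathbb{E}[M_\infty N_\infty]$.

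\textbf{Projection.} By the projection theorem in the Hilbert space $\mathscr{M}^{2}$, I write $N=X+L$ with $X=\int f\,dM\in\mathfrak{L}(M)$ and $\mathbb{E}[L_\infty Y_\infty]=0$ for every $Y\in\mathfrak{L}(M)$.

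\textbf{Upgrading to strong orthogonality (main obstacle).} To get $\langle L,Y\rangle=0$, i.e.\ that $L_tY_t$ is a martingale, I use that $\mathfrak{L}(M)$ is stable under stopping: for any stopping time $\tau$,
\[
\Bigl(\int f\,dM\Bigr)^{\tau}=\int f\,\mathbf 1_{[0,\tau]}\,dM ,
\]
and $f\mathbf 1_{[0,\tau]}$ is again predictable and lies in $L^{2}(\langle M\rangle)$. I would check this identity first on $\mathfrak{L}_0$ and then pass to the limit via the isometry.

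Given stopping stability, for every bounded stopping time $\tau$ I compute
\[
\mathbb{E}[L_\tau Y_\tau]=\mathbb{E}\bigl[\mathbb{E}[L_\infty\mid\mathscr{F}_\tau]\,Y^{\tau}_\infty\bigr]=\mathbb{E}[L_\infty Y^{\tau}_\infty]=0 .
\]
The first equality uses optional sampling for $L$; the last uses $Y^{\tau}\in\mathfrak{L}(M)$. The process $LY$ is integrable with uniformly integrable supremum by Doob's $L^{2}$ inequality. By the standard criterion (an adapted integrable process $U$ with $\mathbb{E}[U_\tau]=\mathbb{E}[U_0]$ for all bounded stopping times is a martingale), together with the case $\tau=0$ handled by the $f_{-1}M_0$ term, $LY$ is a martingale. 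Hence $\langle L,Y\rangle=0$ for all $Y\in\mathfrak{L}(M)$, which gives $N=X+L$ with $X\in\mathfrak{L}(M)$ and $L$ orthogonal to $\mathfrak{L}(M)$.
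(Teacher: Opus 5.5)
Your argument is correct. The paper itself contains no proof of this statement; it is quoted from Ikeda--Watanabe. The only mechanism the paper indicates is the sentence just before the statement: since $\langle \int f\,dM, N\rangle = \int f\,d\langle M,N\rangle$, any $L$ with $\langle L,M\rangle=0$ is automatically strongly orthogonal to all of $\mathfrak{L}(M)$.

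That suggests a route that differs from yours exactly at your ``main obstacle''. After the Hilbert projection, you only need to test $\mathbb{E}[L_\infty Y_\infty]=0$ against the elementary integrals $Y=\int \mathbf{1}_B\mathbf{1}_{(s,u]}\,dM$ with $B\in\mathscr{F}_s$. This gives $\mathbb{E}[\mathbf{1}_B(L_uM_u-L_sM_s)]=0$, i.e.\ $LM$ is a martingale, and the bracket identity does the rest. You can even bypass the projection. By the Kunita--Watanabe inequality $d\langle M,N\rangle\ll d\langle M\rangle$, and a predictable version of $f=d\langle M,N\rangle/d\langle M\rangle$ satisfies $\int f^2\,d\langle M\rangle\le\langle N\rangle_\infty$. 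Then $X=\int f\,dM$ has $\langle N-X,M\rangle=0$ by construction, and the integrand is identified explicitly.

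Your stable-subspace argument trades the bracket identity for the stopping identity $(\int f\,dM)^\tau=\int f\mathbf{1}_{[0,\tau]}\,dM$. In exchange it proves something more general: the Hilbert-space orthogonal complement of \emph{any} closed stopping-stable subspace of $\mathscr{M}^2$ is strongly orthogonal to it.

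One step in your plan needs adjusting. For a general stopping time $\tau$ you cannot literally ``check the identity on $\mathfrak{L}_0$'', because $f\mathbf{1}_{[0,\tau]}$ is not a simple process with a deterministic partition. Doing it for general $\tau$ would require a second approximation $\tau_n\downarrow\tau$. You do not need general $\tau$, however. The martingale criterion you invoke is proved using only $\tau=u$ and $\tau=s\mathbf{1}_A+u\mathbf{1}_{A^c}$ with $A\in\mathscr{F}_s$. For such finitely-valued times $\mathbf{1}_{[0,\tau]}$ is itself simple predictable. The identity is then a direct computation on $\mathfrak{L}_0$, followed by extension through the isometry; both sides are continuous in $f$, since stopping at a bounded time is a contraction on $\mathscr{M}^2$.

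Your choice of the global norm ($t=\infty$) is the right way to reconcile the paper's $L^2$-bounded $\mathscr{M}^2$ with its locally defined $L^2(\langle M\rangle)$. Strong orthogonality to integrals whose integrands are only locally square-integrable then follows by stopping at deterministic times.
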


A {\it semimartingale} regarding the filtration, $\left\{ \mathscr{F}_{t}\right\} $, on $\left(\Omega,\mathscr{F},\mathbb{P}\right)$ is a stochastic process, $\left\{ X_{t}\right\} $, that has the following representation:
\begin{align*}
X_{t}=X_{0}+M_{t}+V_{t}, &  & M_{0}=V_{0}=0.
 \end{align*}
Here, $X_{0}$ is a $\mathscr{F}_{0}$-measurable random variable, $M\in\mathscr{M}^{2}$, $V$ are bounded variation processes.
Assume that the right continuous left limit exists for $M,V$.
This decomposition is called the {\it semimartingale decomposition}.
This decomposition is not unique.
When the bounded variation $V$ is locally integrable bounded variation and predictable, this stochastic process, $\left\{ X_{t}\right\} $, is a {\it special semimartingale}, which its decomposition is unique.
The decomposition of a continuous semimartingale is unique.

Almost all continuous path stochastic processes are special semimatingales.
For example, when the stochastic process, $X$, is $\mathscr{F}_{t}$-adapted, right continuous, $\mathbb{E}\left[\left|X_{t}\right|\right]<\infty,^{\forall}t$, and the variation,
\[
\sup_{0=s_{0}<s_{1}<\cdots<s_{n}=t}\sum_{i=1}^{n}\left|X_{s_{i}}-X_{s_{i-1}}\right|
\]
is integrable, the stochastic process, $X=\left\{ X_{t}\right\} _{t\leqq\infty}$, is called a {\it quasimartingale} on $\left[0,\infty\right]$.
A quasimartingale is a special semimartingale.
Therefore, if a stochastic process, $\left\{ X_{t}\right\} $, satisfies the conditions of quasisemimartingale, it is a semimartingale.

The necessary and sufficient conditions for a stochastic process, $X=\left\{ X_{t}\right\} _{t\leqq\infty}$, to be a quasimartingale (up to infinity) is for the positive right continuous supermatingale, $Y,Z$, to be decomposed to its difference, $X=Y-Z$ \citep{dellacherie2011probabilities}.
A right continuous supermartingale is a special semimartingale and if you take the difference of the two supermartingales, the quasimartingales are all special semimartingales.

The semimartingale version of the martingale orthogonal decomposition theorem is the following  F\"{o}llmer-Schweizer decomposition theorem.
\begin{thm}
\cite{monat1995follmer}
{\it Let $X=\left\{ X_{t}\right\} _{0\leq t\leq T}$ be a $\mathbb{R}^{d}$-valued semimartingale: $X_{t}=X_{0}+M_{t}+A_{t}$.
Assume $M\in\mathscr{M}^{2}\left(\mathscr{F}_{t}\right),M_{0}=0$.
Any $\mathscr{F}=\mathscr{F}_{T}$-measurable random variable, $H\in L^{2}\left(\Omega,\mathscr{F},P\right)$, can be decomposed as
\begin{equation}
H=H_{0}+\int_{0}^{T}\xi_{s}^{H}dX_{s}+L_{T}^{H},\ P\textrm{-a.s.}\label{eq:CausalTimeSeries_FS}
 \end{equation}
Here, $H_{0}$ is a $\mathscr{F}_{0}$-measurable random variable, and $\int_{0}^{T}\xi_{s}^{H}dX_{s}$ is a probability integral.
$L^{H}$ is a martingale orthogonal to $M$, and $L^{H}\in\mathscr{M}^{2}\left(\mathscr{F}_{t}\right),L_{0}^{H}=0$ and $\left\langle L^{H},M\right\rangle =0$.
$\left(H_{0},\xi^{H},L^{H}\right)$ is uniquely determined.
\Eqno{CausalTimeSeries_FS} is called the F\"{o}llmer-Schweizer decomposition.}
\end{thm}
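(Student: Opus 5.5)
The plan is to reduce the statement to the Kunita--Watanabe orthogonal decomposition for square-integrable martingales (the theorem cited from \cite{Ikeda-Watanabe_89} above) and then correct for the finite-variation part $A$ of $X$. As written, the statement leaves implicit the standard hypotheses of \cite{monat1995follmer}, and I would make them explicit at the outset. The first is the \emph{structure condition}: $A_t=\int_0^t d\langle M\rangle_s\,\lambda_s$ for a predictable $\lambda$. The second is that the mean--variance tradeoff $K_t=\int_0^t \lambda_s^{\top} d\langle M\rangle_s\,\lambda_s$ is uniformly bounded on $[0,T]$. Without these, uniqueness and even existence can fail, so the proof has to use them.

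\textbf{Step 1 (martingale case).} If $A\equiv 0$, then $X=X_0+M$. Apply the cited Kunita--Watanabe theorem to the martingale $N_t=\mathbb{E}[H|\mathscr{F}_t]\in\mathscr{M}^2$. This gives $N=N_0+\int\xi\,dM+L$ with $L\perp\mathfrak{L}(M)$. Take $H_0=N_0$, $\xi^H=\xi$ and $L^H=L$. Uniqueness follows from orthogonality: if two decompositions coincide, their difference $\int(\xi-\xi')\,dM=L'-L$ is both in $\mathfrak{L}(M)$ and orthogonal to it, hence zero.

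\textbf{Step 2 (closedness of integrals).} Let $\Theta$ be the set of predictable $\xi$ with
\[
\mathbb{E}\Bigl[\int_0^T\xi^{\top}d\langle M\rangle\,\xi+\Bigl(\int_0^T|\xi^{\top}dA|\Bigr)^2\Bigr]<\infty .
\]
Under bounded $K$, Cauchy--Schwarz gives $\int_0^T|\xi^{\top}dA|\le K_T^{1/2}\bigl(\int_0^T\xi^{\top}d\langle M\rangle\,\xi\bigr)^{1/2}$. So $\Theta=L^2(\langle M\rangle)$, and the map $\xi\mapsto\int\xi\,dX$ is continuous into $L^2(P)$. I would then show that $G_T(\Theta)=\{\int_0^T\xi\,dX\}$ is closed in $L^2(P)$. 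This is where I expect the main difficulty, since an $L^2$-limit of $\int\xi^n dX$ need not obviously be such an integral when $A$ is present. My first attempt would be to pass to the minimal martingale measure with density $\hat Z=\mathcal{E}\bigl(-\int\lambda\,dM\bigr)$, which is square integrable because $K$ is bounded. I would then prove an estimate of the form $\|\int\xi\,dX\|_{L^2}\ge c\,\|\xi\|_{L^2(\langle M\rangle)}$ via a backward induction/time-splitting argument, splitting $[0,T]$ at stopping times where increments of $K$ are small. Such a reverse estimate gives closedness directly.

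\textbf{Step 3 (construction).} This is the approach of \cite{monat1995follmer}. Define the map $\Phi$ that sends a candidate $\xi$ to the Kunita--Watanabe integrand of $H+\int_0^T\xi\,dA$ with respect to $M$, given by Step 1. A fixed point $\xi^H$ of $\Phi$ gives $H+\int\xi^H dA=H_0+\int\xi^H dM+L_T^H$, which rearranges to the desired decomposition. I would obtain the fixed point by iterating and proving convergence in $L^2(\langle M\rangle)$, again using smallness of $K$ on subintervals and pasting the intervals backward in time. Alternatively, the same fixed point can be obtained as the solution of the backward equation $V_t=H-\int_t^T\xi\,dX-(L_T-L_t)$.

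\textbf{Step 4 (uniqueness).} If two decompositions exist, subtracting gives $0=\Delta H_0+\int\Delta\xi\,dX+\Delta L_T$ with $\Delta L\perp M$. Taking the predictable covariation with $M$ on $[t,T]$ and applying the Step 2 estimate to $\int\Delta\xi\,dX$ on small subintervals forces $\Delta\xi=0$. It then follows that $\Delta L=0$ and $\Delta H_0=0$.
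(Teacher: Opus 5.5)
The paper gives no proof of this statement. It is imported from \cite{monat1995follmer} and used as a black box in the proof of Theorem~\ref{thm:ito}, so your proposal has to stand on its own. You are right that the statement as printed is incomplete, and in fact false. Without the structure condition and a bound on $K$, and without fixing the class $\Theta$ for $\xi^H$, take $M\equiv 0$ and $A_t=t$. Then every bounded predictable $\xi$ gives a decomposition, with $H_0=\mathbb{E}[H-\int_0^T\xi_s\,ds|\mathscr{F}_0]$ and $L_t=\mathbb{E}[H-\int_0^T\xi_s\,ds|\mathscr{F}_t]-H_0$. Your skeleton is the standard route to this result: Kunita--Watanabe, a contraction on intervals where $K$ grows little, backward pasting, and uniqueness from the same local estimate. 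There is a sign slip in Step 3: $\Phi(\xi)$ must be the Kunita--Watanabe integrand of $H-\int_0^T\xi\,dA$. With your $+$ sign the fixed point gives $H=H_0+\int\xi\,dM-\int\xi\,dA+L_T$, which is a decomposition with respect to $M-A$, not $X$.

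The real gap is that your splitting argument fails when $K$ has large jumps. That is exactly the case that separates general bounded $K$ from small or continuous $K$. On a stochastic interval $I$ your map $\Phi$ has Lipschitz constant $\|\int_I dK\|_\infty^{1/2}$, jumps included, so a single jump with $\Delta K_\tau\ge 1$ can never be split away.

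The fix is to set $\tau_i=\inf\{t>\tau_{i-1}:K_t-K_{\tau_{i-1}}\ge\varepsilon\}\wedge T$. These times are predictable, and $\tau_N=T$ for $N>\|K_T\|_\infty/\varepsilon$; this is where \emph{uniform} boundedness of $K$ enters. Only the open intervals $(\tau_{i-1},\tau_i)$ have $K$-increment at most $\varepsilon$. The atoms $\{\tau_i\}$ need an explicit one-period decomposition. Given $V\in L^2(\mathscr{F}_{\tau_i})$ from the later interval, solve $\Delta\langle M\rangle_{\tau_i}\xi_{\tau_i}=\mathbb{E}[V\,\Delta M_{\tau_i}|\mathscr{F}_{\tau_i-}]$ (solvable, e.g. via the pseudo-inverse). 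Then set $V_{\tau_i-}=\mathbb{E}[V|\mathscr{F}_{\tau_i-}]-\xi_{\tau_i}^{\top}\Delta A_{\tau_i}$ and $\Delta L_{\tau_i}=V-\mathbb{E}[V|\mathscr{F}_{\tau_i-}]-\xi_{\tau_i}^{\top}\Delta M_{\tau_i}$. This works for any size of $\Delta K_{\tau_i}$, because the drift term is $\mathscr{F}_{\tau_i-}$-measurable and does not feed back into the integrand. The bound $|\xi^{\top}\Delta A|\le(\xi^{\top}\Delta\langle M\rangle\xi)^{1/2}(\Delta K)^{1/2}$ keeps everything in $L^2$.

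The same computation gives uniqueness at the atoms. On the open intervals, uniqueness follows from $\|\int\Delta\xi\,dM\|_2^2\le\|\int\Delta\xi\,dA\|_2^2\le\varepsilon\|\Delta\xi\|_{L^2(\langle M\rangle)}^2$ with $\varepsilon<1$. With this in place Step 2 is unnecessary. Closedness of $G_T(\Theta)$ is a corollary of existence, uniqueness and the continuity of $H\mapsto\xi^H$ that the construction delivers, not an input. The minimal-martingale-measure route is also unavailable in general, since with jumps $\hat Z$ can be negative. If $X$ is continuous, as in the paper's later use, then $K$ is continuous and your splitting works as written.
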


\Eqno{CausalTimeSeries_FS} can be seen as a stochastic process that projects any martingale to a space spanned by a given semimartingale.
This is because, if $M_{t}^{Y}=\mathbb{E}\left[H\left|\mathscr{F}_{t}\right.\right]$, $M_{t}^{Y}$ is a martingale,and for any martingale, $M_{t}^{Y}$, we have $M_{t}^{Y}=\mathbb{E}\left[M_{T}^{Y}\left|\mathscr{F}_{t}\right.\right]$.
Therefore, the semimartingale, $Y_{t}=Y_{0}+M_{t}^{Y}+A_{t}^{Y}$, can be expressed as
\begin{equation}
Y_{t}=Y_{0}+A_{t}^{Y}+\int_{0}^{t}\xi_{s}^{Y}dX_{s}+L_{t}^{Y},\ \left\langle M,L^{Y}\right\rangle =0,\label{eq:CausalTimeSeries_FS2}
 \end{equation}

The error term in  \eqno{CausalTimeSeries_FS2}, $L_{t}^{Y}$, is a martingale stochastic process that is orthogonal, $\left\langle M,L^{Y}\right\rangle =0$.
Thus, $\mathbb{E}\left[\varDelta L_{t}\varDelta M_{t}\right]=0$ holds.
This does not mean that the conditional expectation is zero, $\mathbb{E}\left[\varDelta L_{t}\left|\mathscr{F}_{t}\right.\right]=0$.
However, if we assume the following assumption regarding the martingale portion, $M_{t}$, of the covariate, $X_{t}$, we have $\mathbb{E}\left[\varDelta L_{t}\left|\mathscr{F}_{t}\right.\right]=0$.
\begin{thm}
{\it $\left\langle \boldsymbol{\eta},\boldsymbol{M}_{t}\right\rangle $
are martingales with regard to $^{\forall}\boldsymbol{\eta}\in\mathbb{R}^{d}$,
and assume, 
\begin{equation}
\mathbb{E}\left[\left.\left\langle \boldsymbol{\eta},\varDelta\boldsymbol{M}_{t}\right\rangle ^{2}\right|\left\{ Y_{s},\boldsymbol{X}_{s}\right\} _{0\leqq s\leqq t}\right]=\left\langle \boldsymbol{\eta},\Sigma\left(t\right)\boldsymbol{\eta}\right\rangle ,\label{eq:CausalTimeSeries_BrownRep}
 \end{equation}
where $\Sigma\left(t\right)$ is a $\left\{ Y_{s},\boldsymbol{X}_{s}\right\} _{0\leqq s\leqq t}$-measurable
$\mathbb{R}^{d\times d}$ matrix value function. Then, 
\[
\mathbb{E}\left[\varDelta L_{t}\left|\boldsymbol{X}_{t+1},\left\{ Y_{s},\boldsymbol{X}_{s}\right\} _{s=1}^{t}\right.\right]=0,
\]
holds.}
\end{thm}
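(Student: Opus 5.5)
The plan is to work in discrete time, where the relevant filtration at step $t$ is $\mathscr{G}_t=\sigma\{Y_s,\boldsymbol{X}_s;\,s\leqq t\}$. The F\"{o}llmer--Schweizer decomposition \eqref{eq:CausalTimeSeries_FS2} gives the one-step increment
\[
\varDelta Y_{t+1}=\varDelta A^{Y}_{t+1}+\xi^{Y}_{t+1}\varDelta\boldsymbol{X}_{t+1}+\varDelta L_{t+1},
\]
with $\xi^{Y}$ predictable, $L^{Y}$ a square-integrable martingale, $L^{Y}_0=0$ and $\langle L^{Y},M\rangle=0$. Predictability of $\langle M,L\rangle$ means the orthogonality holds conditionally, not just in expectation:
\[
\mathbb{E}\left[\varDelta L_{t+1}\langle\boldsymbol{\eta},\varDelta\boldsymbol{M}_{t+1}\rangle\,\big|\,\mathscr{G}_t\right]=0\quad\text{for all }\boldsymbol{\eta}\in\mathbb{R}^d .
\]
Since $L$ is a martingale, we also have $\mathbb{E}[\varDelta L_{t+1}\,|\,\mathscr{G}_t]=0$. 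The goal is to strengthen this to conditioning on the enlarged $\sigma$-field $\mathscr{G}_t\vee\sigma(\boldsymbol{X}_{t+1})$. Because $\boldsymbol{X}_{t+1}=\boldsymbol{X}_t+\varDelta\boldsymbol{M}_{t+1}+\varDelta\boldsymbol{A}_{t+1}$ and $\varDelta\boldsymbol{A}_{t+1}$ is $\mathscr{G}_t$-measurable by predictability, this enlarged $\sigma$-field equals $\mathscr{G}_t\vee\sigma(\varDelta\boldsymbol{M}_{t+1})$. The claim therefore reduces to
\[
\mathbb{E}\left[\varDelta L_{t+1}\,g(\varDelta\boldsymbol{M}_{t+1})\,\big|\,\mathscr{G}_t\right]=0
\]
for a measure-determining class of bounded functions $g$.

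The key step uses the hypothesis \eqref{eq:CausalTimeSeries_BrownRep}. It says the conditional second moments of $\langle\boldsymbol{\eta},\varDelta\boldsymbol{M}_{t+1}\rangle$ given the past form a quadratic form $\langle\boldsymbol{\eta},\Sigma(t)\boldsymbol{\eta}\rangle$ with $\Sigma(t)$ depending only on the past. I read this, as the label ``BrownRep'' suggests, as a L\'evy-type characterization: a martingale increment whose conditional quadratic variation is a predictable deterministic-given-the-past matrix is conditionally Gaussian, $\varDelta\boldsymbol{M}_{t+1}\,|\,\mathscr{G}_t\sim N(0,\Sigma(t))$. Concretely, I would embed the discrete increments in the continuous semimartingale of the paper's setting and apply L\'evy's characterization to $\int\Sigma^{-1/2}\,d\boldsymbol{M}$ to obtain a Brownian motion $\boldsymbol{B}$.

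With conditional Gaussianity in hand, I would invoke the Brownian martingale representation theorem (the Ikeda--Watanabe result cited above, that $N$ splits into an element of $\mathfrak{L}(M)$ plus an orthogonal part) applied to the exponentials $g(\boldsymbol{m})=\exp(i\langle\boldsymbol{\eta},\boldsymbol{m}\rangle)$. Each $\exp(i\langle\boldsymbol{\eta},\varDelta\boldsymbol{M}_{t+1}\rangle)$ minus its $\mathscr{G}_t$-conditional mean is a stochastic integral against $\boldsymbol{M}$ over $(t,t+1]$, hence lies in $\mathfrak{L}(M)$. Orthogonality $\langle L,M\rangle=0$ then gives
\[
\mathbb{E}\left[\varDelta L_{t+1}\,\bigl(g-\mathbb{E}[g\,|\,\mathscr{G}_t]\bigr)\,\big|\,\mathscr{G}_t\right]=0 .
\]
Combining this with $\mathbb{E}[\varDelta L_{t+1}\,|\,\mathscr{G}_t]=0$ yields $\mathbb{E}[\varDelta L_{t+1}\,g\,|\,\mathscr{G}_t]=0$. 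Since the exponentials determine conditional laws (complex trigonometric polynomials are dense), a monotone class argument extends this to all bounded $\sigma(\boldsymbol{X}_{t+1})\vee\mathscr{G}_t$-measurable functions, and the claim follows.

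The main obstacle is the second step: turning the second-moment condition into the statement that every function of $\varDelta\boldsymbol{M}_{t+1}$ lies in the stochastic-integral space of $M$. In genuinely discrete time, a second-moment condition alone does not force Gaussianity, and nonlinear functions of $\varDelta\boldsymbol{M}$ need not be spanned by linear integrals. I would try first to justify this through the continuous-time embedding together with L\'evy's characterization. If that fails, I would weaken the conclusion to linear functionals. In that case the result follows directly by taking $g$ linear in $\boldsymbol{\eta}$, but it then only establishes uncorrelatedness, which would have to suffice for the DLM observation equation.
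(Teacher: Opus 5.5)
The paper states this result without proof, so your argument has to stand on its own. It does not: the step you flag as the main obstacle is a genuine gap, and no argument can close it, because the statement as written is false.

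In discrete time the second-moment hypothesis is vacuous. For any square-integrable $\boldsymbol{M}$ it holds with $\Sigma(t)=\mathbb{E}[\varDelta\boldsymbol{M}_t\varDelta\boldsymbol{M}_t^{\top}\,|\,\{Y_s,\boldsymbol{X}_s\}_{s\leqq t}]$, so it cannot yield conditional Gaussianity. Conditional Gaussianity of $\varDelta\boldsymbol{M}_t$ would not help either. Over one step, $\mathfrak{L}(M)$ consists only of $h\,\varDelta\boldsymbol{M}_t$ with $h$ past-measurable, so orthogonality removes only the linear component of $\varDelta L_t$; $g(\varDelta\boldsymbol{M}_t)$ minus its conditional mean is not in that space unless $g$ is affine. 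Here is a concrete counterexample:
\begin{itemize}
\item Take $d=1$, $\varepsilon_s$ i.i.d.\ $N(0,1)$, $X_t=\sum_{s\leqq t}\varepsilon_s$ and $Y_t=\sum_{s\leqq t}(\varepsilon_s^2-1)$.
\item $X$ is itself a martingale and $\Sigma\equiv 1$.
\item The F\"{o}llmer--Schweizer integrand is $\mathbb{E}[\varepsilon^3-\varepsilon]=0$, so $L=Y$ and $\langle L,M\rangle=0$.
\item Yet with $\varDelta L_t=L_{t+1}-L_t$, we get $\mathbb{E}[\varDelta L_t\,|\,X_{t+1},\{Y_s,X_s\}_{s\leqq t}]=\varepsilon_{t+1}^2-1\neq 0$.
\end{itemize}

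The continuous-time embedding does not rescue this. Conditional Gaussianity of a unit increment needs $\langle\boldsymbol{M}\rangle_s-\langle\boldsymbol{M}\rangle_t$, for $s\in(t,t+1]$, to be fixed at time $t$. That is a pathwise condition, and a one-step conditional moment does not imply it. With merely predictable volatility, $\varDelta\boldsymbol{M}$ is a Gaussian mixture. Also, predictability no longer makes $\varDelta\boldsymbol{A}_{t+1}$ measurable at time $t$, so your reduction to $\sigma(\varDelta\boldsymbol{M}_{t+1})$ breaks as well.

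For example, let $B,W$ be independent Brownian motions, $X=\int\sigma\,dB$ and $Y=W$. On each $(k,k+1]$, set $\sigma_s=1$ for $s\in(k,k+\tfrac12]$. For $s\in(k+\tfrac12,k+1]$, set $\sigma_s=2$ if $W_{k+1/2}>W_k$ and $\sigma_s=1$ otherwise. Then $L=W$, $\langle L,M\rangle=0$, and the unit-step conditional second moment is the constant $7/4$. But $|\varDelta X|$ carries information about the sign of $W_{k+1/2}-W_k$, so $\mathbb{E}[\varDelta L_k\,|\,X_{k+1},\text{past}]\neq 0$.

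Your fallback to linear $g$ only recovers $\mathbb{E}[\varDelta L_t\,\varDelta\boldsymbol{M}_t\,|\,\text{past}]=0$, which is the orthogonality you started from, not the stated conclusion. To get a true statement you must add hypotheses. One option is joint conditional Gaussianity of $(\varDelta L_t,\varDelta\boldsymbol{M}_t)$ given the past; then zero conditional covariance gives conditional independence and the claim is immediate. The other option is in continuous time and needs three conditions:
\begin{itemize}
\item $\boldsymbol{B}=\int\Sigma^{-1/2}\,d\boldsymbol{M}$ is a Brownian motion for the full filtration;
\item $\langle L,M\rangle\equiv 0$ on the whole interval, not just on the integer grid;
\item $\Sigma_s$ and $d\boldsymbol{A}_s$ for $s\in(t,t+1]$ are fixed at time $t$.
\end{itemize}
Under these, your exponential/representation argument does go through.
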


As a result, under the assumption of \eqno{CausalTimeSeries_BrownRep}, if there are no unobserved confounders, the right hand side of \eqno{CausalTimeSeries_FS2} is the correctly identified regression model.

\subsection{Proof of Theorem~\ref{thm:dipw}}\label{sm:dipw}
\subsubsection{Unbiasedness}
The DIPW estimator is defined as
\begin{align}
\widehat{\mu}^{\,1}_{t,\mathrm{DIPW}}
    &= \frac{1}{n}\sum_{i=1}^n \frac{Z_i}{p_i}\,Y_{t,i}, \label{eq:sDIPW}\\
\widehat{\mu}^{\,0}_{t,\mathrm{DIPW}}
    &= \frac{1}{n}\sum_{i=1}^n \frac{1-Z_i}{1-p_i}\,Y_{t,i}. \nonumber
\end{align}
Then, because
\begin{alignat*}{1}
\mathbb{E}\left[\frac{ZY_{t}}{p\left(\left.Z=1\right|\mathscr{F}_{\treat}^{X}\right)}\right] & =\mathbb{E}\left[\frac{Z^{2}Y_{t}(Z=1)+Z\left(1-Z\right)Y_{t}(Z=0)}{p\left(\left.Z=1\right|\mathscr{F}_{\treat}^{X}\right)}\right]=\mathbb{E}\left[\frac{ZY_{t}(Z=1)}{p\left(\left.Z=1\right|\mathscr{F}_{\treat}^{X}\right)}\right]\\
 & =\mathbb{E}\left[\mathbb{E}\left[\left.\frac{Z}{p\left(\left.Z=1\right|\mathscr{F}_{\treat}^{X}\right)}\right|\mathscr{F}_{\treat}^{X}\right]\mathbb{E}\left[\left.Y_{t}(Z=1)\right|\mathscr{F}_{\treat}^{X}\right]\right]\\
 & =\mathbb{E}\left[\mathbb{E}\left[\left.Y_{t}(Z=1)\right|\mathscr{F}_{\treat}^{X}\right]\right]=\mathbb{E}\left[Y_{t}(Z=1)\right].\\
\mathbb{E}\left[\frac{\left(1-Z\right)Y_{t}}{1-p\left(\left.Z=1\right|\mathscr{F}_{\treat}^{X}\right)}\right] & =\mathbb{E}\left[\frac{\left(1-Z\right)ZY_{t}(Z=1)+\left(1-Z\right)^{2}Y_{t}(Z=0)}{1-p\left(\left.Z=1\right|\mathscr{F}_{\treat}^{X}\right)}\right]=\mathbb{E}\left[\frac{\left(1-Z\right)Y_{t}(Z=0)}{1-p\left(\left.Z=1\right|\mathscr{F}_{\treat}^{X}\right)}\right]\\
 & =\mathbb{E}\left[\mathbb{E}\left[\left.\frac{\left(1-Z\right)}{1-p\left(\left.Z=1\right|\mathscr{F}_{\treat}^{X}\right)}\right|\mathscr{F}_{\treat}^{X}\right]\mathbb{E}\left[\left.Y^{0}\right|\mathscr{F}_{\treat}^{X}\right]\right]\\
 & =\mathbb{E}\left[\mathbb{E}\left[\left.Y_{t}(Z=0)\right|\mathscr{F}_{\treat}^{X}\right]\right]=\mathbb{E}\left[Y_{t}(Z=0)\right],
 \end{alignat*}
which is an unbiased estimator.

\subsubsection{Consistency}
\begin{proof}
First, fix each $t$ for the pre-treatment period, $0\leqq t<\treat$.
From the strong law of large numbers, we have,
\[
\frac{1}{n}{\displaystyle \sum_{i=1}^{n}}Y_{t}\overset{\textrm{a.s.}}{\rightarrow}\mathbb{E}\left[Y_{t}\right],\ \textrm{as }n\rightarrow\infty.
\]
Then, there exists a partition, $0\leqq t_{0}<\cdots<t_{k}<t_{k+1}=\treat$, on the interval, $\left[0,\treat\right]$, for any $\varepsilon>0$, such that $\left|\mathbb{E}\left[Y_{t_{i}}-Y_{t_{i-1}}\right]\right|<\varepsilon$,
$\left(0\leqq i\leqq k\right)$.
Since there is a jump at $t=\treat$, $t_{k+1}$ is set so that it overlaps with the delimiter of the partition $\treat$.
Then, for $t_{i-1}\leqq t<t_{i}$, we have
\begin{alignat*}{1}
\frac{1}{n}{\displaystyle \sum_{i=1}^{n}}Y_{t_{i-1}}-\mathbb{E}\left[Y_{t_{i-1}}\right]-\varepsilon & \leqq\frac{1}{n}{\displaystyle \sum_{i=1}^{n}}Y_{t_{i-1}}-\mathbb{E}\left[Y_{t_{i-1}}\right]\leqq\frac{1}{n}{\displaystyle \sum_{i=1}^{n}}Y_{t_{i}-}-\mathbb{E}\left[Y_{t_{i}-}\right]+\varepsilon.
 \end{alignat*}
As a result, we have $\limsup_{t\in\left[0,\treat\right)}\left|\widehat{\mathbb{E}\left[Y_{t}^{z}\right]}-\mathbb{E}\left[Y_{t}^{z}\right]\right|\leqq\varepsilon$,
$\left(z=0,1\right)$.

Next, fixing each $t$ for the post-treatment, $\treat\leqq t\leqq T$, from the strong law of large numbers, we have,
\[
\frac{1}{n}{\displaystyle \sum_{i=1}^{n}}\frac{Z_{i}}{p\left(\left.Z_{i}=1\right|\mathscr{F}_{\treat}^{X}\right)}Y_{t}\overset{\textrm{a.s.}}{\rightarrow}\mathbb{E}\left[Y_{t}\right],\ \textrm{as }n\rightarrow\infty.
\]
Then, there exists a partition, $\treat\leqq t_{0}<\cdots<t_{k}=T$, on the interval, $\left[\treat,T\right]$, for any $\varepsilon>0$, such that $\left|\mathbb{E}\left[Y_{t_{i}}-Y_{t_{i-1}}\right]\right|<\varepsilon$,
$\left(0\leqq i\leqq k\right)$.
Then, for $t_{i-1}\leqq t<t_{i}$, we have
\begin{alignat*}{1}
\frac{1}{n}{\displaystyle \sum_{i=1}^{n}}\frac{Z_{i}}{p\left(\left.Z_{i}=1\right|\mathscr{F}_{\treat}^{X}\right)}Y_{t_{i-1}}-\mathbb{E}\left[Y_{t_{i-1}}\right]-\varepsilon & \leqq\frac{1}{n}{\displaystyle \sum_{i=1}^{n}}\frac{Z_{i}}{p\left(\left.Z_{i}=1\right|\mathscr{F}_{\treat}^{X}\right)}Y_{t_{i-1}}-\mathbb{E}\left[Y_{t_{i-1}}\right]\\
 & \leqq\frac{1}{n}{\displaystyle \sum_{i=1}^{n}}\frac{Z_{i}}{p\left(\left.Z_{i}=1\right|\mathscr{F}_{\treat}^{X}\right)}Y_{t_{i}-}-\mathbb{E}\left[Y_{t_{i}-}\right]+\varepsilon.
 \end{alignat*}
As a result, we have $\limsup_{t\in\left[\treat,T\right]}\left|\widehat{\mathbb{E}\left[Y_{t}^{z}\right]}-\mathbb{E}\left[Y_{t}^{z}\right]\right|\leqq\varepsilon$,
$\left(z=0,1\right)$.
\end{proof}

\section{Proof of Theorem~\ref{thm:approx}}\label{sm:approx}

Note that a special semimartingale, $\left\{ Y_{t}\right\} _{t\in\left[0,T\right]}$, regarding filtration, $\left\{ \mathscr{F}_{t}\right\} $, on $\left(\Omega,\mathscr{F},\mathbb{P}\right)$ is a continuous time stochastic process that can be expressed as
\begin{equation}
Y_{t}=Y_{0}+V_{t}^{Y}+M_{t}^{Y}.\label{eq:TSCausal_semi}
 \end{equation}
Here, $Y_{0}$ is a $\mathscr{F}_{0}$-measurable random variable, $M^{Y}\in\mathscr{M}^{2}$ is a square integrable martingale,  and $V^{Y}$ is local integrable bounded variation and predictable.
Then, we have the following theorem:
\begin{thm}\label{thm:martingale}

{\it Let $\left\{ Y_{t}(Z=1),Y_{t}(Z=0)\right\} _{t\in\left[0,T\right]}$ satisfy the following assumptions: 

\noindent
$\left\{ Y_{t}(Z=1),Y_{t}(Z=0)\right\} _{t\in\left[0,T\right]}$ is a stochastic process with continuous paths, and $\left\{ Y_{t}(Z=1),Y_{t}(Z=0)\right\} _{t\in\left[0,T\right]}$ is $\mathscr{F}_{t}$-adapted and square integrable,
\[
\sup_{t\in\left[0,T\right]}\mathbb{E}\left[Y_{t}^{2}\right]<\infty.
\]
Further, let the mean variation be bounded,
\[
\sup_{\delta}\mathbb{E}\left[\sum_{i=0}^{n-1}\left|\mathbb{E}\left[Y_{t_{i+1}}-Y_{t_{i}}\left|\mathscr{F}_{t_{i}}\right.\right]\right|+\left|X_{t_{n}}\right|\right]<\infty.
\]
Here, the partition of $\sup_{\delta}$ is taken with regard to $\delta=\left(t_{0},t_{1},\cdots,t_{n}\right)$, $0\leqq t_{0}<t_{1}<\cdots<t_{n}$.
Then, $\left\{ Y_{t}(Z=1),Y_{t}(Z=0)\right\} _{t\in\left[0,T\right]}$ is a special semimartingale regarding the filtration, $\left\{ \mathscr{F}_{t}\right\} $.
Thus,
\[
\begin{aligned}Y_{t} & =Y_{0}+V_{t}^{Y}+M_{t}^{Y}.\end{aligned}
\]
Here, $Y_{0}$ is a $\mathscr{F}_{0}$-measurable random variable, $M_{t}^{Y}$ is a square integrable martingale, and $V^{Y}$ is local integrable bounded variation and predictable.}
\end{thm}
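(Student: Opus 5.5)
The plan is to show that each of $Y(1):=\{Y_t(Z=1)\}$ and $Y(0):=\{Y_t(Z=0)\}$ is a quasimartingale on $[0,T]$, and then invoke the facts recalled in the Preliminary subsection: a quasimartingale is a difference of two positive right-continuous supermartingales (Rao's theorem, \citealp{dellacherie2011probabilities}), and a right-continuous supermartingale is a special semimartingale. We treat a fixed component $Y$ (either $Y(1)$ or $Y(0)$) and read the mean-variation hypothesis as stated for $Y$, i.e. with $|Y_{t_n}|$ in place of $|X_{t_n}|$. The quasimartingale conditions then hold as follows. Adaptedness and continuity of paths, hence right-continuity, are assumed. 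Integrability $\mathbb{E}|Y_t|<\infty$ follows from $\sup_t\mathbb{E}[Y_t^2]<\infty$. The bounded conditional mean variation $\sup_\delta \mathbb{E}\bigl[\sum_i |\mathbb{E}[Y_{t_{i+1}}-Y_{t_i}\mid\mathscr{F}_{t_i}]|\bigr]<\infty$ is exactly the quasimartingale condition in the sense of Fisk/Rao. Note that this is the conditional version rather than the raw path variation written in the Preliminary. Since the paper's definition is looser, we would explicitly work with the Rao conditional-variation definition, which is what the supermartingale decomposition uses.

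Given the quasimartingale property, Rao's decomposition gives $Y=S^1-S^2$ with $S^1,S^2$ right-continuous, nonnegative supermartingales. Each $S^j$ is of class (D) on the compact interval $[0,T]$, because a positive supermartingale is dominated by the martingale part of its Doob decomposition. The Doob--Meyer theorem then yields $S^j=S^j_0+N^j-A^j$, where $N^j$ is a martingale and $A^j$ is increasing, predictable, and integrable. Setting $Y_0$ as given, $M^Y:=N^1-N^2$ and $V^Y:=A^2-A^1$, we obtain $Y_t=Y_0+V^Y_t+M^Y_t$. Here $V^Y$ is predictable and of integrable (hence locally integrable) bounded variation, which is the required special semimartingale decomposition. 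Because $Y$ has continuous paths, we can also take $V^Y$ and $M^Y$ continuous, and uniqueness follows from the remark in the Preliminary that continuous semimartingale decompositions are unique.

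The main obstacle is upgrading $M^Y$ from a martingale to a \emph{square-integrable} martingale in $\mathscr{M}^2$, since Doob--Meyer only gives integrability. Our first attempt is to write $M^Y_t=Y_t-Y_0-V^Y_t$ and bound $\sup_t\mathbb{E}[(V^Y_t)^2]$. In the continuous-path case the compensator is continuous and $V^Y$ is the $L^1$ limit of the discrete sums $\sum_i\mathbb{E}[Y_{t_{i+1}}-Y_{t_i}\mid\mathscr{F}_{t_i}]$. Its second moment can be controlled via the energy inequality for predictable increasing processes, $\mathbb{E}[(A_T)^2]\le 2\,\mathbb{E}[A_T\,\sup_t |Y_t|]$-type bounds, combined with $\sup_t\mathbb{E}[Y_t^2]<\infty$ and Doob's maximal inequality. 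If that bound fails to close, the fallback is localization: $M^Y$ is a continuous local martingale, so it is locally in $\mathscr{M}^2$. The theorem then holds with ``square integrable'' understood locally, which suffices for the F\"ollmer--Schweizer projection used afterwards in Theorem~\ref{thm:approx}. Finally, applying the argument to $Y(1)$ and $Y(0)$ separately, or componentwise to the vector process, gives the statement for both potential-outcome processes.
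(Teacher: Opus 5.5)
\textbf{There is a genuine gap here, and it is shared by the paper's own proof.} Your outline (quasimartingale, Rao's theorem, Doob--Meyer for each of the two positive supermartingales, recombine) is exactly the paper's proof. You are right that Rao's theorem needs the conditional-variation definition. You are also right that the delicate point is the integrability of the martingale part, which the paper simply asserts.

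\textbf{The class~(D) step is false.} A nonnegative continuous supermartingale on $[0,T]$ need not be of class~(D). In continuous time, the martingale part of its Doob--Meyer decomposition is in general only a local martingale, so ``dominated by the martingale part'' presupposes what it is meant to prove. On each finite grid the discrete Doob decomposition does give domination, but not uniformly in the grid.

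\textbf{The stated conclusion fails under the stated hypotheses}, so no argument can repair this step.
\begin{itemize}
\item Take $Y_t=1/|B_t|$, with $B$ a three-dimensional Brownian motion started at $x\neq 0$. It has continuous paths and $\sup_{t\le T}\mathbb{E}[Y_t^2]<\infty$. Being a nonnegative supermartingale, its mean variation is at most $1/|x|$. Rao's decomposition is $S^1=Y$, $S^2=0$, and the canonical decomposition is $V^Y=0$, $M^Y=Y-Y_0$, a strict local martingale. By uniqueness of the decomposition with predictable finite-variation part, no decomposition exists in which $M^Y$ is a martingale, let alone an element of $\mathscr{M}^2$.
\item Square integrability can fail even when $M^Y$ is a true martingale. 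Let $dY_t=-\lambda Y_t\,dt+\sigma\,dW_t$ with $Y_0=0$, $W$ an independent Brownian motion, $\sigma\ge 1$ $\mathscr{F}_0$-measurable with Pareto tail index in $(4/3,2]$, and $\lambda=\sigma^{2/3}$. Then $\sup_t\mathbb{E}[Y_t^2]\le\mathbb{E}[\sigma^{4/3}]/2$ and $\mathbb{E}\int_0^T|dV^Y|\le T\,\mathbb{E}[\sigma^{4/3}]$. Yet $M^Y=\sigma W$ has $\mathbb{E}[(M^Y_T)^2]=\infty$.
\end{itemize}

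\textbf{Why your energy-inequality attempt cannot close.} That inequality bounds the increasing compensator of a single one-signed supermartingale by its own running maximum. Here $V^Y=A^2-A^1$ comes from $S^1,S^2$, which are not controlled in $L^2$ by $Y$. Moreover, Doob's maximal inequality is unavailable for a quasimartingale; in the Bessel example $\mathbb{E}[\sup_{t\le T}Y_t]=\infty$.

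\textbf{What your argument does prove} is your fallback, once you replace the class~(D) claim by the general Doob--Meyer decomposition of a right-continuous supermartingale (which is always locally of class~(D)).
\begin{itemize}
\item The $N^j$ are local martingales.
\item The $A^j$ are predictable and increasing, with $\mathbb{E}[A^j_T]\le\mathbb{E}[S^j_0]$ by localization and monotone convergence.
\item Hence $Y=Y_0+V^Y+M^Y$, with $V^Y$ continuous, predictable and of integrable variation, and $M^Y$ a continuous local martingale, hence locally in $\mathscr{M}^2$.
\end{itemize}
That is the correct statement, and it is all the paper's argument delivers too. Present it as the theorem, not as a fallback.

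\textbf{To get $M^Y\in\mathscr{M}^2$ you must add a hypothesis}, for instance $\mathbb{E}[\sup_{t\le T}Y_t^2]<\infty$ together with $\mathbb{E}\int_0^T|Y_s|\,|dV^Y_s|<\infty$ (both hold if $Y$ is bounded). Then It\^o's formula for $Y^2$ along a localizing sequence gives $\mathbb{E}\langle M^Y\rangle_T\le\mathbb{E}[\sup_{t\le T}Y_t^2]+2\,\mathbb{E}\int_0^T|Y_s|\,|dV^Y_s|<\infty$.

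\textbf{Do not assert without checking that the local version suffices downstream.} The paper's F\"ollmer--Schweizer step writes $M^Y_t=\mathbb{E}[H\mid\mathscr{F}_t]$ with $H\in L^2$, and it needs the covariate's martingale part in $\mathscr{M}^2$. The same counterexamples apply to Theorem~\ref{thm:semimartingale}, so that step would have to be localized as well.
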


\begin{proof}
A right continuous bounded mean variation stochastic process is called a quasimartingale.
From Rao-Stricker theorem \citep[][Chapter VI.41 Theorem 41.3]{Rogers_Williams_Vol2}, the necessary and sufficient condition for a stochastic process to be a quasimartingale is for the stochastic process to be expressed as the difference of positive right continuous supermartingales, $Z,\tilde{Z}$, $Z-\tilde{Z}$.
A right continuous supermartingale can be expressed using the Doob-Meyer decomposition,
\[
Z=Z_{0}-A^{Z}+M^{Z},
\]
of the square integrable martingale, $M^{Z}$, and predictable monotone increasing process, $A^{Z}$.
Equivalent for $\tilde{Z}$,
\[
Y=Z_{0}-\tilde{Z}_{0}+\left(A^{\tilde{Z}}-A^{Z}\right)+M^{Z}-M^{\tilde{Z}},
\]
where $M^{Z}-M^{\tilde{Z}}$ is a square-integrable martingale, and the difference in predictable monotone increasing processes, $A^{\tilde{Z}}-A^{Z}$, is locally integrable bounded variation and a predictable process (equivalent to expressing a bounded variation function as a difference of monotone increasing functions in calculus).
Therefore, this is a special semimartingale, and if we take the difference between two supermartingales, quasimartingales are all supersemimartingales.
\end{proof}

The result follows from the Doob-Meyer decomposition for square-integrable
submartingales, which states that any (square-integrable) submartingale can be uniquely written as a linear combination of a predictable
finite-variation term and a martingale innovation.  In our context, $V_t^Y$ represents the
drift, or systematic component, of the causal process, while $M_t^Y$ captures random
innovations not explained by the past.  This orthogonal split underlies the later use of
the DLM, which models the predictable drift linearly and the innovation through Gaussian
noise.

Similarly, we have the following theorem for the covariate process.
\begin{thm}\label{thm:semimartingale}
{\it Let $\left\{ \boldsymbol{X}_{t}\right\} _{t\in\left[0,T\right]}$ be a stochastic process with continuous paths, and is $\mathscr{F}_{t}^{X}$-adapted and the vector elements are square integrable,
\[
\sup_{t\in\left[0,T\right]}\mathbb{E}\left[X_{j,t}^{2}\right]<\infty,\ 1\leqq j\leqq d.
\]
Further, let the mean variation be bounded,
\[
\sup_{\delta}\mathbb{E}\left[\sum_{i=0}^{n-1}\left|\mathbb{E}\left[X_{j,t_{i+1}}-X_{j,t_{i}}\left|\mathscr{F}_{t_{i}}\right.\right]\right|+\left|X_{j,t_{n}}\right|\right]<\infty,\ 1\leqq j\leqq d.
\]
Here, the partition of $\sup_{\delta}$ is with regard to $\delta=\left(t_{0},t_{1},\cdots,t_{n}\right)$, $0\leqq t_{0}<t_{1}<\cdots<t_{n}$.
Then, $\left\{ \boldsymbol{X}_{t}\right\} _{t\in\left[0,T\right]}$ is a special semimartingale regarding the filtration, $\left\{ \mathscr{F}_{t}^{X}\right\} $.
Thus,
\[
\begin{aligned}\boldsymbol{X}_{t} & =\boldsymbol{X}_{0}+\boldsymbol{V}_{t}^{X}+\boldsymbol{M}_{t}^{X}.\end{aligned}
\]
Here, $\boldsymbol{X}_{t}=\left[X_{1,t},\cdots,X_{d,t}\right]^{\top}$, $\boldsymbol{X}_{0}$ is a $\mathscr{F}_{0}$-measurable random variable, $\boldsymbol{M}_{t}^{X}$ is a square integrable martingale, and $\boldsymbol{V}_{t}^{X}$ is local integrable bounded variation and predictable.}
\end{thm}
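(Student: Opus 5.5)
The plan is to repeat the argument used for Theorem~\ref{thm:martingale}, now applied coordinatewise to the covariate process $\boldsymbol{X}_t$ and its own filtration $\{\mathscr{F}_t^X\}$. The vector decomposition then follows by stacking the $d$ scalar decompositions.

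\textbf{Step 1 (scalar case).} Fix a component $j$. By hypothesis $X_{j,t}$ has continuous paths and is $\mathscr{F}_t^X$-adapted. The bounded mean variation condition (supremum over partitions $\delta$) is exactly the definition of a quasimartingale on $[0,T]$. Square integrability gives $\mathbb{E}|X_{j,t}|<\infty$ for each $t$.

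\textbf{Step 2 (Rao--Stricker and Doob--Meyer).} By the Rao--Stricker theorem \citep[][VI.41]{Rogers_Williams_Vol2}, we write $X_{j}=Z^{j}-\tilde Z^{j}$ with $Z^j,\tilde Z^j$ positive right-continuous supermartingales. Each supermartingale then has a Doob--Meyer decomposition $Z^j=Z^j_0-A^{Z^j}+M^{Z^j}$, where $A^{Z^j}$ is predictable and increasing. Setting
\[
V^{X}_{j,t}=A^{\tilde Z^j}_t-A^{Z^j}_t,\qquad M^{X}_{j,t}=M^{Z^j}_t-M^{\tilde Z^j}_t,\qquad X_{j,0}=Z^j_0-\tilde Z^j_0,
\]
gives $X_{j,t}=X_{j,0}+V^X_{j,t}+M^X_{j,t}$. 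Here $V^X_j$ is predictable, with locally integrable variation as a difference of integrable increasing processes. Because $X_j$ is continuous, the special semimartingale decomposition is unique and can be taken with continuous parts.

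\textbf{Step 3 (stacking).} Stack the $d$ coordinates into vectors. Predictability, finite variation and the martingale property are all coordinatewise notions, and every coordinate is adapted to the common filtration $\{\mathscr{F}_t^X\}$. So $\boldsymbol{X}_t=\boldsymbol{X}_0+\boldsymbol{V}^X_t+\boldsymbol{M}^X_t$ with the stated properties, and $\boldsymbol{X}_0$ is $\mathscr{F}_0$-measurable.

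\textbf{Main obstacle: square integrability of $\boldsymbol{M}^X$.} Doob--Meyer only yields (local) martingales in general. I would first try to get the $L^2$ bound from the identity $M^X_{j,t}=X_{j,t}-X_{j,0}-V^X_{j,t}$. The $L^2$ norms of $X_{j,t}$ and $X_{j,0}$ are controlled by $\sup_t\mathbb{E}[X_{j,t}^2]<\infty$, so everything reduces to bounding $V^X_{j,t}$ in $L^2$.

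For that bound I would try two routes. The first uses continuity: $V^X_j$ is then continuous, so one can localize by stopping times, apply the $L^2$ bound to the stopped processes, and pass to the limit by Fatou. If that fails, the second route uses the discrete compensator
\[
\sum_i \mathbb{E}\bigl[X_{t_{i+1}}-X_{t_i}\,\big|\,\mathscr{F}_{t_i}\bigr]
\]
along refining partitions, bounding its second moment with Doob's inequality. Should neither route close, I would settle for a locally square-integrable martingale, which is sufficient for the F\"ollmer--Schweizer projection used later.
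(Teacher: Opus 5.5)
Your Steps 1--3 are the paper's argument. The paper proves this theorem by deferring to the proof of Theorem~\ref{thm:martingale}: Rao--Stricker, then Doob--Meyer on the two positive supermartingales, applied coordinatewise. The paper, however, simply asserts that $M^{Z}-M^{\tilde Z}$ is square integrable, while you correctly note that Doob--Meyer only yields a local martingale. The problem is that this obstacle cannot be overcome, because under the stated hypotheses the conclusion is false.

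Take $d=1$ and $X_t=1/|W_t|$ on $[0,T]$, where $W$ is a three-dimensional Brownian motion started at $x\neq 0$. In its own filtration the level-hitting times still reduce it.
\begin{itemize}
\item The paths are continuous.
\item $\sup_{t\le T}\mathbb{E}[X_t^2]<\infty$; this is the standard $L^2$-bounded example.
\item $X$ is a nonnegative local martingale, hence a supermartingale. So the mean-variation sum telescopes to $\mathbb{E}[X_{t_0}]\le 1/|x|$.
\end{itemize}
Yet $\mathbb{E}[X_t]=|x|^{-1}\bigl(2\Phi(|x|/\sqrt{t})-1\bigr)<X_0$ for $t>0$. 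Since $X$ is itself a continuous local martingale, uniqueness of the canonical decomposition forces $V^X=0$ and $M^X=X-X_0$, which is not a martingale. Hence no decomposition with $M^X$ a square-integrable martingale exists.

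The example also shows where your reduction breaks. Bounding $M_t=X_t-X_0-V_t$ in $L^2$ at each fixed $t$ (trivial here, since $V=0$) does not turn a local martingale into a martingale.
\begin{itemize}
\item In your first route, Fatou carries an $L^2$ bound to the limit. But carrying the martingale property through a localizing sequence $t\wedge\tau_n$ requires uniform integrability of $\{X_{t\wedge\tau_n}\}_n$. That is control of $X$ at stopping times (class (D)), which a bound at deterministic times does not supply.
\item In your second route, the discrete compensators $A^{\delta}_T=\sum_i\mathbb{E}[X_{t_{i+1}}-X_{t_i}\mid\mathscr{F}_{t_i}]$ satisfy $\mathbb{E}[A^{\delta}_T]=\mathbb{E}[X_T]-X_0<0$ for every partition, while $V=0$. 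So they cannot converge to $V$ in $L^1$.
\end{itemize}

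What your Steps 1--3 actually prove is a corrected statement: $M^X$ is a continuous local martingale (hence locally square integrable), and $V^X$ is continuous and predictable with $\mathbb{E}\int_0^T|dV^X_s|<\infty$. To recover the theorem as written you must add a hypothesis.
\begin{itemize}
\item Class (D), e.g.\ $\mathbb{E}\sup_{t\le T}|X_t|<\infty$, makes $M^X$ a true martingale.
\item If $|X|\le K$, It\^o's formula for $X^2$ with localization gives $\mathbb{E}\langle M^X\rangle_T\le K^2+2K\,\mathbb{E}\int_0^T|dV^X_s|<\infty$, so $M^X$ is an $L^2$ martingale.
\end{itemize}
Your fallback is not automatically enough for the later use either, because the F\"ollmer--Schweizer theorem as quoted in the paper assumes $M\in\mathscr{M}^2$. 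One smaller point you skip: the mean-variation hypothesis is stated with $\mathscr{F}_{t_i}$, and it transfers to $\mathscr{F}^X_{t_i}$ by conditional Jensen.
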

Proof is the same as Theorem~\ref{thm:martingale}.
Applying the same bounded-variation and square-integrability conditions to the covariate
process, $\{X_t\}$, ensures that it is also a special semimartingale.   Establishing this parallel structure for both $Y_t$ and
$X_t$ allows us to express their conditional expectations using Itô calculus in the next theorem.

$\mathbb{E}\left[\left.Y_{t}\right|Z=1,\mathscr{F}_{t}^{X}\right]$ and $\mathbb{E}\left[\left.Y_{t}\right|Z=0,\mathscr{F}_{t}^{X}\right]$ can be expressed using the It\^{o} process as the following.
\begin{thm}\label{thm:ito}
{\it Let the post-treatment outcome processes, $\left\{ Y_{t}\left(=ZY_{t}(Z=1)+\left(1-Z\right)Y_{t}(Z=0)\right)\right\} _{t\in\left[\treat,T\right]}$, regarding the filtration, $\left\{ \mathscr{F}_{t}\right\} $, be a special semimartingale, and $\left\{ \boldsymbol{X}_{t}\right\} _{t\in\left[0,T\right]}$ be a special semimartingale regarding the filtration, $\left\{ \mathscr{F}_{t}^{X}\right\} $.
Then, $\mathbb{E}\left[\left.Y_{t}\right|Z=1,\mathscr{F}_{t}^{X}\right]$ and $\mathbb{E}\left[\left.Y_{t}\right|Z=0,\mathscr{F}_{t}^{X}\right]$ can be expressed iteratively as,
\begin{equation}
\begin{aligned}\mathbb{E}\left[\left.Y_{t}\right|Z=1,\mathscr{F}_{t}^{X}\right] & =\mathbb{E}\left[\left.Y_{t-\varDelta t}\right|Z=1,\mathscr{F}_{t-\varDelta t}^{X}\right]+\sum_{j=1}^{d}\beta_{j,t}^{1}dX_{j,t},\ \left(\treat\leqq t\leqq T\right)\\
\textrm{with initial value} & \mathbb{E}\left[\left.Y_{\treat}\right|Z=1,\mathscr{F}_{\treat}^{X}\right]=\mathbb{E}\left[\left.Y_{\treat}\right|Z=0,\mathscr{F}_{\treat}^{X}\right]+\alpha_{\treat},\\
\mathbb{E}\left[\left.Y_{t}\right|Z=0,\mathscr{F}_{t}^{X}\right] & =\mathbb{E}\left[\left.Y_{t-\varDelta t}\right|Z=0,\mathscr{F}_{t-\varDelta t}^{X}\right]+\sum_{j=1}^{d}\beta_{j,t}^{0}dX_{j,t},,\ \left(0\leqq t\leqq T\right)\\
\textrm{with initial value} & \mathbb{E}\left[\left.Y_{0}\right|Z=0,\mathscr{F}_{0}^{X}\right]=Y_{0}.
\end{aligned}
\label{eq:TSC-VCM}
 \end{equation}
Here, $\left\{ \beta_{j,t}^{z}\right\} _{1\leqq j\leqq d}^{z=0,1}$ is $\mathscr{F}_{t}$-predictable and the integral, $\int_{0}^{t}\beta_{j,s}^{1}dX_{j,s}$, is understood to be an It\^{o} integral.
Further, $\alpha_{\treat}$ is the conditional expectation  $\mathbb{E}\left[\left.Y_{\treat}(Z=1)-Y_{\treat-dt}(Z=1)\right|\mathscr{F}_{\treat}^{X}\right]$ of the jump in the outcome due to treatment, $Y_{\treat}(Z=1)-Y_{\treat-dt}(Z=1)$ (i.e., impulse).}
\end{thm}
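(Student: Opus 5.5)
The plan is to project the martingale part of each conditional mean process onto the stochastic integrals of the covariate semimartingale, using the Föllmer--Schweizer decomposition stated in the preliminaries. The argument is carried out separately in each treatment arm and then glued together at $\treat$.

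First, fix $z\in\{0,1\}$ and work on the event $\{Z=z\}$. Define the conditional mean process $m^z_t=\mathbb{E}[Y_t| Z=z,\mathscr{F}^X_t]$.
\begin{itemize}
\item By Theorem~\ref{thm:martingale} applied to $Y_t(z)$, together with consistency (Assumption~\ref{ass:sutva}), $Y_t$ restricted to $\{Z=z\}$ is a special semimartingale $Y_0+V^Y_t+M^Y_t$.
\item By Theorem~\ref{thm:semimartingale}, $\boldsymbol{X}_t=\boldsymbol{X}_0+\boldsymbol{V}^X_t+\boldsymbol{M}^X_t$ is a special semimartingale with respect to $\{\mathscr{F}^X_t\}$.
\item Conditioning on $\mathscr{F}^X_t$ (with $Z$ fixed) preserves square integrability. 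I would therefore argue that $m^z_t$ is itself square integrable and adapted to $\{\sigma(Z)\vee\mathscr{F}^X_t\}$.
\end{itemize}

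Second, apply the Föllmer--Schweizer decomposition with $H=m^z_T$ and with $X$ the covariate semimartingale. This gives
\[
m^z_t=m^z_0+\int_0^t \boldsymbol{\xi}^{z}_s\,d\boldsymbol{X}_s+L^z_t,\qquad \langle L^z,\boldsymbol{M}^X\rangle=0 .
\]
The next task is to show that the orthogonal remainder $L^z$ and the drift contribute nothing beyond the covariate integral. Because $m^z$ is adapted to the filtration generated by $X$ (and $Z$), every $\mathscr{F}^X$-martingale should be representable by integrals against $\boldsymbol{M}^X$. This is a martingale representation property. I would obtain it from the preceding theorem under the conditional covariance assumption \eqref{eq:CausalTimeSeries_BrownRep}, which gives $\mathbb{E}[\Delta L_t| \boldsymbol{X}_{t+1},\{Y_s,\boldsymbol{X}_s\}_{s\le t}]=0$. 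Conditioning this on $\mathscr{F}^X_t$ and $Z$ then kills $L^z$ in $m^z$. Writing $\beta^z_{j,t}=\xi^z_{j,t}$ yields the differential recursion $m^z_t=m^z_{t-\Delta t}+\sum_j\beta^z_{j,t}\,dX_{j,t}$, with $\beta^z$ predictable since $\xi^z$ is.

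Third, handle the initial values.
\begin{itemize}
\item For $z=0$, the recursion runs from $t=0$ with $m^0_0=Y_0$, because $Y_0$ is $\mathscr{F}_0$-measurable.
\item For $z=1$, before $\treat$ the potential paths coincide (Definition of dynamic potential outcomes). Under Assumption~\ref{ass:dynamicignore}, the conditional means given $\mathscr{F}^X_{\treat-}$ therefore agree across arms just before $\treat$.
\item At $\treat$ the treated arm picks up a jump $Y_{\treat}(1)-Y_{\treat-dt}(1)$. Its $\mathscr{F}^X_{\treat}$-conditional expectation is exactly $\alpha_{\treat}$, which gives $m^1_{\treat}=m^0_{\treat}+\alpha_{\treat}$. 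From there the recursion starts afresh on $[\treat,T]$.
\end{itemize}

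\textbf{Main obstacle.} The hard step is the second one: justifying that the orthogonal part $L^z$ vanishes after conditioning, and that the finite-variation drift $V$ of $m^z$ is absorbed into $\int\beta\,dX$. In general $m^z$ need not lie in the stable subspace generated by $X$. I would first try to obtain this by combining intertemporal strong ignorability, which lets me drop $Z$-dependence from the filtration, with assumption \eqref{eq:CausalTimeSeries_BrownRep}, which supplies a predictable representation. If that fails, I would state it explicitly as a representation hypothesis on $\{\mathscr{F}^X_t\}$.
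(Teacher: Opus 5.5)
\textbf{There is a genuine gap in your second step.} That step carries the whole theorem, and as written it fails in two places.

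\emph{The decomposition tool.} The F\"{o}llmer--Schweizer decomposition of the single variable $H=m^z_T$ describes $m^z_T$ only. It gives the path identity $m^z_t=m^z_0+\int_0^t\boldsymbol{\xi}^z_s\,d\boldsymbol{X}_s+L^z_t$ at intermediate $t$ only if $m^z$ is the FS value process. It is not: $m^z_t=\mathbb{E}[Y_t\mid Z=z,\mathscr{F}^X_t]$ projects $Y_t$, not $Y_T$, and has its own drift.

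\emph{The remainder.} Suppose you decompose $m^z$ in the filtration it lives in, $\sigma(Z)\vee\mathscr{F}^X_t$. Then $L^z$ is adapted to the very $\sigma$-field you condition on, so $\mathbb{E}[dL^z_t\mid Z=z,\mathscr{F}^X_t]=dL^z_t$ and nothing is killed. Your recursion needs $L^z\equiv 0$ outright. The result in the preliminaries that assumes \eqref{eq:CausalTimeSeries_BrownRep} cannot supply this. It concerns the orthogonal remainder in the decomposition of the outcome $Y$ with respect to the larger filtration $\{\mathscr{F}_t\}$, not the remainder of $m^z$.

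Your route really needs two things:
(a) a predictable representation property of $\sigma(Z)\vee\mathscr{F}^X$ with respect to $\boldsymbol{M}^X$, so that $L^z\equiv 0$;
(b) that the finite-variation part of $m^z$ equals $\int\boldsymbol{\xi}^z\,d\boldsymbol{V}^X$, with the same integrand as the martingale part.
Neither follows from intertemporal ignorability or from \eqref{eq:CausalTimeSeries_BrownRep}. For example, take $\boldsymbol{X}$ a Brownian motion and $Y_t=t+X_t$. Then $dm^z_t=dt+dX_t$, which no $\int\beta\,dX$ reproduces. So, as you partly concede, the proposal reduces the statement to added hypotheses rather than proving it. Even on your route, the FS step should be replaced by a special-semimartingale decomposition of $m^z$, followed by the Kunita--Watanabe decomposition of its martingale part against $\boldsymbol{M}^X$.

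\textbf{How the paper's route differs.} The paper works in the opposite order:
\begin{enumerate}
\item It applies the FS decomposition to the outcome $Y^z$ with respect to $\{\mathscr{F}_t\}$.
\item It writes $Y_t-Y_{t-dt}=\sum_j\beta^z_{j,t}\,dX_{j,t}+dL^z_t$ and takes $\mathbb{E}[\,\cdot\mid Z=z,\mathscr{F}^X_t]$.
\item It invokes no spillover (before $\treat$) and no feedback to confounders (after $\treat$) to replace $\mathbb{E}[Y_{t-\Delta t}\mid Z=z,\mathscr{F}^X_t]$ by $\mathbb{E}[Y_{t-\Delta t}\mid Z=z,\mathscr{F}^X_{t-\Delta t}]$.
\item It removes the remainder by conditioning an $\mathscr{F}_t$-martingale increment down to $\mathscr{F}^X_t\subset\mathscr{F}_t$.
\end{enumerate}
There, the coarsening of the filtration is what can make the remainder vanish in mean, and that is what the result behind \eqref{eq:CausalTimeSeries_BrownRep} is meant to deliver.

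Your ordering would buy something if (a) and (b) were granted. The recursion is then literally the differential of $m^z$, and no shift of conditioning from $\mathscr{F}^X_t$ to $\mathscr{F}^X_{t-\Delta t}$ is needed. Also, $\beta^z$ comes out predictable in the covariate filtration, so it can legitimately be pulled out of the conditional expectation; the paper's $\beta^z$ is only $\mathscr{F}_t$-predictable. Your concern about the drift is fair too, since the paper's argument drops the $A^Y$ term of \eqref{eq:CausalTimeSeries_FS2}.

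\textbf{Initial value at $\treat$.} Your step uses Assumption~\ref{ass:dynamicignore} at $\treat-$ given $\mathscr{F}^X_{\treat-}$. The assumption is stated for $t\ge\treat$ given $\mathscr{F}^X_{\treat}$. To reach $\alpha_{\treat}$, which does not condition on $Z$, use ignorability at $\treat$ to drop $Z$ in both arms. Then use the absence of a conditional jump in the control arm at $\treat$.
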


(For $t<\treat$ and $Z=0$.) From the FS decomposition starting at $t=0$, we have 
\[
\begin{aligned}Y_{t}^{z} & =Y_{0}+\sum_{j=1}^{d}\int_{0}^{t}\beta_{j,s}^{z}dX_{j,s}+L_{t}^{z},\ \left\langle M_{j}^{X},L^{z}\right\rangle =0,\ \left(z=0,1\right),\left(1\leqq j\leqq d\right),\\
 & \left\{ L_{t}^{z}\right\} :\mathscr{F}_{t}\textrm{-martingale with }L_{0}^{z}=0.
\end{aligned}
\]
Since, $Y_{t}=Y_{t}(Z=0)$, we have
\[
Y_{t}=Y_{0}+\sum_{j=1}^{d}\int_{0}^{t}\beta_{j,s}^{0}dX_{j,s}+L_{t}^{0}.
\]
Right before $t$, $t-dt$, we have
\[
Y_{t-dt}=Y_{0}+\sum_{j=1}^{d}\int_{0}^{t-dt}\beta_{j,s}^{0}dX_{j,s}+L_{t-dt}^{0},
\]
thus,
\[
Y_{t}-Y_{t-dt}=\sum_{j=1}^{d}\beta_{j,t}^{0}dX_{j,t}+dL_{t}^{0}.
\]
If we take the expectation, $\mathbb{E}\left[\left.\cdot\right|Z=0,\mathscr{F}_{t}^{X}\right]$, on both sides, we have
\[
\mathbb{E}\left[\left.Y_{t}-Y_{t-dt}\right|Z=0,\mathscr{F}_{t}^{X}\right]=\sum_{j=1}^{d}\beta_{j,t}^{0}dX_{j,t}+\mathbb{E}\left[\left.dL_{t}^{0}\right|Z=0,\mathscr{F}_{t}^{X}\right].
\]
Since we are assuming no spillover, we have
\[
\mathbb{E}\left[\left.Y_{t-\varDelta t}\right|Z=0,\mathscr{F}_{t}^{X}\right]=\mathbb{E}\left[\left.Y_{t-\varDelta t}\right|Z=0,\mathscr{F}_{t-\varDelta t}^{X}\right].
\]
This leaves us to show, $\mathbb{E}\left[\left.dL_{t}^{0}\right|Z=0,\mathscr{F}_{t}^{X}\right]=0$, which we will show at the end.

\begin{proof}
    (For $t<\treat$ and $Z=0$.) From the FS decomposition starting at $t=0$, we have 
\[
\begin{aligned}Y_{t}^{z} & =Y_{0}+\sum_{j=1}^{d}\int_{0}^{t}\beta_{j,s}^{z}dX_{j,s}+L_{t}^{z},\ \left\langle M_{j}^{X},L^{z}\right\rangle =0,\ \left(z=0,1\right),\left(1\leqq j\leqq d\right),\\
 & \left\{ L_{t}^{z}\right\} :\mathscr{F}_{t}\textrm{-martingale with }L_{0}^{z}=0.
\end{aligned}
\]
Since, $Y_{t}=Y_{t}(Z=0)$, we have
\[
Y_{t}=Y_{0}+\sum_{j=1}^{d}\int_{0}^{t}\beta_{j,s}^{0}dX_{j,s}+L_{t}^{0}.
\]
Right before $t$, $t-dt$, we have
\[
Y_{t-dt}=Y_{0}+\sum_{j=1}^{d}\int_{0}^{t-dt}\beta_{j,s}^{0}dX_{j,s}+L_{t-dt}^{0},
\]
thus,
\[
Y_{t}-Y_{t-dt}=\sum_{j=1}^{d}\beta_{j,t}^{0}dX_{j,t}+dL_{t}^{0}.
\]
If we take the expectation, $\mathbb{E}\left[\left.\cdot\right|Z=0,\mathscr{F}_{t}^{X}\right]$, on both sides, we have
\[
\mathbb{E}\left[\left.Y_{t}-Y_{t-dt}\right|Z=0,\mathscr{F}_{t}^{X}\right]=\sum_{j=1}^{d}\beta_{j,t}^{0}dX_{j,t}+\mathbb{E}\left[\left.dL_{t}^{0}\right|Z=0,\mathscr{F}_{t}^{X}\right].
\]
Since we are assuming no spillover, we have
\[
\mathbb{E}\left[\left.Y_{t-\varDelta t}\right|Z=0,\mathscr{F}_{t}^{X}\right]=\mathbb{E}\left[\left.Y_{t-\varDelta t}\right|Z=0,\mathscr{F}_{t-\varDelta t}^{X}\right].
\]
This leaves us to show, $\mathbb{E}\left[\left.dL_{t}^{0}\right|Z=0,\mathscr{F}_{t}^{X}\right]=0$, which we will show at the end.

(For $t\geqq\treat$.) From the FS decomposition starting at $\treat$, we have
\[
\begin{aligned}Y_{t}^{z} & =Y_{\treat}+k\alpha_{\treat}+\sum_{j=1}^{d}\int_{\treat}^{t}\beta_{j,s}^{z}dX_{j,s}+L_{t}^{z},\ \left\langle M_{j}^{X},L^{z}\right\rangle =0,\ \left(z=0,1\right),\left(1\leqq j\leqq d\right),\\
 & \left\{ L_{t}^{z}\right\} :\mathscr{F}_{t}\textrm{-martingale with }L_{\treat}^{z}=0.
\end{aligned}
\]
Since $Y_{t}=ZY_{t}(Z=1)+\left(1-Z\right)Y_{t}(Z=0)$, we have
\[
Y_{t}=Y_{0}+Z\left(\sum_{j=1}^{d}\int_{0}^{t}\beta_{j,s}^{1}dX_{j,s}+L_{t}^{1}\right)+\left(1-Z\right)\left(\sum_{j=1}^{d}\int_{0}^{t}\beta_{j,s}^{0}dX_{j,s}+L_{t}^{0}\right).
\]
Right before $t$, $t-dt$, we have
\[
Y_{t-dt}=Y_{0}+Z\left(\sum_{j=1}^{d}\int_{0}^{t-dt}\beta_{j,s}^{1}dX_{j,s}+L_{t-dt}^{1}\right)+\left(1-Z\right)\left(\sum_{j=1}^{d}\int_{0}^{t-dt}\beta_{j,s}^{0}dX_{j,s}+L_{t-dt}^{0}\right),
\]
thus,
\[
Y_{t}-Y_{t-dt}=Z\left(\sum_{j=1}^{d}\beta_{j,t}^{1}dX_{j,t}+dL_{t}^{1}\right)+\left(1-Z\right)\left(\sum_{j=1}^{d}\beta_{j,t}^{0}dX_{j,t}+dL_{t}^{0}\right).
\]
If we take the expectation, $\mathbb{E}\left[\left.\cdot\right|Z=z,\mathscr{F}_{t}^{X}\right]$ on both sides, we have
\[
\mathbb{E}\left[\left.Y_{t}-Y_{t-dt}\right|Z=z,\mathscr{F}_{t}^{X}\right]=z\left(\sum_{j=1}^{d}\beta_{j,t}^{1}dX_{j,t}\right)+\left(1-z\right)\left(\sum_{j=1}^{d}\beta_{j,t}^{0}dX_{j,t}\right)+\mathbb{E}\left[\left.dL_{t}\right|Z=z,\mathscr{F}_{t}^{X}\right].
\]
Here, we let $dL_{t}=ZdL_{t}^{1}+\left(1-Z\right)dL_{t}^{0}$.
Since we assume no feedback to confounders, we have
\[
\mathbb{E}\left[\left.Y_{t-\varDelta t}\right|Z=1,\mathscr{F}_{t}^{X}\right]=\mathbb{E}\left[\left.Y_{t-\varDelta t}\right|Z=1,\mathscr{F}_{t-\varDelta t}^{X}\right].
\]
This leaves us to show $\mathbb{E}\left[\left.dL_{t}\right|\mathscr{F}_{t}^{X}\right]=0$.
$L_{t}^{z}$ is a $\mathscr{F}_{t}$-martingale and $\mathscr{F}_{t}^{X}\subset\mathscr{F}_{t}$, thus we have $\mathbb{E}\left[\left.dL_{t}\right|Z=z,\mathscr{F}_{t}\right]=0$.
From the law of iterations, we have
\[
\mathbb{E}\left[\left.dL_{t}\right|Z=z,\mathscr{F}_{t}^{X}\right]=\mathbb{E}\left[\left.\mathbb{E}\left[\left.dL_{t}\right|Z=z,\mathscr{F}_{t}\right]\right|Z=z,\mathscr{F}_{t}^{X}\right]=0.
\]
The above result $\mathbb{E}\left[\left.dL_{t}^{0}\right|Z=0,\mathscr{F}_{t}^{X}\right]=0$ can be shown in a similar way.
\end{proof}

Given the semimartingale structure of $\{Y_t\}$ and $\{X_t\}$, the conditional expectation
$E[Y_t| Z,\mathcal F^X_t]$ evolves as an Itô process driven by increments of the
covariate process.  The coefficients, $\{\beta^{z}_{j,t}\}$, are predictable because they depend only
on past information, and the jump $\alpha_{\treat}$ represents the instantaneous treatment effect at
the moment of intervention.  The iterative form in~\eqno{TSC-VCM} is obtained by applying the stochastic
integration-by-parts identity to $E[Y_t|\mathcal F^X_t]$.
Intuitively, for something to be $\mathscr{F}_{t}$-predictable is for the value to be known right before $t$.
Further, the impulse, $\alpha_{\treat}$, right after treatment is estimated using the regression model,
\[
Y_{\treat}(Z=1)=\mathbb{E}\left[\left.Y_{t}\right|Z=0,\mathscr{F}_{t}^{X}\right]+\alpha_{\treat}.
\]

For discrete observations, we have the following expression.
\begin{thm}\label{thm:smapprox}
{\it Let $\left\{ Y_{t}\left(=ZY_{t}(Z=1)+\left(1-Z\right)Y_{t}(Z=0)\right)\right\} _{t\in\left[0,T\right]}$ and $\left\{ \boldsymbol{X}_{t}\right\} _{t\in\left[0,T\right]}$ be square integrable discrete time stochastic processes.
Then, $\mathbb{E}\left[\left.Y_{t}\right|Z=1,\mathscr{F}_{t}^{X}\right]$ and $\mathbb{E}\left[\left.Y_{t}\right|Z=0,\mathscr{F}_{t}^{X}\right]$ can be expressed as
\begin{equation}
\begin{aligned}\mathbb{E}\left[\left.Y_{t}\right|Z=1,\mathscr{F}_{t}^{X}\right] & =\mathbb{E}\left[\left.Y_{t-1}\right|Z=1,\mathscr{F}_{t-1}^{X}\right]+\sum_{j=1}^{d}\beta_{j,t}^{1}\left(X_{j,t}-X_{j,t-1}\right),\ \left(\treat\leqq t\leqq T\right),\\
\textrm{with initial value } & \mathbb{E}\left[\left.Y_{\treat}\right|Z=1,\mathscr{F}_{\treat}^{X}\right]=\mathbb{E}\left[\left.Y_{\treat}\right|Z=0,\mathscr{F}_{\treat}^{X}\right]+\alpha_{\treat},\\
\mathbb{E}\left[\left.Y_{t}\right|Z=0,\mathscr{F}_{t}^{X}\right] & =\mathbb{E}\left[\left.Y_{t-1}\right|Z=0,\mathscr{F}_{t-1}^{X}\right]+\sum_{j=1}^{d}\beta_{j,t}^{0}\left(X_{j,t}-X_{j,t-1}\right).\ \left(0\leqq t\leqq T\right),\\
\textrm{with initial value } & \mathbb{E}\left[\left.Y_{0}\right|Z=0,\mathscr{F}_{0}^{X}\right]=Y_{0}.
\end{aligned}
\label{smeq:TSC-VCM-1}
 \end{equation}
Here, $\left\{ \beta_{j,t}^{z}\right\} _{1\leqq j\leqq d}^{z=0,1}$ is a $\mathscr{F}_{t-1}$-measurable stochastic process.}
\end{thm}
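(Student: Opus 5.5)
The plan is to mirror the continuous-time argument of Theorem~\ref{thm:ito}, replacing the F\"{o}llmer--Schweizer decomposition with its discrete-time analogue: a Doob decomposition followed by an orthogonal (Kunita--Watanabe type) projection on martingale increments. In discrete time no path-regularity conditions are needed. Square integrability of $\{Y_t\}$ and $\{\boldsymbol{X}_t\}$ already gives Doob decompositions
\[
Y_t=Y_0+V^Y_t+M^Y_t,\qquad \boldsymbol{X}_t=\boldsymbol{X}_0+\boldsymbol{V}^X_t+\boldsymbol{M}^X_t,
\]
with $V_t-V_{t-1}=\mathbb{E}[\,\cdot_t-\cdot_{t-1}\,|\,\mathscr{F}_{t-1}]$ predictable and $M$ a square-integrable martingale. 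This replaces Theorems~\ref{thm:martingale} and \ref{thm:semimartingale} and makes the bounded-mean-variation hypothesis unnecessary.

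For each $z$ and each $t$, I will project the increment $\Delta Y_t=Y_t-Y_{t-1}$ on the stratum $\{Z=z\}$ onto the span of $\Delta X_{1,t},\dots,\Delta X_{d,t}$, conditionally on $\mathscr{F}_{t-1}$. Concretely, set
\[
\boldsymbol{\beta}^z_t=\mathbb{E}[\Delta\boldsymbol{X}_t\Delta\boldsymbol{X}_t^\top|\mathscr{F}_{t-1},Z=z]^{+}\,\mathbb{E}[\Delta\boldsymbol{X}_t\Delta Y_t|\mathscr{F}_{t-1},Z=z],
\]
using a generalized inverse so that no nondegeneracy assumption is needed. By construction this coefficient is $\mathscr{F}_{t-1}$-measurable, which gives the predictability claim of the statement. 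Define the residual $\Delta L^z_t=\Delta Y_t-\boldsymbol{\beta}^{z\top}_t\Delta\boldsymbol{X}_t$. Summing from $0$ for $z=0$, and from $t_c$ for $z=1$, yields the discrete version of \eqref{eq:CausalTimeSeries_FS2}:
\[
Y_t=Y_0+\sum_{s\le t}\boldsymbol{\beta}^{z\top}_s\Delta\boldsymbol{X}_s+L^z_t .
\]
The jump at $t_c$ is isolated by defining $\alpha_{t_c}=\mathbb{E}[Y_{t_c}(1)-Y_{t_c}(0)|\mathscr{F}^X_{t_c}]$, which is exactly the stated initial condition. Pre-treatment the two paths coincide by the Dynamic Potential Outcomes definition, so the $z=0$ recursion applies for all $0\le t\le T$ with $\mathbb{E}[Y_0|Z=0,\mathscr{F}^X_0]=Y_0$, taking $Y_0$ to be $\mathscr{F}_0$-measurable.

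Next, I take $\mathbb{E}[\,\cdot\,|Z=z,\mathscr{F}^X_t]$ of $Y_t=Y_{t-1}+\boldsymbol{\beta}^{z\top}_t\Delta\boldsymbol{X}_t+\Delta L^z_t$, which requires three facts.
\begin{enumerate}
\item[(a)] $\boldsymbol{\beta}^{z\top}_t\Delta\boldsymbol{X}_t$ can be pulled out of the conditional expectation. This needs $\boldsymbol{\beta}^z_t$ to be measurable with respect to the conditioning field, so I will state the projection with $\mathscr{F}^X_{t-1}$ (plus $Z$) in place of $\mathscr{F}_{t-1}$, or equivalently assume the relevant conditional moments depend only on the covariate history.
\item[(b)] $\mathbb{E}[Y_{t-1}|Z=z,\mathscr{F}^X_t]=\mathbb{E}[Y_{t-1}|Z=z,\mathscr{F}^X_{t-1}]$. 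This follows from the no-feedback condition \eqref{eq:TSC-nonFeedBack} together with intertemporal strong ignorability, in the same way as the ``no spillover'' step in the proof of Theorem~\ref{thm:ito}: the new covariate information $\Delta\boldsymbol{X}_t$ carries no additional information about $Y_{t-1}$ given the past.
\item[(c)] $\mathbb{E}[\Delta L^z_t|Z=z,\mathscr{F}^X_t]=0$.
\end{enumerate}

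I expect (c) to be the main obstacle. Orthogonality of the projection only gives $\mathbb{E}[\Delta L^z_t\,\Delta\boldsymbol{X}_t|\mathscr{F}_{t-1}]=0$, which is weaker than a conditional mean of zero given $\Delta\boldsymbol{X}_t$. The first thing I will try is the paper's earlier device: invoke the theorem under condition \eqref{eq:CausalTimeSeries_BrownRep}, the Gaussian-type conditional covariance of the martingale part of $\boldsymbol{X}$, which yields $\mathbb{E}[\Delta L_t|\boldsymbol{X}_{t+1},\{Y_s,\boldsymbol{X}_s\}_{s\le t}]=0$. Combined with the tower property down to $\mathscr{F}^X_t\vee\sigma(Z)$, this gives (c). If that route proves too restrictive, the fallback is to take $\boldsymbol{\beta}^z_t$ as the conditional-mean regression coefficient directly, i.e.\ to define the recursion by
\[
\mathbb{E}[\Delta Y_t|Z=z,\mathscr{F}^X_t]=\boldsymbol{\beta}^{z\top}_t\Delta\boldsymbol{X}_t .
\]
This is an assumption of linearity in the increments, and it makes (c) hold by definition.

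The last step is the state equation \eqref{eq:TimeSeriesCausal_DLMtransition}. Sampling a continuous-path, predictable, square-integrable $\boldsymbol{\beta}^z$ at integer times and applying the same Doob decomposition to it gives $\boldsymbol{\beta}^z_t=\boldsymbol{\beta}^z_{t-1}+\text{(predictable drift)}+\text{(martingale increment)}$. I will then approximate the drift linearly by $(G_t-I)\boldsymbol{\beta}^z_{t-1}$ and posit Gaussian innovations $\boldsymbol{\omega}_t$. This part is a modeling specification rather than a derivation, and I will say so explicitly in the proof.
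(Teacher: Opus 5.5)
\paragraph{Main gap: step (c) cannot be closed.} You have found the fragile steps correctly. Your repair of (a) fixes something the paper's proof (``the same as Theorem~\ref{thm:ito}'') passes over: you make $\boldsymbol{\beta}^z_t$ measurable with respect to $\mathscr{F}^X_{t-1}\vee\sigma(Z)$, so it can be pulled out of the conditional expectation. But (c) is a genuine gap. Neither of your remedies closes it, and it cannot be closed under the stated hypotheses.

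The paper's own route for (c) is invalid. It asserts $\mathbb{E}[dL_t\,|\,Z=z,\mathscr{F}_t]=0$ by the martingale property and then applies the tower property. But $\Delta L_t$ is $\mathscr{F}_t$-measurable, so that conditional expectation equals $\Delta L_t$ itself. The martingale property only gives $\mathbb{E}[\Delta L_t|\mathscr{F}_{t-1}]=0$, and $\mathscr{F}_{t-1}$ does not contain $X_t$.

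In fact the statement is false for general square-integrable processes. Take $d=1$, $Z$ independent of everything with no treatment effect, $\Delta X_t$ i.i.d.\ $N(0,1)$ and independent of $\mathscr{F}_{t-1}$, $Y_0=0$ and $Y_t=\sum_{s\le t}(\Delta X_s)^2$.
\begin{itemize}
\item Then $\mathbb{E}[Y_t|Z=z,\mathscr{F}^X_t]=Y_t$, whose increment is $(\Delta X_t)^2$.
\item Requiring $(\Delta X_t)^2=\beta_{0,t}+\beta_t\Delta X_t$ with $\mathscr{F}_{t-1}$-measurable coefficients (even allowing an intercept) makes $\Delta X_t$ a root of a monic quadratic with $\mathscr{F}_{t-1}$-measurable coefficients. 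That event has probability zero.
\item The same example satisfies \eqref{eq:CausalTimeSeries_BrownRep} with $\Sigma\equiv 1$. Yet its F\"{o}llmer--Schweizer (here Kunita--Watanabe) residual $\Delta L_t=(\Delta X_t)^2-1$ is orthogonal to $\Delta X_t$ while $\mathbb{E}[\Delta L_t|\Delta X_t]\neq 0$.
\end{itemize}
So the device you plan to try first is itself false. A conditional second-moment condition cannot deliver mean independence without something like joint Gaussianity.

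Your fallback, declaring $\mathbb{E}[\Delta Y_t|Z=z,\mathscr{F}^X_t]=\boldsymbol{\beta}^{z\top}_t\Delta\boldsymbol{X}_t$, assumes the conclusion. Your uncentered projection also has no intercept, so $\Delta L^z_t$ keeps the predictable drift of $Y$. With $\boldsymbol{X}$ constant and $Y_t=t$, your recursion forces $\mathbb{E}[Y_t|Z=z,\mathscr{F}^X_t]$ to be constant.

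A valid version must add a hypothesis and include a constant regressor playing the role of $\beta_{0,t}$ in \eqref{eq:DLM}. One option is that $(\Delta Y_t,\Delta\boldsymbol{X}_t)$ is conditionally jointly Gaussian given $\mathscr{F}^X_{t-1}\vee\sigma(Z)$. Another is to assume directly that this conditional mean is affine in $\Delta\boldsymbol{X}_t$.

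\paragraph{Step (b) is mis-justified.} Your argument here fails in the same way as the paper's appeals to ``no spillover'' and ``no feedback to confounders.'' Condition \eqref{eq:TSC-nonFeedBack} restricts the effect of $Z$ on post-treatment covariates. Intertemporal ignorability restricts the dependence between $Z$ and the potential outcomes. Neither restricts how $X_t$ depends on past outcomes. The no-feedback condition also says nothing for $t<t_c$, where your $z=0$ recursion must still hold.

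For a concrete failure, let $Z$ be randomized and independent of everything, with $Y_t=Y_{t-1}+\epsilon_t$ and $X_t=X_{t-1}+Y_{t-1}+\eta_t$ driven by independent Gaussian noise. Then $\Delta X_t$ is a noisy reading of $Y_{t-1}$, so $\mathbb{E}[Y_{t-1}|Z,\mathscr{F}^X_t]\neq\mathbb{E}[Y_{t-1}|Z,\mathscr{F}^X_{t-1}]$ (a Kalman-type update).

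Step (b) needs a separate hypothesis ruling out feedback from outcomes to covariates. For example, assume $X_t$ is conditionally independent of $Y_{t-1}$ given $Z$ and $\mathscr{F}^X_{t-1}$. Without it, the recursion must carry the extra filtering term $\mathbb{E}[Y_{t-1}|Z,\mathscr{F}^X_t]-\mathbb{E}[Y_{t-1}|Z,\mathscr{F}^X_{t-1}]$.
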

Proof is the same as Theorem~\ref{thm:ito}.
In discrete time, the Itô integral reduces to a finite difference, producing the recursive
representation in~\ref{smeq:TSC-VCM-1}.  Each step updates the expected outcome by the lagged mean plus a
weighted change in the covariates.  This shows that a time-varying linear model for the
increments of $X_t$ is sufficient to approximate the conditional expectation of $Y_t$,
justifying the use of dynamic regression models for DATE estimation.

The above result shows that it is sufficient to use a time-varying linear model for the increment of $X_{t}$ to obtain $\mathbb{E}\left[\left.Y_{t}\right|Z=1,\mathscr{F}_{t}^{X}\right]$.
To complete the specification, it is also necessary to estimate $\beta$.
From \ref{smeq:TSC-VCM-1}, $\left\{ \beta_{j,t}^{z}\right\} _{1\leqq j\leqq d}^{z=0,1}$ is a $\mathscr{F}_{t-1}$-measurable stochastic process.
This means that $\beta$ is a non-linear measurable function that depends on the variables, $\left(\left\{ Y_{s}(z)\right\} _{0\leqq s\leqq t-1},\left\{ \boldsymbol{X}_{s}\right\} _{0\leqq s\leqq t-1}\right)$.
We will first consider how to estimate this in discrete time, then remark on continuous time.

Since $\boldsymbol{\beta}_{t}^{z}=\left[\beta_{1,t}^{z},\cdots,\beta_{d,t}^{z}\right]^{\top},\left(z=0,1\right)$ is a $\mathscr{F}_{t-1}$-measurable stochastic process, we can Doob-Meyer decompose it to its martingale component and trend component (known at $t-2$):
\begin{alignat*}{1}
\boldsymbol{\beta}_{t}^{z} & =\boldsymbol{\beta}_{0}^{z}+\boldsymbol{M}_{t}^{z}+\boldsymbol{A}_{t}^{z}\\
\boldsymbol{\beta}_{t-1}^{z} & =\boldsymbol{\beta}_{0}^{z}+\boldsymbol{M}_{t-1}^{z}+\boldsymbol{A}_{t-1}^{z}.
 \end{alignat*}
Here, let $\boldsymbol{M}_{t}^{z}$ be the martingale component and $\boldsymbol{A}_{t}^{z}$ be the trend component.
Subtracting from both sides, we have
\[
\boldsymbol{\beta}_{t}^{z}-\boldsymbol{\beta}_{t-1}^{z}=\varDelta\boldsymbol{M}_{t-1}^{z}+\varDelta\boldsymbol{A}_{t-1}^{z},
\]
where the $\mathscr{F}_{t-1}$-conditional expectation of the difference of the martingale component, $\varDelta\boldsymbol{M}_{t-1}^{z}$, is zero: $\mathbb{E}\left[\left.\varDelta\boldsymbol{M}_{t-1}^{z}\right|\mathscr{F}_{t-1}\right]=0$.
Using some time-varying matrix, $G_{t}$, we can write $G_{t}\boldsymbol{\beta}_{t-1}^{z}=\boldsymbol{\beta}_{t-1}^{z}+\varDelta\boldsymbol{A}_{t-1}^{z}$, so the process of $\boldsymbol{\beta}^{z}$ is
\[
\boldsymbol{\beta}_{t}^{z}=G_{t}\boldsymbol{\beta}_{t-1}^{z}+\boldsymbol{\omega}_{t},\ \mathbb{E}\left[\boldsymbol{\omega}_{t}\right]=0.
\]
Note that the difference in the martingale component, $\varDelta\boldsymbol{M}_{t-1}^{z}$, is the transition innovation, $\boldsymbol{\omega}_{t}$.
Further, if we consider that we are discretely observing continuous time, we shown that $\boldsymbol{\omega}_{t}$ follows a Gaussian distribution with time-varying covariance matrix, under some assumptions:
\begin{lem}
\label{lem:TimeSeriesCausal_Normal}
{\it Let $\boldsymbol{\beta}_{t}^{z}=\left[\beta_{1,t}^{z},\cdots,\beta_{d,t}^{z}\right]^{\top},\left(z=0,1\right)$ be a stochastic process defined on continuous time.
When $\left\{ \boldsymbol{\beta}_{t}^{z}\right\} $ is $\mathscr{F}_{t}$-predictable, square-integrable, and continuous path, it is written as $\mathbb{E}\left[M_{i,t}^{z}M_{j,t}^{z}\right]=\int_{0}^{t}\left(\sigma_{s}^{z}\right)^{2}W_{ij,s}^{z}ds$
using $\sigma_{t}$, which is a function of the predictable, time-varying martingale component, $\boldsymbol{M}_{t}^{z}=\left[M_{1,t}^{z},\cdots,M_{d,t}^{z}\right]^{\top}$, of the Doob-Meyer decomposition, $\boldsymbol{\beta}_{t}^{z}=\boldsymbol{\beta}_{0}^{z}+\boldsymbol{M}_{t}^{z}+\boldsymbol{A}_{t}^{z}$,
and the matrix-valued function, $\boldsymbol{W}_{t}=\left(W_{ij,t}\right)_{1\leqq i,j\leqq d}$.
Here, the increment of the martingale component, $d\boldsymbol{M}_{t}^{z}\left(=\boldsymbol{M}_{t+\varDelta t}^{z}-\boldsymbol{M}_{t}^{z}\right)$, follows a Gaussian distribution, $N\left(0,\varDelta t\sigma_{t}^{2}W_{t}\right)$.}
\end{lem}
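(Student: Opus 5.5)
The plan is to work entirely with the continuous martingale part $\boldsymbol{M}^{z}$ of the Doob--Meyer decomposition $\boldsymbol{\beta}_{t}^{z}=\boldsymbol{\beta}_{0}^{z}+\boldsymbol{M}_{t}^{z}+\boldsymbol{A}_{t}^{z}$. Using Lévy's characterization together with a Brownian martingale representation, I will show that its increments over a short step are conditionally Gaussian, with covariance given by the density of its bracket.

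\textbf{Step 1 (the decomposition is continuous).} Since $\{\boldsymbol{\beta}_{t}^{z}\}$ has continuous paths and is square integrable, I argue as in Theorem~\ref{thm:martingale} that it is a special semimartingale. Its decomposition is unique, and because the process is continuous, both $\boldsymbol{M}^{z}$ and $\boldsymbol{A}^{z}$ can be taken continuous. Then $\boldsymbol{M}^{z}\in\mathscr{M}^{2}$ is a continuous square-integrable martingale. Since $M^{z}_{i}M^{z}_{j}-\langle M^{z}_{i},M^{z}_{j}\rangle$ is a martingale (Preliminary subsection) and $M^{z}_{0}=0$, taking expectations gives
\[
\mathbb{E}\left[M_{i,t}^{z}M_{j,t}^{z}\right]=\mathbb{E}\left[\langle M_{i}^{z},M_{j}^{z}\rangle_{t}\right].
\]

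\textbf{Step 2 (the bracket has a density).} I need $d\langle M^{z}_{i},M^{z}_{j}\rangle_{t}=(\sigma^{z}_{t})^{2}W^{z}_{ij,t}\,dt$ with $\sigma^{z}$ and $\boldsymbol{W}^{z}$ predictable. This is the main obstacle. Continuity and square integrability alone do not make the bracket absolutely continuous with respect to Lebesgue measure; singular continuous brackets exist. I will therefore make the implicit assumption explicit: the bracket matrix is absolutely continuous. The Radon--Nikodym density is then a predictable, symmetric, positive semidefinite matrix process. I write it as $(\sigma^{z}_{t})^{2}\boldsymbol{W}^{z}_{t}$, for instance with $\sigma^{2}_{t}$ equal to its trace and $\boldsymbol{W}_{t}$ the normalized matrix. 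Substituting into Step~1 gives the claimed identity $\mathbb{E}[M_{i,t}^{z}M_{j,t}^{z}]=\int_{0}^{t}(\sigma_{s}^{z})^{2}W_{ij,s}^{z}ds$, read as an expectation of the integral when $\sigma,\boldsymbol{W}$ are random.

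\textbf{Step 3 (Gaussian increments).} Under Step~2, the multidimensional martingale representation theorem (Doob's theorem, enlarging the space if $\boldsymbol{W}$ is degenerate) gives
\[
\boldsymbol{M}^{z}_{t}=\int_{0}^{t}\sigma^{z}_{s}(\boldsymbol{W}^{z}_{s})^{1/2}d\boldsymbol{B}_{s}
\]
for a $d$-dimensional Brownian motion $\boldsymbol{B}$. For a step $[t,t+\Delta t]$, predictability and path regularity let me freeze the integrand at its $\mathscr{F}_{t}$-measurable value $\sigma_{t}\boldsymbol{W}_{t}^{1/2}$. Conditionally on $\mathscr{F}_{t}$, the increment $\sigma_{t}\boldsymbol{W}_{t}^{1/2}(\boldsymbol{B}_{t+\Delta t}-\boldsymbol{B}_{t})$ is exactly $N(0,\Delta t\,\sigma_{t}^{2}\boldsymbol{W}_{t})$. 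The remainder has second moment $o(\Delta t)$ by the Itô isometry and the continuity of the density, which I will also assume.

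The result is exact Gaussianity when $\sigma,\boldsymbol{W}$ are deterministic, and conditional Gaussianity to first order in $\Delta t$ otherwise. This is the sense in which the lemma justifies $\boldsymbol{\omega}_{t}\sim N(0,\sigma_{t}^{2}\boldsymbol{W}_{t})$ in \eqref{eq:TimeSeriesCausal_DLMtransition}. I expect Step~2, and stating precisely in which sense "follows a Gaussian distribution" holds in Step~3, to need the most care.
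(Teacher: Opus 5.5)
Your proposal follows essentially the same route as the paper, whose proof assumes $\langle M_i,M_j\rangle_t=\int_0^t\sigma_s^2W_{ij,s}\,ds$ with $\det(\sigma_s\boldsymbol{W}_s^{1/2})\neq 0$, invokes the Brownian representation theorem (Ikeda--Watanabe, Ch.~2, Thm.~7.1), and reads the Gaussian law off $d\boldsymbol{B}_t$. Yours is the more careful version: you make the absolute-continuity hypothesis explicit, use the matrix square root $\sigma_s\boldsymbol{W}_s^{1/2}$ (the paper writes $\sqrt{W_{ij,s}}$ and $d\boldsymbol{M}_t=\sigma_t^2\boldsymbol{W}_t\,d\boldsymbol{B}_t$), and correctly qualify the conclusion as conditional, first-order Gaussianity unless $\sigma,\boldsymbol{W}$ are deterministic; the one slip is in Step~1, since continuity and square integrability alone do not make $\boldsymbol{\beta}^z$ a semimartingale (fractional Brownian motion is a counterexample, and Theorem~\ref{thm:martingale} needs its mean-variation condition), but the lemma already presupposes the Doob--Meyer decomposition, so you can simply take it as given.
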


\begin{proof}
We use the following theorem that expresses continuous martingales with Brownian motion:
\begin{thm}
\cite{Ikeda-Watanabe_89} Chapter2. Theorem 7.1. 
Let $M_{i}\in\mathscr{M}_{2},i=1,2,\cdots,d$.
Further, assume
\begin{alignat*}{1}
\left\langle M_{i},M_{j}\right\rangle \left(t\right) & =\int_{0}^{t}\left(\sigma_{s}^{z}\right)^{2}W_{ij,s}^{z}ds\ i,j=1,2,\cdots,d\\
\det\left\{ \sigma_{s}\sqrt{W_{s}}\right\}  & \neq0\ ^{\forall}s,
 \end{alignat*}
holds.
Then, there exists a $d$-dimensional $\mathscr{F}_{t}$-Brownian motion, $\boldsymbol{B}_{t}=\left(B_{1,t},\cdots,B_{d,t}\right)$
$\boldsymbol{B}_{0}=0$, and we have
\[
M_{i,t}=\sum_{j=1}^{d}\int_{0}^{t}\sigma_{s}\sqrt{W_{ij,s}}dB_{j,s}.
\]
\end{thm}

Since the increment of the Brownian motion, $d\boldsymbol{B}_{s}$, follows a Gaussian distribution, $N\left(\boldsymbol{0},\varDelta s\boldsymbol{I}\right)$, we have $d\boldsymbol{M}_{t}^{z}=\sigma_{t}^{z2}W_{t}^{z}d\boldsymbol{B}_{t}$ and
\[
d\boldsymbol{M}_{t}^{z}\sim N\left(\boldsymbol{0},\varDelta t\sigma_{t}^{2}W_{t}\right).
\]
\end{proof}

Decomposing the coefficient process $\beta_t^{z}$ into its martingale and bounded-variation
components implies that the martingale increment $dM_t^{z}$ has zero mean and variance
proportional to $dt$.  Under continuous-path and square-integrability assumptions, this
increment is normally distributed with covariance $\sigma_t^2 W_t\,dt$, producing the
Gaussian transition law stated in the lemma.

\begin{thm}\label{thm:dlm}
{\it Assume all the assumptions in Lemma~\ref{lem:TimeSeriesCausal_Normal} hold.
The time series model for discrete observations of the continuous time stochastic process, $\left\{ \boldsymbol{\beta}_{t}^{z}\right\} $, is given as
\begin{equation}
\boldsymbol{\beta}_{t}^{z}=G_{t}\boldsymbol{\beta}_{t-1}^{z}+\boldsymbol{\omega}_{t},\ \boldsymbol{\omega}_{t}\sim N\left(0,\sigma_{t}^{2}\boldsymbol{W}_{t}\right),\ \ \ \left(t=1,2,\cdots,T\right),\left(z=0,1\right).\label{smeq:TimeSeriesCausal_DLMtransition}
 \end{equation}}
\end{thm}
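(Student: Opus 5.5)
The plan is to obtain the transition equation \eqref{smeq:TimeSeriesCausal_DLMtransition} by combining two earlier ingredients: the Doob--Meyer splitting of $\boldsymbol{\beta}_{t}^{z}$ used just before Lemma~\ref{lem:TimeSeriesCausal_Normal}, and the Gaussian characterization of the martingale increments given by that lemma. I would work with continuous time first and then sample the process on the grid $t=1,2,\dots,T$. Since $\{\boldsymbol{\beta}_{t}^{z}\}$ is predictable, square integrable and has continuous paths, it is a special semimartingale by the argument of Theorem~\ref{thm:martingale}. Hence it has the unique decomposition $\boldsymbol{\beta}_{t}^{z}=\boldsymbol{\beta}_{0}^{z}+\boldsymbol{M}_{t}^{z}+\boldsymbol{A}_{t}^{z}$, where $\boldsymbol{A}^{z}$ is predictable with bounded variation.

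Next I would difference this decomposition over a unit interval. This gives $\boldsymbol{\beta}_{t}^{z}-\boldsymbol{\beta}_{t-1}^{z}=(\boldsymbol{M}_{t}^{z}-\boldsymbol{M}_{t-1}^{z})+(\boldsymbol{A}_{t}^{z}-\boldsymbol{A}_{t-1}^{z})$. The drift increment is predictable, so it is known at time $t-1$. I would absorb it into a (possibly random, $\mathscr{F}_{t-1}$-measurable) matrix $G_{t}$ by setting $G_{t}\boldsymbol{\beta}_{t-1}^{z}=\boldsymbol{\beta}_{t-1}^{z}+\Delta\boldsymbol{A}_{t}^{z}$.

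\emph{Choice of $G_t$.} When $\boldsymbol{\beta}_{t-1}^{z}\neq 0$, a valid choice is
\[
G_{t}=I+\Delta\boldsymbol{A}_{t}^{z}\,(\boldsymbol{\beta}_{t-1}^{z})^{\top}/\|\boldsymbol{\beta}_{t-1}^{z}\|^{2}.
\]
When $\boldsymbol{\beta}_{t-1}^{z}=0$, I would instead augment the state with a constant coordinate. In either case this step is purely algebraic.

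The innovation is then $\boldsymbol{\omega}_{t}=\boldsymbol{M}_{t}^{z}-\boldsymbol{M}_{t-1}^{z}$, and its conditional mean given $\mathscr{F}_{t-1}$ is zero by the martingale property.

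\emph{Distribution of $\boldsymbol{\omega}_t$.} Lemma~\ref{lem:TimeSeriesCausal_Normal} gives $\langle M_{i},M_{j}\rangle_{t}=\int_{0}^{t}\sigma_{s}^{2}W_{ij,s}\,ds$. Applying the Brownian representation theorem quoted in its proof, I would write
\[
\boldsymbol{\omega}_{t}=\int_{t-1}^{t}\sigma_{s}\sqrt{W_{s}}\,d\boldsymbol{B}_{s}.
\]
If $\sigma_{s}$ and $W_{s}$ are deterministic, or at least frozen at their $\mathscr{F}_{t-1}$-values across $(t-1,t]$, this Wiener integral is exactly Gaussian. Its covariance is then $\int_{t-1}^{t}\sigma_{s}^{2}W_{s}\,ds$, which I would denote $\sigma_{t}^{2}\boldsymbol{W}_{t}$ by definition. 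That yields \eqref{smeq:TimeSeriesCausal_DLMtransition}, and the argument is the same for $z=0,1$.

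\emph{Main obstacle.} The hard step is exactly this Gaussianity. When the integrand $\sigma_{s}\sqrt{W_{s}}$ is genuinely stochastic within the interval, the increment is only conditionally a mixture of normals, not normal. I would first try to handle this by conditioning on $\mathscr{F}_{t-1}$ and assuming the volatility is piecewise constant between observation times, as is implicit in discrete sampling. The resulting statement is then that $\boldsymbol{\omega}_{t}\mid\mathscr{F}_{t-1}\sim N(0,\sigma_{t}^{2}\boldsymbol{W}_{t})$. Failing that, I would state the result as a local approximation with error $o(\Delta t)$, using continuity of the paths together with dominated convergence to control the difference between the true and frozen integrands. The nondegeneracy condition $\det\{\sigma_{s}\sqrt{W_{s}}\}\neq0$ needed for the representation theorem also has to be assumed explicitly.
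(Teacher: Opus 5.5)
Your proposal follows the paper's own argument: a Doob--Meyer splitting of $\boldsymbol{\beta}_t^z$, absorbing the drift increment into $G_t$, taking the martingale increment as $\boldsymbol{\omega}_t$, and getting Gaussianity from the Ikeda--Watanabe Brownian representation in Lemma~\ref{lem:TimeSeriesCausal_Normal}. Your frozen-volatility caveat correctly names what the paper glosses over, since the paper only asserts $d\boldsymbol{M}_t^z\sim N(0,\Delta t\,\sigma_t^2\boldsymbol{W}_t)$ infinitesimally and then applies it to unit-spaced observations.

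Two slips to fix.

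\emph{Measurability of $G_t$.} Continuous-time predictability of $\boldsymbol{A}^z$ makes $\boldsymbol{A}_t^z$ known just before $t$. It does not make the increment $\boldsymbol{A}_t^z-\boldsymbol{A}_{t-1}^z$ known at $t-1$, so your $G_t$ is not $\mathscr{F}_{t-1}$-measurable unless you also freeze the drift on $(t-1,t]$. The alternative is the discrete Doob decomposition of the sampled sequence, which is closer to what the paper writes. It does give an $\mathscr{F}_{t-1}$-measurable $G_t$, but it adds $\Delta\boldsymbol{A}_t^z-\mathbb{E}[\Delta\boldsymbol{A}_t^z\mid\mathscr{F}_{t-1}]$ to $\boldsymbol{\omega}_t$.

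\emph{Semimartingale property.} Take it from the decomposition that the lemma already presupposes. It does not follow from Theorem~\ref{thm:martingale} under your stated hypotheses, because that theorem also needs bounded mean variation.
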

Proof is shown above.
Equation~\ref{smeq:TimeSeriesCausal_DLMtransition} is the discrete-time representation of the continuous-time stochastic dynamics
for $\beta_t^{z}$.  The state evolution $\beta_t^{z} = G_t\beta_{t-1}^{z}+\omega_t$ captures slow
parameter drift through $G_t$ and innovation noise $\omega_t$.  Substituting this into the
observation equation for $Y_t$ yields the DLM form. 

\section{Estimation of DATE via DLM}\label{sm:estimation}

This appendix describes the estimation procedure used in the One-None case (or more broadly when the number of units are small).
We fit a single Dynamic Linear Model (DLM) with discount stochastic volatility
to the full observed series $\{Y_1,\dots,Y_T\}$, where the intervention at $t_c$
is encoded in the design matrices for $t \ge t_c$. Posterior inference on the
latent states and evolution variances is obtained via forward filtering backward
sampling (FFBS). At the intervention time $t_c$, we then branch the model
evolution into treated and control regimes and project both forward; their
difference defines the DATE trajectory.

\paragraph{Model specification.}
We assume the outcome admits the DLM representation
\begin{subequations}
 
\begin{align}
y_{t}^{z} & =y_{t-1}^{z}+\boldsymbol{F}_{t}^{\top}\boldsymbol{\theta}_{t}+\varepsilon_{t},\quad\varepsilon_{t}\sim N(0,\sigma_{t}^{2}),\\
\boldsymbol{\theta}_{t} & =\boldsymbol{G}_{t}\boldsymbol{\theta}_{t-1}+\boldsymbol{\omega}_{t},\quad\boldsymbol{\omega}_{t}\sim N\left(\left[\begin{array}{c}
0\\
\boldsymbol{0}
\end{array}\right],\left[\begin{array}{cc}
\sigma_{t}^{2} & \boldsymbol{0}^{\top}\\
\boldsymbol{0} & \sigma_{t}^{2}\W_{t}
\end{array}\right]\right),\\
\boldsymbol{F}_{t} & =\left[\begin{array}{c}
1\\
\boldsymbol{x}_{t}
\end{array}\right],\ \boldsymbol{\theta}_{t}=\left[\begin{array}{c}
\beta_{0,t}^{z}\\
\boldsymbol{\beta}_{t}^{z}
\end{array}\right]\\
\X & =[\x_{1},...,\x_{\treat},...,\x_{T}]\\
 & =\begin{bmatrix}y_{0} & ... & y_{\treat-1} & y_{\treat} & y_{\treat+1} & ... & y_{T-1}\\
0 & ... & 0 & \mathbbm{1}_{Z_{i}=1} & 0 & ... & 0\\
0 & ... & 0 & \mathbbm{1}_{Z_{i}=1} & \mathbbm{1}_{Z_{i}=1} & ... & \mathbbm{1}_{Z_{i}=1}\\
0 & ... & 0 & \mathbbm{1}_{Z_{i}=1} & 2\mathbbm{1}_{Z_{i}=1} & ... & (T-{\treat}+1)\mathbbm{1}_{Z_{i}=1}
\end{bmatrix},
 \end{align}
\end{subequations}
with $\sigma^2_t$ and $\W_t$ governed by discount stochastic volatility. The
intervention at $t_c$ is encoded through an intervention component in $F_t$ and/or
$G_t$ that is active for $t \ge t_c$ (e.g., a spot effect, level effect, or slope  effect). This single model is fitted to the entire series
$1{:}T$.

\paragraph{FFBS on the full series.}
Using the full data $\{Y_1,\dots,Y_T\}$ and the discount specification, we perform forward filtering followed by backward sampling to
obtain posterior draws
\[
    \{\boldsymbol{\theta}_t^{(s)}\}_{t=1}^T,\quad
    \{\W_t^{(s)},\sigma_t^{2(s)}\}_{t=1}^T,\qquad s=1,\dots,S.
\]
In particular, this yields a posterior sample of the state at the intervention:
\[
    \boldsymbol{\theta}_{t_c}^{(s)} \sim p(\boldsymbol{\theta}_{t_c} | Y_{1:T}), \qquad s=1,\dots,S,
\]
together with the corresponding evolution variance sequences implied by the
discount specification.

\paragraph{Branching at the intervention time.}
For each posterior draw $s$, we then construct two post-intervention trajectories
by forward simulation from $\boldsymbol{\theta}_{t_c}^{(s)}$:

\begin{itemize}
\item \textbf{Treated path.}  
  Using the fitted intervention-augmented system matrices
  $\{G_t^{(s)},F_t^{(s)},W_t^{(s)},\sigma_t^{2(s)}\}_{t>t_c}$, we propagate
  $\boldsymbol{\theta}_{t_c}^{(s)}$ forward for $h$ steps to obtain
  \[
      \boldsymbol{\theta}_{t_c+1:t_c+h}^{(s)}, \quad
      Y_{t_c+1:t_c+h}^{(s)}(1).
  \]

\item \textbf{Control path.}  
  Starting from the same $\boldsymbol{\theta}_{t_c}^{(s)}$, we propagate under a
  counterfactual system in which the intervention component is turned off for
  $t>t_c$ (i.e., by removing the intervention covariates or setting its effect to
  zero), using the same posterior draw of the remaining parameters and
  variances. This counterfactual $\boldsymbol{\theta}_{t_c}^{(s)}$, where the design matrix is turned off, is denoted as $\boldsymbol{\theta}_{t_c+1:t_c+h}^{*(s)}$. This yields
  \[
      \boldsymbol{\theta}_{t_c+1:t_c+h}^{*(s)}, \quad
      Y_{t_c+1:t_c+h}^{(s)}(0).
  \]
\end{itemize}

Crucially, both forward projections condition on the same posterior information
up to $t_c$ (which has been estimated using the full series).

\paragraph{Estimated DATE.}
For each posterior draw $s$ and horizon $h \ge 0$, we define
\[
    \mathrm{DATE}^{(s)}(h)
        =
        Y_{t_c+h}^{(s)}(1) - Y_{t_c+h}^{(s)}(0).
\]
Posterior means and credible intervals for the DATE trajectory are obtained from
the empirical distribution of $\{\mathrm{DATE}^{(s)}(h)\}_{s=1}^S$.

\section{Additional discussions on estimating the DATE}\label{sm: date}
\subsection{Discussion: Retrospective and forward-looking causality using DATE}
In this paper, we only consider the causal effect in terms of retrospective analysis; defining and estimating the effect of the treatment after the fact.
This perspective is relevant to the contexts that motivate our development.
For example, most analyses of policy interventions (e.g., COVID lockdowns) are done after some time has passed from the intervention, and the interest is with regard to what the effect {\it was}.
This is useful in many contexts because 1) it evaluates the intervention in terms of whether it was effective or not, and 2) it is informative regarding what the effect might be if it were implemented in the future or in different units.
Returning to the COVID example, a retrospective causal analysis of the effect of lockdowns can be used to evaluate the efficacy of the lockdown (whether it was effective) and inform the relevant decision makers on whether to implement them in other locations.

Another perspective, though not explicitly considered in this paper, is forward-looking, predictive causality.
In this context, what is relevant is how the estimated treatment effect affects the decision making moving forward, post-intervention.
For example, consider a marketer interested in the effect of advertisements on search queries.
While the retrospective perspective is important, what is more relevant to the marketer is how the intervention affects the outcomes in the future, in order to inform whether they should continue/discontinue the advertisement.
This is the context of \cite{KevinLiEtAlCausalMVTS2024}, where they develop what they call outcome adaptive modeling (OAM: Section 3.4) for causal predictions in the setting where there is only one treated unit.
Consistent with our theoretical results, they apply a (multivariate) dynamic linear model to generate forecasts of the treated and counterfactual control series, the difference in the mean being the forward-looking DATE.

The retrospective and forward-looking perspectives complement each other in a holistic view of causal inference and its interface with decision making.
Our proposed foundations, and theoretical results provide justification for both perspectives.
While we only consider retrospective analysis, the forward-looking analysis is trivially done by extending the estimation strategies proposed in this paper \citep[which is consistent with][albeit with more elaborate models]{KevinLiEtAlCausalMVTS2024}.
In practice, both should be done to evaluate, monitor, and intervene in real-time for improved sequential decision making.

\subsection{Discussion: Dynamic outcome treatment effect}
The goal of DATE is to estimate the underlying average effect of a treatment, which is why, even in the one treated, one control case, we estimate the mean path of the treated and control.
In some contexts, the estimand is the effect of the observed outcomes, rather than the underlying mean processes.
For example, when there is only one series that gets treated (i.e., interrupted time series), some contexts warrant the estimand to be the difference between the observed treated and the estimated mean of the control.
To delineate this from DATE, we call this the dynamic outcome treatment effect (DOTE), where the average is replaced by the observed, path-wise outcome.
Although the focus of this paper is on DATE, the theoretical results above and the simulation results below are consistent when the estimand is DOTE.
Specific characteristics of DOTE and its estimation strategies will be left for another paper.

\section{Additional discussion on assumptions}\label{sm:assumption}

\subsection{Dynamic ignorability and positivity}
Dynamic ignorability and positivity are conceptually simple but substantively
strong assumptions. They require that, conditional on the pre-treatment
confounder history $\mathscr{F}^X_{t_c}$, the assignment of the intervention at
$t_c$ is as good as random with respect to the post-treatment potential outcome
processes, $\{Y_t(1), Y_t(0)\}_{t > t_c}$. Here, we briefly discuss settings in
which these assumptions are reasonable, situations in which they are likely to
fail, and practical diagnostics for assessing plausibility.

\paragraph{Plausible settings.}
Dynamic ignorability is most credible when confounders are genuinely
exogenous and not affected by latent behavioral or economic responses.
Examples include interventions triggered by external shocks, such as extreme
weather, natural disasters, supply-chain disruptions, commodity price spikes, or
regulatory changes determined by exogenous thresholds. In these settings,
$\mathscr{F}^X_{t_c}$ captures the information available to the decision maker,
and the treatment decision is not based on unobserved forecasts of the outcome
process itself.

\paragraph{Common violations.}
Many economic, epidemiological, and policy interventions adjust dynamically
to the evolving state of the system. Mobility restrictions respond to infection
rates; fiscal policy responds to expected unemployment; individuals change their
behavior in anticipation of policy changes. In such settings, variables in
$\mathscr{F}^X_{t_c}$ may themselves be influenced by latent features of the
outcome process, and ignorability may be violated. Positivity can also fail when
certain histories deterministically lead to treatment or non-treatment, making
counterfactual contrasts unidentified.

\paragraph{Consequences and alternatives.}
When ignorability is questionable, identifying the DATE requires either
design-based assumptions (e.g., discontinuities, staggered adoption with
plausibly exogenous timing) or explicit structural modeling of the decision
process. Alternative causal frameworks include g-computation, marginal
structural models, structural time series models, and fully specified
state-space representations, in which the treatment assignment mechanism is
modeled jointly with the outcome process. These approaches impose stronger
modeling assumptions, but allow identification under weaker ignorability
conditions.
This is consistent with existing, non-time series causal inference approaches.

\paragraph{Practical diagnostics.}
Although ignorability is inherently untestable, researchers can evaluate its
plausibility through (i) domain knowledge regarding the decision process,
(ii) balance checks on pre-treatment trends and covariates, (iii) placebo tests
using pre-intervention pseudo-treatments, and (iv) sensitivity analyses that
quantify how large a violation of ignorability would be required to overturn the
estimated DATE. These diagnostics do not guarantee validity but help assess how
strongly conclusions depend on the identifying assumptions.

\subsection{Feedback Between Outcomes and Confounders}

Our identification strategy conditions on the pre-treatment confounder history
$\mathcal F^X_{t_c}$, which is assumed to include all covariates that jointly
influence treatment assignment at $t_c$ and the post-treatment potential outcome
processes, $\{Y_t(1), Y_t(0)\}_{t>t_c}$. This formulation does not require
structural restrictions on the evolution of the outcome process prior to $t_c$,
but it implicitly assumes the absence of feedback from latent components of
$Y_t$ to the confounder process $X_t$ in the periods immediately preceding
treatment.

Such feedback is common in economic and epidemiological systems: expectations
about future outcomes can influence covariates, such as mobility, investment,
or precautionary behavior. When this occurs, the observed confounder history
$\mathcal F^X_{t_c}$ may itself be a function of unobserved features of the
potential outcomes, in which case dynamic ignorability may fail.

When feedback is a concern, alternative strategies are available. These include
explicit structural modeling of the joint $(X_t, Y_t)$ system, sequential g-computation,
marginal structural models with treatment-dependent covariates, or state-space
representations, in which both the confounder and outcome processes share latent
dynamic components.

In practice, the plausibility of the no-feedback condition is best assessed using
domain knowledge, examination of the temporal ordering of confounders, and
sensitivity analyses that quantify the robustness of the DATE to violations of
conditional independence.

\subsection*{Stationarity and stability considerations}

Our theoretical development does not require the outcome process to be
stationary, and the DATE remains well-defined for any square-integrable
stochastic process with a transition law before and after the intervention.
Nonstationarity (e.g., trends or seasonality) is accommodated because the DATE
contrasts the laws of the treated and control processes starting from a common
initial condition $Y_{t_c}$. The intervention may itself induce changes in
stability or persistence; such changes are part of the causal effect rather than
a violation of assumptions. When researchers expect strong nonstationary
components, these may be included in $\mathcal F^X_{t_c}$ or modeled
explicitly, without altering identification.

\subsection*{Extensions to multiple or staggered interventions}

The paper focuses on a single intervention at time $t_c$, but the framework
extends naturally to settings with multiple or staggered interventions. For
multiple interventions, one may index treatment episodes by their initiation
times and define process-level potential outcomes for each regime. For staggered
interventions across units, the DATE can be defined relative to each unit’s
treatment time and aggregated across cohorts. Identification requires the same
dynamic ignorability conditions applied at each intervention point, and the
estimation strategies (DIPW or DLM-based representation) may be adapted without
changing the core theory. We leave detailed development to future work.

\section{Additional material for simulation study}\label{sm: sim}

\subsection{Models used in simulation}\label{sm: model desc}
\begin{itemize}
\itemsep-0.5em
\item[-] 
{\bf Dynamic linear model} \citep[DLM:][]{WestHarrison1997book2,Prado2010}: Utilizing the model specified in \eqno{DLM}, the DATE is estimated with priors $\btheta_{0}|v_{0}\sim N(\m_{0},(v_{0}/s_{0})\C_0)$ with $\m_{0}=(0.95,\zero)'$, $\C_{0}=I$,
and $1/v_{0}\sim G(n_{0}/2,n_{0}s_{0}/2)$ with $n_{0}=20,s_{0}=0.01$. 
The discount factors for stochastic volatility, $(\beta,\delta)$, are estimated using a grid search over $[0.95, 0.99, 0.999]$.
The parameters are estimated using the forward filter backward sampling algorithm \citep[FFBS:][]{Schnatter1994}.
Once the parameters are learnt, the mean processes are generated from the posterior parameters.

\item[-]
{\bf Linear model} (LM): Fitting a simple linear regression model with the indicator functions specified in \eqno{DLMe}.
Once the parameters are estimated, the mean processes are generated from the estimated model.

\item[-]
{\bf Linear model with AR(1) term} (LM-AR(1)): Fitting a simple linear regression model with the indicator functions and AR(1) term specified in \eqno{DLMe}.
Once the parameters are estimated, the mean processes are generated from the estimated model.

\item[-]
{\bf Autoregressive integrated moving average with exogenous variable} (ARIMAX): Fitting an ARIMA(1,1,1) model with indicator functions specified in \eqno{DLMe}.
Once the parameters are estimated, the mean processes are generated from the estimated model.

\item[-]
{\bf Observed treated/control series} (Y): Instead of fitting a model for the mean path, utilize the observed series as the expected mean path.
For One-Many, the treatment effect is calculated as the difference between the observed treated series and the mean of the control series.
In the case of One-One, the treatment effect is the difference between the observed treated and control series.
For One-None, the control path is estimated using LM above, and the treatment effect is calculated as the difference between the observed treated series and the estimated control mean path.
This is equivalent to ITS with linear regression.

\end{itemize}

Apart from the above estimation strategies, we also consider specific estimation strategies for specific scenarios.

\begin{itemize}
\itemsep-0.5em
\item [-] {\bf Synthetic control} (SCM): For the One-Many case, we apply the synthetic control method of \cite{abadie2010synthetic}.
Estimation is standard, fitting a weighted average of many control series to the treated series, pre-treatment, and estimate of the synthetic control. The treatment effect is estimated by taking the difference between the treated series and the synthetic control, post-treatment.

\item [-] {\bf Difference-in-differences} (DiD): For the One-One case, a basic DiD estimator is estimated using the difference in pre-treatment averages and the differences in post-treatment series, per time period.

\item [-]
{\bf Causal-ARIMA} (C-ARIMA): For the One-None case, we fit the C-ARIMA model in \cite{menchetti2021estimating}, using the R code they provide. 
The model generates sample paths of the counterfactual control, from which the difference between the observed treated series is the treatment effect.
\end{itemize}

\

\section{Placebo intervention test for robustness}\label{sm: robust}

We propose a robustness check for dynamic causal analyses to verify that the
estimation procedure does not uncover spurious effects when treatment is assigned at
incorrect times. We implement a placebo intervention test by randomly selecting an
intervention time that occurs strictly before the true intervention date.
All estimation steps-- fitting the dynamic linear model, constructing the treated and
control trajectories, and computing the DATE curve-- are repeated with this placebo
intervention.

Figure~\ref{fig:DATE_DLM_placebo} displays the resulting placebo-estimated treated and control trajectories
(top panel) and the corresponding DATE trajectory (bottom panel). As expected, the
estimated DATE remains centered around zero with uncertainty bands that contain
zero throughout the post-placebo period. No statistically or substantively meaningful
dynamic effect is detected. This confirms that the dynamic linear model and DATE
estimation procedure do not artificially generate causal structure when no
intervention has occurred.

Taken together with the main analysis, this placebo result provides additional
assurance that the observed dynamic treatment effect is attributable to the actual
policy intervention rather than to spurious time series features or model
misspecification.

\begin{figure}[t!]
\centering
\includegraphics[width=1\textwidth]{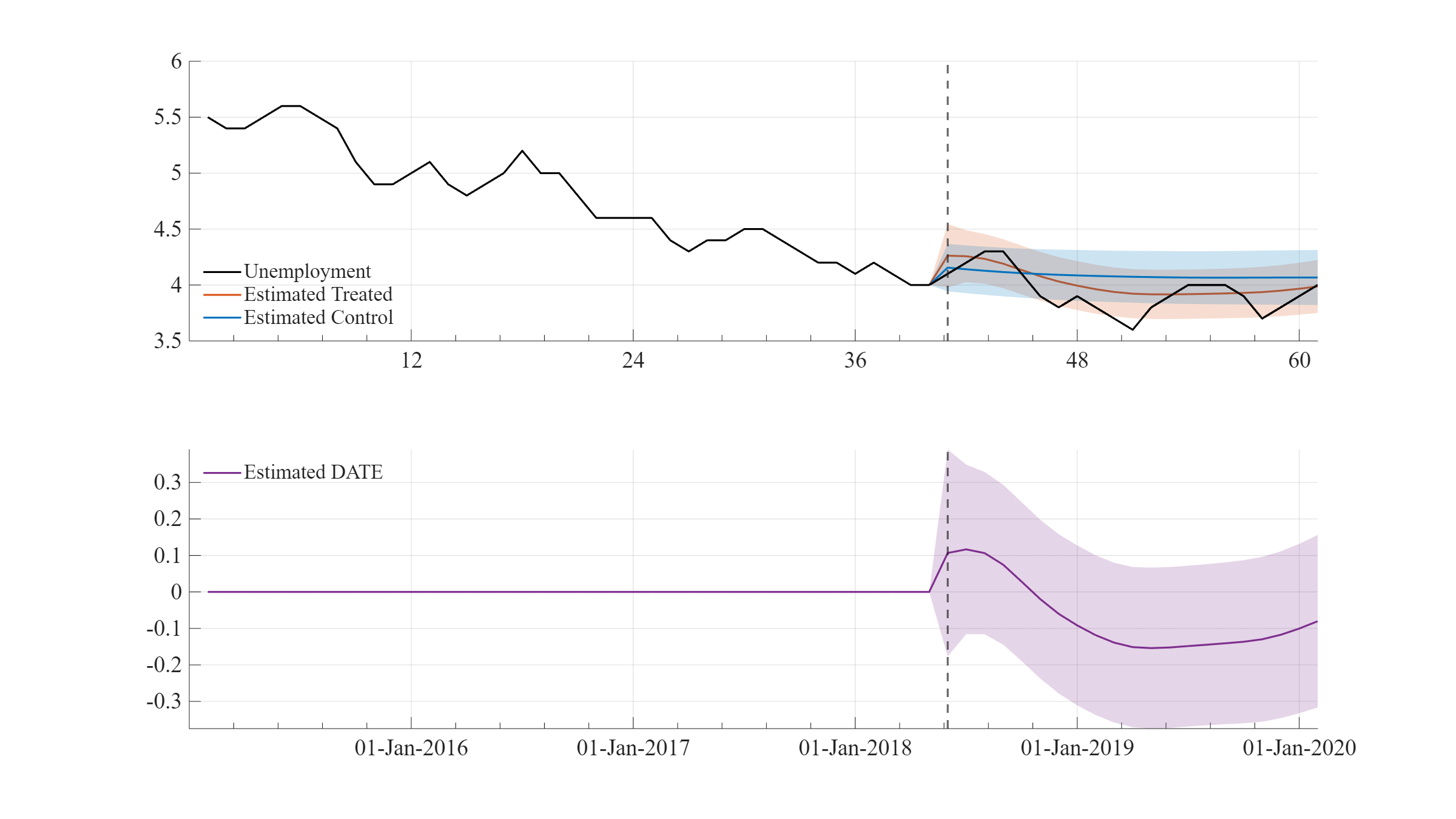}
\caption{Placebo intervention test: Top Figure: The estimated mean path of treated vs estimated mean control path of the One-None case for UK unemployment, where the random quasi-intervention occurs at a time strictly before the actual COVID-19 intervention. Bottom Figure: DATE of the quasi-intervention.}
\label{fig:DATE_DLM_placebo} 
\end{figure}

\end{document}